\newtheorem{theorem}{Theorem}[section]
\newtheorem{corollary}[theorem]{Corollary}
\newtheorem{lemma}[theorem]{Lemma}
\newtheorem{proposition}[theorem]{Proposition}
\newtheorem{definition}[theorem]{Definition}
\newtheorem{property}[theorem]{Property}
\newcommand{\ray}[1]{\overrightarrow{#1}}
\newcommand{\disk}[1]{\disk_{#1}}
\newtheorem{observation}[theorem]{Observation}
\newcommand{\R}{\mathbb{R}}
\begin{document}
%
\title{ Affine invariant triangulations
\thanks{P.B. and P.C. were partially supported by NSERC.  \ 
P.C. was supported by CONACyT and by the Fonds de la Recherche Scientifique-FNRS under Grant n° MISU F 6001 1.\  
 R.S.\ was supported by projects PID2019-104129GB-I00/ AEI/ 10.13039/ 501100011033 and Gen. Cat. 2017-SGR-1640.
This project has received funding from the European Union's Horizon 2020 research and innovation programme under the Marie Sk\l{}odowska-Curie grant agreement No 734922.}}

%
%
\author[1]{Prosenjit Bose}
\author[2]{Pilar Cano}
\author[3]{Rodrigo I. Silveira}

\affil[1]{\small\it School of Computer Science, Carleton University, Ottawa\\

\small\tt jit@scs.carleton.ca} 
\affil[2]{\small\it D\'epartament d'Informatique, Universit\'e Libre de Bruxelles, Brussels\\ 

\small\tt pilar.cano@ulb.ac.be}
\affil[3]{\small\it Department de Matem\`atiques, Universitat Polit\`ecnica de Catalunya, Barcelona\\
 
\small\tt rodrigo.silveira@upc.edu}
%
\maketitle              
\begin{abstract}
We study affine invariant 2D triangulation methods.
That is, methods that produce the same triangulation for a point set $S$ for any (unknown) affine transformation of $S$. 
Our work is based on a method by Nielson [A characterization of an affine invariant triangulation. \emph{Geom. Mod}, 191-210. Springer, 1993] 
that uses the inverse of the covariance matrix of $S$ to define an affine invariant norm, denoted $A_{S}$, and an affine invariant triangulation, denoted ${DT}_{A_{S}}[S]$. 
We revisit the $A_{S}$-norm from a geometric perspective, and show that ${DT}_{A_{S}}[S]$ can be seen as a standard Delaunay triangulation of a transformed point set based on $S$. We prove that it retains all of its well-known properties such as being 1-tough, containing a perfect matching, and being a constant spanner of the complete geometric graph of $S$. 
We show that the $A_{S}$-norm extends to a hierarchy of related geometric structures such as the minimum spanning tree, nearest neighbor graph, Gabriel graph, relative neighborhood graph, and higher order versions of these graphs. 
In addition, we provide different affine invariant sorting methods of a point set $S$ and of the vertices of a polygon $P$  that can be combined with known algorithms to obtain other affine invariant triangulation methods of $S$ and of $P$.
\end{abstract}

\section{Introduction}
A \emph{triangulation} of a point set $S$ in the plane is a geometric graph such that its vertices are the points of $S$, its edges are line segments joining vertices and all of its faces (except possibly the exterior face) are triangles. Triangulations of point sets are of great interest in different areas such as approximation theory, computational geometry, computer aided geometric design, among others~\cite{aurenhammer2009optimal, bern1995mesh, de1987surface}. In particular, the computation of triangulations that are optimal with respect to certain criteria has been widely studied.
One of the most popular triangulations is the \emph{Delaunay triangulation} of a point set $S$, denoted  ${DT}[S]$, defined by having a triangle between any three points in $S$ if their circumcircle encloses no other point of $S$. This triangulation has the property that it maximizes the minimum of all the angles of the triangles in the triangulation. Other properties of the Delaunay triangulation include having the Euclidean minimum spanning tree as a subgraph, having an edge joining the closest pair of points in $S$, and being a constant spanner of the complete geometric graph on $S$. For a comprehensive survey of the Delaunay triangulation  see~\cite{aurenhammer2013voronoi, okabe2009spatial, shewchuk2016delaunay}. Another famous triangulation is the \emph{minimum weight triangulation}, denoted ${MWT}$, which minimizes the sum of the length of its edges. The Delaunay triangulation may fail to be a minimum weight triangulation by a factor of $\Theta (n)$ where $n$ is the size of $S$~\cite{kirkpatrick1980note}.

A \emph{geometric graph} is a graph $G$ with vertex set $V(G)$ a point set in $\R^2$ and its edges $E(G)$ are straight line segments that join two elements of $V(G)$. As an example, a Delaunay triangulation of a point set $S$ is a geometric graph with vertex set $S$. Note that a point set $S$ is also a geometric graph with vertex set $S$ and its set of edges as the empty set. Graphs $G$ and $H$ are \emph{isomorphic}, denoted $G\simeq H$, if and only if there exists a bijective function $f: V(G) \to V(H)$ such that edge $uv \in E(G)$ if and only if edge $f(u)f(v) \in E(H)$.

\begin{definition}
Let $A$ be an algorithm that takes a geometric graph as input and its output is a geometric graph. We say that $A$ is \emph{affine invariant} if and only if for any invertible affine transformation $\alpha$ and any geometric graph $G$\footnote{We assume that $G$ is a valid input for $A$.}, $A(G)$ and $A(\alpha(G))$ are isomorphic with respect to $\alpha$, i.e., $A(G)\simeq A(\alpha(G))$.
\end{definition}

In the context of triangulations, consider a triangulation algorithm $T$, which given a point set $S$ computes a triangulation  $T(S)$. We say that $T$ is affine invariant if and only if for any invertible affine transformation $\alpha$ (see Section~\ref{sec:defs} for a formal definition),  triangle $\triangle(pqr)$ is in $T(S)$ if and only if triangle $\triangle(\alpha(p)\alpha(q)\alpha(r))$ is in $T(\alpha(S))$.
Note that the transformation $\alpha$ is not known to the triangulation algorithm. Affine invariance of various geometric structures is an important property in areas such as graphics and  computer--aided geometric design.

It is easy to see that neither the Delaunay triangulation nor the minimum weight triangulation is affine invariant in general (see Figure~\ref{fig:non-affine}). 
This is because non-uniform stretching can make a point previously outside of a circumcircle to be inside, or edge lengths can increase non-uniformly.

\begin{figure}
\centering
	\includegraphics[scale=1]{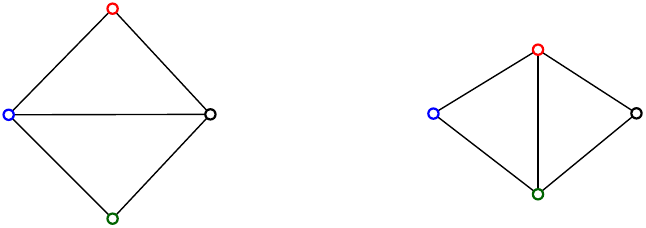}
	\caption{The points on the left correspond to an affine transformation of the points on the right, where each point is mapped to the point with the same color. The ${DT}$ and ${MWT}$ (equal in this case) differ between the left and right points, hence they are not affine invariant.} 
	\label{fig:non-affine}
\end{figure}

 Affine invariance 
is also important in the analysis and visualization of data, to guarantee for instance that different units of measurement do not influence the geometric structure, such as a triangulation that is computed. 
For this reason, Nielson~\cite{nielson1987coordinate}, in his seminal paper, defined an affine invariant normed metric  $A_{S}$ of a point set $S$, denoted $A_{S}$-norm, where for each point $v \in S$ and any affine transformation $\alpha$, $A_{S}(v)=A_{\alpha(S)}(\alpha(v))$. The $A_{S}$-norm produces ellipses (see Figure~\ref{fig:ellipses}) as the boundary of the $A_{S}$-norm disk and using this notion  Nielson~\cite{nielson1993characterization} defined an $A_{S}$-Delaunay triangulation that is affine invariant. Nielson's approach does not distinguish if the point set is rotated or reflected. While this is not an issue to obtain an affine invariant Delaunay triangulation, it makes the method unsuitable to construct other triangulations or geometric objects, like the ones discussed in Section~\ref{sec:othergeom}. 
Note that the title ``Affine invariant triangulations'' has also been used by  Haesevoets et al.~\cite{haesevoets2017affine}, who studied affine invariant methods to triangulate the union of given triangles at different times (two triangles can overlap in different regions). Thus, their objective is very different than  ours. 

\paragraph{Our work.} We revisit the $A_{S}$-norm and explain the geometry behind it in order to understand how the $A_{S}$-Delaunay triangulation behaves. We show that such triangulations have a spanning ratio related to the spanning ratio of the standard Delaunay triangulation, and that the hierarchy of subgraphs of the Delaunay triangulation, such as the minimum spanning tree~\cite{toussaint1980relative} or the relative neighborhood graph~\cite{toussaint1980relative} is also affine invariant. In addition, we describe how to use the $A_S$-norm on greedy algorithms in order to obtain other geometric objects that are affine invariant.  
Finally, we provide different algorithms to compute affine invariant orderings of a point set such as radial order, sweep-line ordering, and a polygon traversal ordering. 
Using these affine invariant orderings as subroutines, we can adapt standard geometric algorithms into affine invariant algorithms for computing a triangulation of a point set, a quadrangulation of a point set or a triangulation of a polygon.

 \begin{figure}
\centering
	\includegraphics{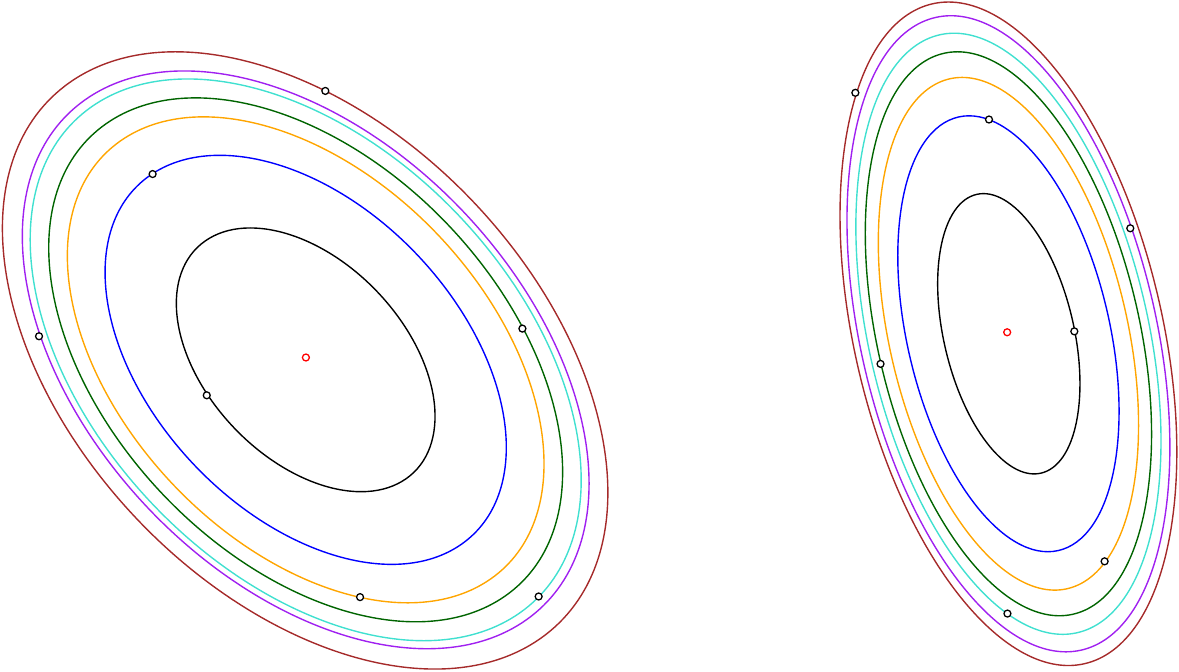}
	\caption{Point set to the right, say $S'$, is an affine transformation of the point set to the left, say $S$. Each color of the ellipses represents the corresponding boundary of the $A_S$-disk and $A_{S'}$-disk centered at the red point (mean) and containing the corresponding point of each transformation.} 
	\label{fig:ellipses}
\end{figure}

\section{Preliminaries }\label{sec:defs}

A \emph{norm} of a vector space $X$ is a nonnegative function $\rho: X \to \R^+$ with the following properties. For all $\lambda \in \R^+$ and $u,v \in X$: (a)$\rho(u+v)\leq \rho(u)+\rho(v)$, (b) $\rho(\lambda v)=\lambda \rho(v)$ and, (c) if $\rho(v)=0$ then $v$ is the zero vector. A \emph{metric} is a nonnegative function $d:X\times X \to \R^+$ such that for all $u,v,w \in X$ the following properties hold: (a) $d(u,v)=0 \iff u=v$, (b) $d(u,v)=d(v,u)$ and, (c) $d(u,w) \leq d(u,v)+d(v,w)$. When the function $d(u,v)=\rho(u-v)$ is a norm, then it is called a normed metric. Let $N$ be a normed metric, then the $N$-\emph{disk} $D_N$ centered at $c \in X$ with radius $r$ is the set of points in $X$ within $N$-distance $r$ from $c$, i.e., $D_N=\{v : v \in X$ and $N(v$$-$$c) \leq r\}$. When the radius is 1 then we call it a \emph{unit $N$-disk}. 
An \emph{affine transformation} $\alpha: X \to Y$ is of the form $\alpha(v)=Mv+b$ where $X$ is an affine space mapped to another affine space $Y$, denoted $Y = \alpha(X)$, $M$ is a linear transformation and $b$ is a vector in $\alpha(X)$. In this paper we will work in $\R^2$, i.e., $X=Y=\R^2$, $M$ is a matrix in $\R^2\times\R^2$ and $b$ is a vector in $\R^2$. For the rest of this paper we will not distinguish a point from a vector unless  notation is confusing and we will assume that $\alpha$ is invertible, i.e., it is a non-degenerate function and ${det}(M) \neq 0$. 
The following proposition states some well-known properties of affine transformations.

\begin{proposition}[\cite{byer2010methods}]
\label{prop:properties}
Let $\alpha(v)=Mv+b$ be an affine invertible transformation on $\R^2$ and let $S$ be a point set in $\R^2$. Then the function $\alpha$
\begin{compactenum}
\item\label{lines} maps lines (resp., line segments) to lines (resp., line segments), 
\item\label{parallel} preserves parallelism between lines and line segments,
\item\label{polygons} maps a simple (resp., convex) $n$-gon to a simple (resp., convex) $n$-gon,
\item\label{ratio} preserves the ratio of lengths of two parallel segments,
\item\label{areas} preserves the ratio of areas of two object, and
\item\label{mean} maps the mean of $S$ to the mean of $\alpha(S)$.
\end{compactenum}
\end{proposition}

Let $S=\{p_1, p_2, \ldots, p_n\}$ be an $n$-point set in the plane where the coordinates of each point $p_i \in S$ are denoted by $(x_i, y_i)$.  
 
 Nielson~\cite{nielson1987coordinate} defines an affine invariant normed metric, that we call $A_{S}$-norm, in the following way. 

Let 
\begin{equation*}
\mu_x= \frac{1}{n}\displaystyle\sum_{i=1}^{n} x_i,  \quad \mu_y= \frac{1}{n}\displaystyle\sum_{i=1}^{n} y_i, \quad \sigma^2_x = \frac{1}{n}\displaystyle\sum_{i=1}^{n}(x_i-\mu_x)^2 \end{equation*}
\begin{equation*} \sigma^2_y = \frac{1}{n}\displaystyle\sum_{i=1}^{n}(y_i-\mu_y)^2, \quad
\sigma_{xy}= \frac{1}{n}\displaystyle\sum_{i=1}^{n}(x_i-\mu_x)(y_i-\mu_y). \end{equation*}
Note that the mean $\mu=(\mu_x,\mu_y)$ is the barycenter of $S$. The \emph{covariance matrix} of a point set $S$ is defined as $\Sigma=\begin{pmatrix}
  \sigma_x^2 & \sigma_{xy} \\
  \sigma_{xy} & \sigma_y^2
 \end{pmatrix}$ and its inverse
$\Sigma^{-1}$$=$$\frac{1}{\sigma^2_x\sigma_y^2-\sigma_{xy}^2}$$\cdot$$\begin{pmatrix}
  \sigma_y^2 & -\sigma_{xy} \\
  -\sigma_{xy} & \sigma_x^2
 \end{pmatrix}$.

Let $p=(x,y) \in S$, then the $A_{S}$-norm metric  is defined as 
\begin{equation*}A_{S}(x,y)=(x \quad y)\Sigma^{-1}\begin{pmatrix}
 x\\
 y
 \end{pmatrix} = \frac{x^2\sigma_y^2-2xy\sigma_{xy}+y^2\sigma_x^2}{\sigma_y^2\sigma_x^2-\sigma^2_{xy}}. \end{equation*}

The matrix $\Sigma^{-1}$ is also known as the \emph{concentration matrix}~\cite{dodge2006oxford}, which defines a norm with respect to the normal (Gaussian) distribution defined by $S$. The eigenvectors of $\Sigma$ and $\Sigma^{-1}$ are the same and these vectors define the principal orthogonal directions of how spread the point set is with respect to its mean (barycenter) $\mu$. In other words, if we compute the Gaussian manifold  defined by the bivariate normal distribution given by the point set $S$  and then cut the Gaussian manifold with a plane parallel to the plane $z=0$, then we obtain an ellipse.  See Figure~\ref{fig:ellipses}. Such an ellipse has principal orthogonal axes defined by the eigenvectors of $\Sigma^{-1}$. Thus, the boundary of an $A_{S}$-disk will be defined by a homothet\footnote{A \emph{homothet} of an ellipse $D$  is obtained by uniformly scaling $D$ with respect to its center, followed by a translation.} of the resulting ellipse where the boundary of the \emph{unit $A_{S}$-disk} will be represented by the ellipse with principal axes being parallel to the eigenvectors of $\Sigma^{-1}$ and the magnitude of each principal axis will be given by the square root of the eigenvalue of the corresponding unit eigenvector. 

 The \emph{$N$-Delaunay triangulation} of $S$, denoted ${DT}_{N}[S]$, is defined in the following way. For every three distinct points $p_i, p_j, p_k$ in $S$, the triangle $\bigtriangleup(p_ip_jp_k)$ is in ${DT}_{N}[S]$  
if and only if there exists an $N$-disk containing the three points on its boundary and no other point of $S$ in its interior. 
The $L_2$-Delaunay triangulation is the standard Delaunay triangulation, simply denoted ${DT}[S]$. 

We say that a point  set $S$ is in \emph{general position} if no three points are collinear and all points in $S$ are at different $A_S$-norm distance from the mean $\mu$. 
Since  
the boundary of the $A_{S}$-disk is an ellipse, Nielson computes the $A_{S}$-Delaunay triangulation using this $A_S$-disk and shows the following.

\begin{theorem}[Nielson~\cite{nielson1993characterization}]~\label{Nielson-triang}
Let $S$ be a point set in general position and $\alpha$ an invertible affine transformation, then $DT_{A_S}[S]\simeq DT_{A_{\alpha(S)}}[\alpha(S)]$ under the affine transformation $\alpha$, i.e., the $A_{S}$-Delaunay triangulation is affine invariant.
\end{theorem}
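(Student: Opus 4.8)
The plan is to reduce the $A_S$-Delaunay triangulation to a standard Delaunay triangulation of a linearly transformed copy of $S$, and then exploit the affine invariance of ``being a standard Delaunay triangulation up to the same linear map.'' First I would observe that $\Sigma^{-1}$ is symmetric positive definite, so it admits a factorization $\Sigma^{-1} = T^\top T$ for some invertible matrix $T$ (e.g. via the spectral decomposition, taking $T$ to be the square root of $\Sigma^{-1}$, or a Cholesky factor). Then for any vector $v$ we have $A_S(v) = v^\top \Sigma^{-1} v = \|Tv\|_2^2$, so the $A_S$-norm is exactly the Euclidean norm pulled back through the linear map $T$. Consequently an $A_S$-disk (an ellipse) is the $T$-preimage of a Euclidean disk, and a set of points lies on the boundary of (resp. inside) an $A_S$-disk if and only if their images under $T$ lie on the boundary of (resp. inside) the corresponding Euclidean disk. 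Hence $\triangle(p_ip_jp_k)\in DT_{A_S}[S]$ if and only if $\triangle(Tp_i\,Tp_j\,Tp_k)\in DT[T(S)]$; in short, $DT_{A_S}[S]$ is combinatorially the standard Delaunay triangulation of $T(S)$, realized on the vertex set $S$.

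The second step is to track how $T$ changes under an affine map $\alpha(v)=Mv+b$. Here the key computation is the behavior of the covariance matrix: using Proposition~\ref{prop:properties}(\ref{mean}), the mean of $\alpha(S)$ is $\alpha(\mu)=M\mu+b$, and a direct expansion gives $\Sigma_{\alpha(S)} = M\,\Sigma_S\,M^\top$ (the translation $b$ cancels because covariance is computed about the mean). Therefore $\Sigma_{\alpha(S)}^{-1} = (M^\top)^{-1}\Sigma_S^{-1}M^{-1}$. If $T_S^\top T_S = \Sigma_S^{-1}$ and $T_{\alpha(S)}^\top T_{\alpha(S)} = \Sigma_{\alpha(S)}^{-1}$, then $(T_{\alpha(S)}M)^\top(T_{\alpha(S)}M) = M^\top \Sigma_{\alpha(S)}^{-1} M = \Sigma_S^{-1} = T_S^\top T_S$, so $T_{\alpha(S)}M = Q\,T_S$ for some orthogonal matrix $Q$. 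Thus the point set that we feed into the standard Delaunay triangulation, namely $T_{\alpha(S)}(\alpha(S))$, equals (up to the translation $b$, which is irrelevant) $T_{\alpha(S)}M\,(S) = Q\,T_S(S)$, i.e. it is a rigid motion (rotation/reflection) of $T_S(S)$.

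The final step is to invoke the rigid-motion invariance of the standard Delaunay triangulation: the empty-circle property is preserved by isometries, so $DT[T_S(S)] \simeq DT[Q\,T_S(S)] = DT[T_{\alpha(S)}(\alpha(S))]$, and the isomorphism is induced by the map $Tp_i \mapsto QT p_i$, which corresponds on the original vertex sets to $p_i \mapsto \alpha(p_i)$. Chaining the three correspondences gives $DT_{A_S}[S]\simeq DT_{A_{\alpha(S)}}[\alpha(S)]$ under $\alpha$, as required. I should also remark that general position of $S$ transfers to $T_S(S)$ (collinearity is preserved by linear maps, and ``distinct $A_S$-distances from the mean'' becomes ``distinct Euclidean distances from the centroid''), and likewise for $T_{\alpha(S)}(\alpha(S))$, so the triangulations are well defined and unique on both sides.

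The main obstacle, and the only place real work is needed, is the covariance transformation identity $\Sigma_{\alpha(S)} = M\Sigma_S M^\top$ together with the bookkeeping that turns the two \emph{a priori} unrelated factorizations $T_S$ and $T_{\alpha(S)}$ into maps differing only by an orthogonal matrix; once that is in place everything else is a reduction to the classical Euclidean case. A secondary subtlety worth flagging is that $T$ is not canonical (any $T$ with $T^\top T=\Sigma^{-1}$ works), but since the argument only ever compares two such choices and absorbs the ambiguity into the orthogonal factor $Q$, the non-uniqueness causes no trouble — indeed it is exactly why Nielson's construction ``does not distinguish rotations and reflections,'' as noted in the introduction.
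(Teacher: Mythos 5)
Your proof is correct, but note that the paper does not actually prove this statement: Theorem~\ref{Nielson-triang} is quoted from Nielson, and Section~\ref{sec:affine-del} runs the logic in the opposite direction, using the quoted theorem to deduce Proposition~\ref{lem:affine-inv}. Your first step is precisely the content of that proposition, with $T=(Q\Lambda^{\nicefrac{1}{2}})^{-1}=\Lambda^{-\nicefrac{1}{2}}Q^{T}$ as one admissible factor of $\Sigma^{-1}=T^{\top}T$, except that you derive it directly from the empty-disk characterization (an $A_S$-disk is the $T$-preimage of a Euclidean disk), so nothing is circular. Your second and third steps --- the identity $\Sigma_{\alpha(S)}=M\Sigma_S M^{\top}$, the consequence $T_{\alpha(S)}M=QT_S$ with $Q$ orthogonal, and the rigid-motion invariance of the Euclidean Delaunay triangulation --- supply exactly the bookkeeping that the paper only states informally when it remarks that the two normalized point sets are ``the same point set up to rotations and reflections.'' So your argument both proves the cited theorem and subsumes the paper's Proposition~\ref{lem:affine-inv}. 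Two minor points of hygiene: as defined, $A_S(v)=\lVert Tv\rVert_2^2$ is a \emph{squared} pulled-back norm, which changes nothing in the disk correspondence beyond a reparametrization of the radius; and uniqueness of the triangulations on both sides really needs ``no four points on an empty $A_S$-disk boundary,'' which the paper's stated general-position assumption does not literally include --- that is a defect of the paper's definition rather than of your proof, but it is worth flagging if you state the result formally.
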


Finally, the \emph{order type} of a point set $S$ is a mapping that assigns to each ordered triple $i,j,k \in \{1, \ldots, n\}$ the orientation of $p_i, p_j$ and $p_k$ (either clockwise or counter-clockwise). It can be shown that order types are preserved, up to a change of sign, by checking for each triple $p_i, p_j, p_k \in S$ the signed area of the triangles $\triangle(p_ip_jp_k)$ and $\triangle(\alpha(p_i)\alpha(p_j)\alpha(p_k))$ given by the following cross product $\alpha(p_i-p_j)\times\alpha(p_k-p_j)= {det}(M)((p_i-p_j)\times(p_k-p_j))$.

\section{The $A_S$-Delaunay triangulation revisited}\label{sec:affine-del}

In this section we discuss the connection between the standard Delaunay triangulation and the $A_{S}$-Delaunay triangulation. 

Let $S$ be a point set in $\R^2$ in general position. Consider the $2\times n$ matrix $N$ such that for each point $p$ in $S$ there is one column in $N$ represented by the vector $p-\mu$. 
Then, we see that $\Sigma=\frac{1}{n}{NN}^T$. If a point set $S'=\alpha(S)$ and $\alpha(p)=Mp+b$, with $p \in S$, is an affine transformation of the point set $S$, then its mean is $\alpha(\mu)$ and the covariance matrix $\Sigma'$ of $S'$ is $\Sigma'=M\Sigma M^T$. Recall that $M$ is a $2\times2$ matrix in $\R^2$ with ${det}(M)\neq 0$.
 
  Since $S$ is in general position, we have that ${det}(\Sigma) \neq 0$. Thus, $\Sigma$ is invertible. Moreover, since $\Sigma$ is a square symmetric matrix, $\Sigma$ can be represented as $Q\Lambda Q^{T}$ where $Q$ is the matrix of orthonormal eigenvectors of $\frac{1}{n}{NN}^T$, $\Lambda$ is the diagonal matrix of eigenvalues and $Q^{-1}=Q^T$. Recall that if a matrix $M$ is diagonal, then $M=M^T$ and that $M^k$ with $k \in \R^+$ has entries $M_{i,j}^k$. Therefore, we can also rewrite the covariance matrix as $(Q\Lambda^{\nicefrac{1}{2}})(Q\Lambda^{\nicefrac{1}{2}})^{T}$. Looking carefully at this representation of $\Sigma$ and $\Sigma'$, we obtain that $(Q \Lambda^{\nicefrac{1}{2}})^{-1}S$ is an affine transformation of $S$ with $\mathbb{I}$ as its covariance matrix. We refer to the point set $(Q \Lambda^{\nicefrac{1}{2}})^{-1}S$ as the point set $S$ \emph{normalized}. 
  Note that the unit $A_{(Q \Lambda^{\nicefrac{1}{2}})^{-1}S}$-disk  
is the Euclidean unit disk. This implies that the $A_{(Q\Lambda^{\nicefrac{1}{2}})^{-1}S}$-Delaunay triangulation of $(Q\Lambda^{\nicefrac{1}{2}})^{-1}S$ is the $L_2$-Delaunay triangulation of $(Q\Lambda^{\nicefrac{1}{2}})^{-1}S$, which together with Theorem~\ref{Nielson-triang} proves the following proposition.

  \begin{proposition}\label{lem:affine-inv}
Let $S$ be a point set in general position in $\R^2$ and let $\Sigma=Q\Lambda Q^T$ be its covariance matrix where $Q$ is the matrix of orthonormal eigenvectors of $\frac{1}{n}{NN}^T$ and $\Lambda$ is the diagonal matrix of eigenvalues. Then, $DT[(Q\Lambda^{\nicefrac{1}{2}})^{-1}S]$ and $DT_{A_S}[S]$ are isomorphic under the linear transformation $(Q\Lambda^{\nicefrac{1}{2}})$.
  \end{proposition}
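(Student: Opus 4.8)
The plan is to take the linear map $\beta := Q\Lambda^{\nicefrac{1}{2}}$ as the explicit witness of the claimed isomorphism and to reduce the statement to Nielson's Theorem~\ref{Nielson-triang} applied to the normalized point set $S' := \beta^{-1}(S) = (Q\Lambda^{\nicefrac{1}{2}})^{-1}S$. First I would record the matrix $M := \beta^{-1} = (Q\Lambda^{\nicefrac{1}{2}})^{-1} = \Lambda^{-\nicefrac{1}{2}}Q^{T}$, using $Q^{-1}=Q^{T}$ and that diagonal matrices are symmetric. Then, using the identity $\Sigma' = M\Sigma M^{T}$ for the covariance matrix of an affine image (stated in the paragraph preceding the proposition) together with $\Sigma = Q\Lambda Q^{T}$, a one-line computation gives
\[
\Sigma' \;=\; \Lambda^{-\nicefrac{1}{2}}Q^{T}\,(Q\Lambda Q^{T})\,Q\Lambda^{-\nicefrac{1}{2}} \;=\; \Lambda^{-\nicefrac{1}{2}}\,\Lambda\,\Lambda^{-\nicefrac{1}{2}} \;=\; \mathbb{I},
\]
since $Q^{T}Q=\mathbb{I}$ and diagonal matrices commute.

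From $\Sigma'=\mathbb{I}$ we get $\Sigma'^{-1}=\mathbb{I}$, so $A_{S'}(x,y)=x^{2}+y^{2}$ is the squared Euclidean norm; hence every $A_{S'}$-disk is a Euclidean disk, and therefore $DT_{A_{S'}}[S']$ is precisely the standard Delaunay triangulation $DT[S']$. I would also note (though it is not strictly needed, since Nielson's theorem below only requires $S$ to be in general position) that $S'$ is itself in general position: $M$ is invertible, so no three points of $S'$ are collinear by Proposition~\ref{prop:properties}(\ref{lines}); and since $M^{T}M = Q\Lambda^{-\nicefrac{1}{2}}\Lambda^{-\nicefrac{1}{2}}Q^{T} = Q\Lambda^{-1}Q^{T} = \Sigma^{-1}$, for every $p\in S$ we have $A_{S'}(M(p-\mu)) = (M(p-\mu))^{T}\,\mathbb{I}\,(M(p-\mu)) = (p-\mu)^{T}\Sigma^{-1}(p-\mu) = A_{S}(p-\mu)$, so, the mean of $S'$ being $M\mu$ (Proposition~\ref{prop:properties}(\ref{mean})), the points of $S'$ lie at pairwise distinct $A_{S'}$-distances from the mean of $S'$.

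Finally I would assemble the pieces: applying Theorem~\ref{Nielson-triang} with the point set $S$ (in general position by hypothesis) and the invertible affine transformation $M$ yields $DT_{A_S}[S]\simeq DT_{A_{S'}}[S']$ under $M$; combining this with $DT_{A_{S'}}[S'] = DT[S']$ from the previous paragraph gives $DT_{A_S}[S]\simeq DT[S']$ under $M$. Equivalently, $DT[(Q\Lambda^{\nicefrac{1}{2}})^{-1}S]$ and $DT_{A_S}[S]$ are isomorphic under $M^{-1}=Q\Lambda^{\nicefrac{1}{2}}$, which is exactly the assertion of the proposition.

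The argument is short, so there is no real obstacle; the only things that require a bit of attention are the bookkeeping in the matrix identity $\Sigma'=\mathbb{I}$ (in particular handling $(Q\Lambda^{\nicefrac{1}{2}})^{-1}$ and its transpose correctly) and keeping track of the direction of the isomorphism --- the transformation realizing $DT[S']\simeq DT_{A_S}[S]$ is $Q\Lambda^{\nicefrac{1}{2}}$, the inverse of the normalizing map $M$.
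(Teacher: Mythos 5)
Your proposal is correct and follows essentially the same route as the paper: normalize $S$ by $(Q\Lambda^{\nicefrac{1}{2}})^{-1}$, observe that the resulting covariance matrix is the identity so that the $A$-norm of the normalized set is Euclidean and its $A$-Delaunay triangulation coincides with the standard one, then invoke Theorem~\ref{Nielson-triang}. The only difference is that you spell out the matrix computation $\Sigma'=\mathbb{I}$ and the preservation of general position explicitly, which the paper leaves implicit.
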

  
   \begin{figure}
\centering
	\includegraphics[scale = 1]{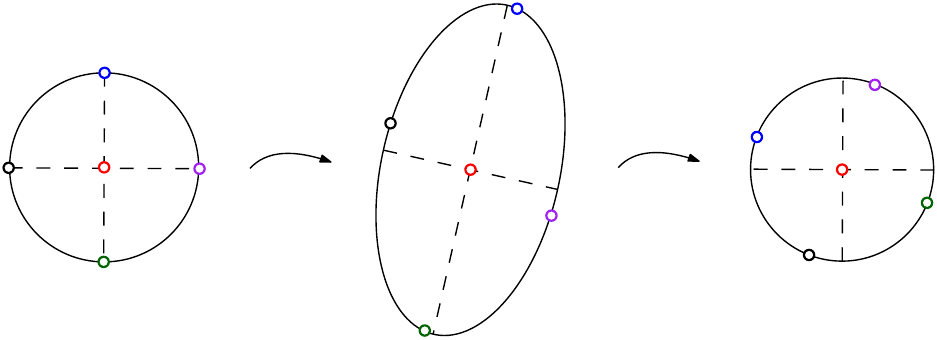}
	\caption{(\emph{left}) Point set with $A_S$-disk as the Euclidean disk, (\emph{middle}) an affine transformation of point set in the left, (\emph{right}) a resulting point set when normalizing the point set in the middle.} 
	\label{fig:trans}
\end{figure}

Let $S'=\alpha(S)$ be an affine transformation of the point set $S$ with covariance matrix $\Sigma'=Q'\Lambda' Q'^{T}$. Notice that even though the point sets $(Q\Lambda^{\nicefrac{1}{2}})^{-1}S$ and $(Q'\Lambda'^{\nicefrac{1}{2}})^{-1}S'$ have the Euclidean metric as their corresponding Nielson's norm, both point sets can be different. For instance, in Figure~\ref{fig:trans} we have on the left a point set that defines the Euclidean metric as its $A_S$-norm, in the middle we have the point set resulted from a rotation and scaling of the point set on the left. Finally, the point set on the right is a normalization of the point set in the middle. Note that the two point sets on the right and left are different by a rotation. In other words, $(Q\Lambda^{\nicefrac{1}{2}})^{-1}S$ and $(Q'\Lambda'^{\nicefrac{1}{2}})^{-1}S'$ are the same point set \emph{up to rotations and reflections}.  

 \begin{figure}
\centering
	\includegraphics[width=\textwidth]{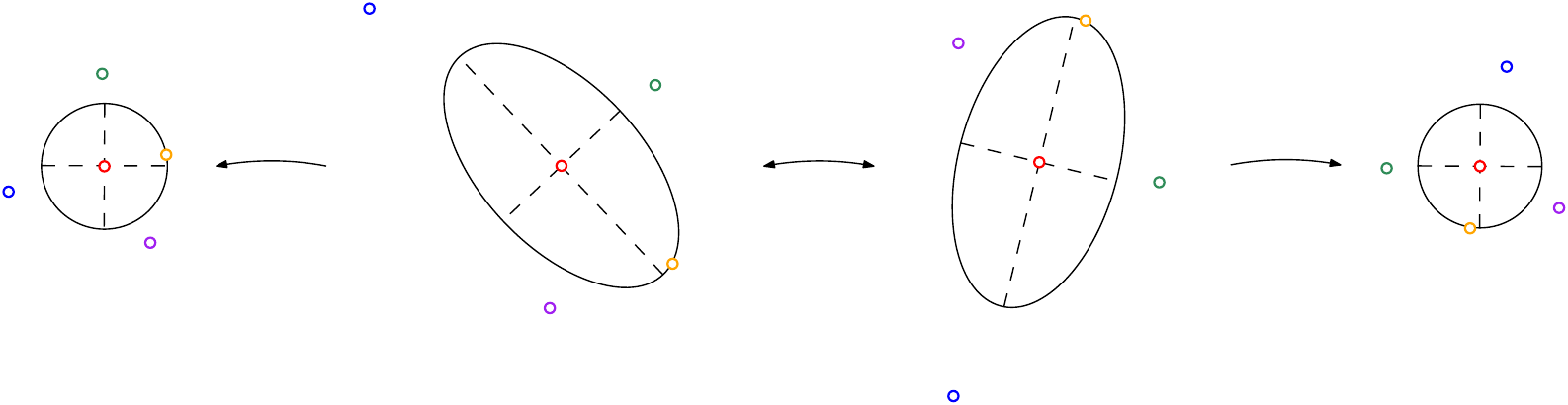}
	\caption{The two point sets in the middle are the initial point sets, one is an affine transformation (a rotation and reflection) of the other one. The point set on the left is the obtained point set when normalizing the left-middle point set and the point set on the right is the obtained point set when normalizing the right-middle point set.} 
	\label{fig:trans2}
\end{figure}

Another example is depicted in Figure~\ref{fig:trans2}, where we have two point sets in the middle that can be transformed into each other by an affine transformation -- the one in the left-middle is a reflection of the one in the right-middle by the $x$-axis and a rotation of $300\degree$. When we normalize each of the two middle point sets, we obtain the point set on the extreme left and right, respectively. Note that these two point sets are different. Moreover, the point sets on the left and right have different sign order type.

On the other hand, a nice implication of Proposition~\ref{lem:affine-inv} is that the $A_{S}$-Delaunay triangulation behaves in many ways like a standard Delaunay triangulation. For instance, the $DT_{A_S}[S]$ is $1$-tough~\cite{dillencourt1990toughness}, which implies that it contains a perfect matching when $|S|$ is even. 
 
 Given a weighted graph $G=(V,E)$ and a real number $t \geq 1$, a $t$-\emph{spanner} of $G$ is a spanning subgraph $G'$ such that for every edge $uv$ in $G$, there exists a path from $u$ to $v$ in $G'$ whose weight is no more than $t$ times the weight of the edge $uv$ in $G$. 
 The smallest value of $t$ for which this property holds is called the \emph{spanning ratio} of $G'$, denoted as $sr(G')$. We simply say that $G'$ is a spanner if $sr(G')$ is a constant.
  It is known that the standard Delaunay triangulation is a spanner, see~\cite{dobkin1990delaunay, keil1989delaunay, xia}. The following theorem shows that $DT_{A_S}[S]$ is also a spanner whose spanning ratio is a constant factor times the spanning ratio of the standard Delaunay triangulation. 
The constant factor depends on the eigenvalues of the co-variance matrix.  
 
 \begin{theorem}
 Let $S$ be a point set in general position and let $\Sigma=Q\Lambda Q^T$ be the covariance matrix of $S$. Let $\lambda_{\max}$ and $\lambda_{\min}$ be the maximum and minimum eigenvalues of $\frac{1}{n}{NN}^T$, respectively. Then, \begin{equation*}sr(DT_{A_S}[S]) \leq\Big( \frac{\lambda_{\max}}{\lambda_{\min}} \Big)^{\nicefrac{1}{2}}\cdot sr({DT}[(Q \Lambda^{\nicefrac{1}{2}})^{-1}S])\leq 1.998\Big(\frac{\lambda_{\max}}{\lambda_{\min}} \Big)^{\nicefrac{1}{2}}.\end{equation*}
 \end{theorem}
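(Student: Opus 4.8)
The plan is to exploit Proposition~\ref{lem:affine-inv}, which tells us that $DT_{A_S}[S]$ is the image under the linear map $L := Q\Lambda^{\nicefrac12}$ of the standard Delaunay triangulation $DT[S^*]$ of the normalized point set $S^* := L^{-1}S$. So the whole question reduces to understanding how the spanning ratio of a geometric graph changes when we apply a linear map, and then bounding the distortion of $L$ (equivalently $L^{-1}$) in terms of the eigenvalues of $\Sigma = \frac1n NN^T$.

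First I would record the key metric fact: for any vector $v$, the singular values of $L = Q\Lambda^{\nicefrac12}$ are $\lambda_{\max}^{\nicefrac12}$ and $\lambda_{\min}^{\nicefrac12}$ (since $Q$ is orthogonal and $\Lambda$ diagonal with entries the eigenvalues), so
\begin{equation*}
\lambda_{\min}^{\nicefrac12}\,\|v\| \;\le\; \|Lv\| \;\le\; \lambda_{\max}^{\nicefrac12}\,\|v\|.
\end{equation*}
Hence for any two points $p,q$ and the corresponding $p^* = L^{-1}p$, $q^* = L^{-1}q$ in $S^*$, we have $\|p-q\| = \|L(p^*-q^*)\| \in [\lambda_{\min}^{\nicefrac12}, \lambda_{\max}^{\nicefrac12}]\cdot\|p^*-q^*\|$.

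Next I would push this through the spanner definition. Fix an edge $pq$ of the complete geometric graph on $S$; its preimage $p^*q^*$ is an edge of the complete geometric graph on $S^*$, so there is a path $\pi^* = (p^* = u_0^*, u_1^*, \dots, u_k^* = q^*)$ in $DT[S^*]$ of Euclidean length at most $sr(DT[S^*])\cdot\|p^*-q^*\|$. Applying $L$ maps $\pi^*$ to a path $\pi$ in $DT_{A_S}[S]$ between $p$ and $q$ (this is exactly the isomorphism of Proposition~\ref{lem:affine-inv}). Bounding each edge of $\pi$ above by $\lambda_{\max}^{\nicefrac12}$ times the corresponding edge of $\pi^*$, and bounding $\|p-q\|$ below by $\lambda_{\min}^{\nicefrac12}\|p^*-q^*\|$, gives
\begin{equation*}
\frac{\mathrm{len}(\pi)}{\|p-q\|} \;\le\; \frac{\lambda_{\max}^{\nicefrac12}\,\mathrm{len}(\pi^*)}{\lambda_{\min}^{\nicefrac12}\,\|p^*-q^*\|} \;\le\; \Big(\tfrac{\lambda_{\max}}{\lambda_{\min}}\Big)^{\nicefrac12}\, sr(DT[S^*]).
\end{equation*}
Taking the supremum over all edges $pq$ yields the first inequality. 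The second inequality is then immediate from the known bound $sr(DT[\cdot]) \le 1.998$ for the standard Euclidean Delaunay triangulation, due to Xia~\cite{xia}; I would cite that directly, since the excerpt already references it.

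The only subtlety — and what I'd treat as the main point to get right rather than a genuine obstacle — is making sure the two bounds are applied in the correct direction and to the correct graph: the upper bound $\|Lv\|\le\lambda_{\max}^{\nicefrac12}\|v\|$ is used on the \emph{path edges}, while the lower bound is used on the \emph{shortcut edge} $pq$, and it is essential that the path $\pi^*$ lives in $DT[S^*]$ (so that $\pi$ lives in $DT_{A_S}[S]$) rather than in the complete graph. One should also note that general position of $S$ guarantees $\det\Sigma\neq 0$, hence $\lambda_{\min}>0$ and $L$ is genuinely invertible, so all ratios above are well defined. No heavy computation is needed; the argument is essentially the observation that a linear map with condition number $\kappa$ distorts spanning ratios by at most a factor $\kappa$, specialized to $\kappa = (\lambda_{\max}/\lambda_{\min})^{\nicefrac12}$.
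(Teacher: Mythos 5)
Your proposal is correct and follows essentially the same route as the paper: reduce via Proposition~\ref{lem:affine-inv} to the standard Delaunay triangulation of the normalized set, bound the distortion of $Q\Lambda^{\nicefrac{1}{2}}$ by $\lambda_{\min}^{\nicefrac{1}{2}}\|v\|\le\|Q\Lambda^{\nicefrac{1}{2}}v\|\le\lambda_{\max}^{\nicefrac{1}{2}}\|v\|$, and invoke Xia's bound for the second inequality. Your write-up is in fact somewhat more explicit than the paper's (which states the stretching bound without the singular-value justification and without separating the path-edge and shortcut-edge directions), but the underlying argument is identical.
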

 
 \begin{proof}
 Let $S'=(Q \Lambda^{\nicefrac{1}{2}})^{-1}S$.  The triangulation $DT[S']$ is a standard Delaunay triangulation. For every pair of points $u, v \in S'$ let $\delta_{uv}$ be a shortest path from $u$ to $v$ contained in $DT[S']$. By definition the spanning ratio $\frac{\sum_{(p_i,p_j) \in \delta_{uv}}d(p_i,p_j)}{d(u,v)} \leq sr({DT}[S'])$. 
 
 Note  
that the only thing that changes the spanning ratio is when the graph $DT[S']$ is stretched with different scaling factors in the $x$- and $y$-coordinates. Such scaling is defined by the square root of the eigenvalues of $\Sigma$ given in the diagonal matrix $\Lambda$, since $Q$ represents a matrix composed by rotations and reflections and $\Lambda^{\nicefrac{1}{2}}$ represents the scaling factor of the point set $S'$ where its covariance matrix is the identity matrix $\mathbb{I}$. 
Hence, for any $u, v \in S'$ we have  $d(\Lambda^{\nicefrac{1}{2}}u,\Lambda^{\nicefrac{1}{2}}v)$ $\leq$ $\lambda_{\max}^{\nicefrac{1}{2}} \cdot d(u,v)$ and $d(\Lambda^{\nicefrac{1}{2}}u,\Lambda^{\nicefrac{1}{2}}v) \geq \lambda_{\min}^{\nicefrac{1}{2}} \cdot d(u,v)$.
 Therefore, \begin{align*}sr({DT}_{A_S}[S])
 \leq \Big(\frac{\lambda_{\max}}{\lambda_{\min}} \Big)^{\nicefrac{1}{2}}\cdot sr({DT}[S'])\leq 1.998\Big(\frac{\lambda_{\max}}{\lambda_{\min}} \Big)^{\nicefrac{1}{2}}, \end{align*}
 where the second inequality follows from Xia's result~\cite{xia}.
 \end{proof}


\section{Primitives for other affine invariant geometric constructions}\label{sec:othergeom}

A natural line of study to follow is to consider other geometric objects with affine invariant construction algorithms, such as algorithms for triangulating a point set besides the $A_S$-Delaunay triangulation, triangulating a simple polygon, or computing a $k$-angulation of a point set, among others. In this section we identify some necessary ingredients for defining such methods.

At first sight, when such constructions rely on a metric, then the $A_S$-norm can be used, for instance to compute an affine invariant Delaunay triangulation or an affine invariant minimum weight triangulation. In general, the algorithms to compute such geometric objects fall into two categories: (1) based on an empty region property, and (2) based on the rank of the length of the $n\choose{2}$ edges of a point set. Details can be found in Sections~\ref{subsec:disk} and~\ref{subsec:norm} together with different examples of algorithms that become affine invariant when the $A_S$-norm is used.

However, this does not always work. 
Many algorithmic techniques rely on the given order of the points, such as a radial order or the order obtained by sweeping the point set in a given direction. 
Yet, the use of the $A_S$-norm does not handle rotations or reflections,  
it only solves the scaling factors on the $x$- and $y$-coordinates. 
For instance, if $S$ is a point set such that the $A_S$-norm is the Euclidean distance and $S'$ is a rotation of $S$, then the boundary of the $A_{S'}$-disk is also defined by a circle, and there is no transformation needed. However, the order of a sweep-line with respect to the $x$-coordinates for $S$ and $S'$ will not necessarily be the same. Another simple example is when $P$ is a simple polygon with vertex set $S$, such that $S$ has as $A_S$-disk the Euclidean disk. Consider point set $S'$ obtained by reflecting $P$ with respect to the $x$-axis, and the resulting polygon $P'$. Again, $S'$ will have as $A_{S'}$-disk the Euclidean disk. However, the clockwise order of $S$ and $S'$ are different.

Thus, the next question is, what do we need in order to create affine invariant sorting methods? For any sorting method, we need to define which point is the initial point, so, we have to be able to choose a point $\wp$ in the point set $S$, such that for any invertible affine transformation $\alpha$, we always choose $\alpha(\wp)$ as the initial point. Second, for the cases of radial order and sweep-line we need a ray $\ray{\wp \mu}$ and a line $\ell_{\wp}$, respectively, such that for any invertible affine transformation $\alpha$, we always choose $\ray{\alpha(\wp)\alpha(\mu)}$ and line $\alpha(\ell_{\wp})$, respectively. Finally, since there might be reflections, which change clockwise for counter-clockwise orientation and left for right of a ray, we need to be able to choose the correct orientation (clockwise or counter-clockwise, right or left) whether there is a reflection or not. We can solve this problem if we are able to choose a point $\delta\neq \wp$, such that for any invertible affine transformation $\alpha$, we always choose $\alpha(\delta)$. Then the direction is given depending on whether $\alpha(\delta)$ is to the left or to the right of $\ray{\alpha(\wp)\alpha(\mu)}$. In essence, we need to find a way to compute three affine invariant points. We will see two different methods to do this.

A function $f$ of $S$ is \emph{affine invariant} if and only if $\alpha(f(S))=f(\alpha(S))$ for any invertible affine transformation $\alpha$. For instance, the mean $f(S)=\mu$ is an affine invariant function. Hence, note that if we define affine invariant functions that compute a point, a ray, and an oriented line, 
we can design affine invariant sorting methods or other algorithms that have these as primitives. 

Let $S$ be a point set in general position. Let $u, v$ be two points in $S$. We say that point $u$ \emph{is to the right of $\overrightarrow{wz}$} if the signed area of $\triangle(zwu)$ is positive. Otherwise, $u$ \emph{is to the left of $\overrightarrow{wz}$}. Let $\alpha(x)=Mx+b$ be an invertible affine transformation of $\R^2$. Using the fact that $\alpha(w-z)\times\alpha(w-u)= {det}(M)((w-z)\times(w-u))$ the sign of ${det}(M)$ can be used to determine the orientation of $\alpha(u)$ with respect to $\overrightarrow{\alpha(w)\alpha(z)}$. We say that $u$ and $v$ \emph{lie on the same side of a directed line segment $\overrightarrow{wz}$} if ${sign}((w-z)\times(w-u))={sign}((w-z)\times(w-v))$.

\begin{observation}\label{obs:direct}
Let $\overrightarrow{wz}$ be a directed line segment in $\R^2$ and let $u, v$ be two points in $S$. Let $\alpha$ be an invertible affine transformation. The points $u$ and $v$ are on the same side of $\overrightarrow{wz}$ if and only if $\alpha(u)$ and $\alpha(v)$ are on the same side of $\overrightarrow{\alpha(w)\alpha(z)}$. 
\end{observation}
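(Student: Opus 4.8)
The plan is to reduce Observation~\ref{obs:direct} to a sign computation and invoke the cross-product identity already established just before the statement. Recall that the notion of ``same side of $\overrightarrow{wz}$'' is defined purely via the sign of a $2\times 2$ determinant: $u$ and $v$ lie on the same side of $\overrightarrow{wz}$ precisely when ${sign}((w-z)\times(w-u))={sign}((w-z)\times(w-v))$. So the entire claim is equivalent to showing that applying $\alpha$ does not change the relative sign of these two cross products.

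First I would write $\alpha(x)=Mx+b$ with ${det}(M)\neq 0$, and observe that the translation part $b$ cancels in every difference of images: $\alpha(w)-\alpha(z)=M(w-z)$, and similarly $\alpha(w)-\alpha(u)=M(w-u)$ and $\alpha(w)-\alpha(v)=M(w-v)$. Then I would apply the identity $\alpha(w-z)\times\alpha(w-u)={det}(M)\,\big((w-z)\times(w-u)\big)$ (stated in the paragraph preceding the observation, and also essentially contained in Proposition~\ref{prop:properties}(\ref{areas}) on preservation of area ratios), and the analogous identity with $u$ replaced by $v$. This gives
\begin{align*}
(\alpha(w)-\alpha(z))\times(\alpha(w)-\alpha(u)) &= {det}(M)\,\big((w-z)\times(w-u)\big),\\
(\alpha(w)-\alpha(z))\times(\alpha(w)-\alpha(v)) &= {det}(M)\,\big((w-z)\times(w-v)\big).
\end{align*}

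Now I would finish by taking signs: since ${det}(M)\neq 0$, multiplication by ${det}(M)$ either preserves all signs (if ${det}(M)>0$) or flips all signs (if ${det}(M)<0$); in either case it preserves \emph{equality} of signs. Hence ${sign}\big((w-z)\times(w-u)\big)={sign}\big((w-z)\times(w-v)\big)$ if and only if ${sign}\big((\alpha(w)-\alpha(z))\times(\alpha(w)-\alpha(u))\big)={sign}\big((\alpha(w)-\alpha(z))\times(\alpha(w)-\alpha(v))\big)$, which is exactly the assertion that $u,v$ are on the same side of $\overrightarrow{wz}$ iff $\alpha(u),\alpha(v)$ are on the same side of $\overrightarrow{\alpha(w)\alpha(z)}$.

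There is essentially no hard part here; the only thing to be slightly careful about is the degenerate case where one of the cross products is zero (i.e.\ $u$, $w$, $z$ collinear), but then the corresponding image cross product is also zero since ${det}(M)\cdot 0 = 0$, so the sign-equality bookkeeping still goes through (and for a point set in general position this case does not arise anyway when $u\neq v$ are distinct from $w,z$). The main ``obstacle'', if any, is purely expository: making sure the definition of ``same side'' in terms of determinant signs is stated cleanly so that the proof is a one-line consequence of the cross-product identity.
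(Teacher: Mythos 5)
Your argument is correct and follows essentially the same route as the paper's own proof: both reduce ``same side'' to equality of signs of the cross products $(w-z)\times(w-u)$ and $(w-z)\times(w-v)$, apply the identity $\alpha(w-z)\times\alpha(w-u)={det}(M)\,((w-z)\times(w-u))$, and observe that multiplying both quantities by the nonzero factor ${det}(M)$ preserves equality of signs. Your additional remark about the degenerate collinear case is a small bonus the paper does not spell out, but the core argument is identical.
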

\begin{proof}
 Since $u$ and $v$ are on the same side of $\overrightarrow{wz}$, ${sign}((w-z)\times(w-u))={sign}((w-z)\times(w-v))$. 
 Hence, ${sign}(\alpha(w)-\alpha(z)) \times (\alpha(w)-\alpha(u))) = {sign}({det}(M)){sign}((w-z)\times(w-u))=$  
${sign}({det}(M)){sign}((w-z)\times(w-v)) = {sign}((\alpha(w)-\alpha(z))\times(\alpha(w)-\alpha(v)))$.
\end{proof}

Note that there may be several ways of implementing affine invariant functions that compute the desired primitives for sorting, namely: a point $\wp \in S$, a ray $\overrightarrow{\wp\mu}$ and a point $\delta$ that defines an orientation with respect to $\overrightarrow{\wp\mu}$. In the rest of this section we provide two different procedures for defining these desired primitives.

In the following observation, we use the convex hull and the barycenter of $S$ in order to compute the primitives. See Figure~\ref{fig:obs4-2}.

\begin{observation}\label{obs:otherpoints}
Let $S$ be a point set in general position in the plane. Let $\mu$ be the mean of $S$ and let ${CH}(S)$ be the convex hull of $S$. 
For each edge $p_ip_{i+1} \in {CH}(S)$, consider the triangle $p_ip_{i+1}\mu$.  
Assume that the areas of these triangles are pairwise different. Let $\vartheta$ and $\delta$ be the barycenters of the triangles with largest and second largest area, respectively, denoted $\triangle_{B_1}$ and $\triangle_{B_2}$, respectively. Consider the ray $\ray{\mu\vartheta}$. If $\delta$ is to the left of $\ray{\mu\vartheta}$, then let $\wp$ be the vertex of $\triangle_{B_1}$ that is to the left of $\ray{\mu\vartheta}$. Otherwise, let $\wp$ be the vertex of $\triangle_{B_1}$ that is to the right of $\ray{\mu\vartheta}$. The functions $f_1(S)=\vartheta, f_2(S)=\delta$ and $f_3(S)=\wp$ are affine invariant. 
\end{observation}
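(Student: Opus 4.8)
The plan is to show that each of the three functions is obtained by a sequence of constructions, each of which commutes with an arbitrary invertible affine transformation $\alpha$. The key tool is Proposition~\ref{prop:properties}: affine maps send the convex hull of $S$ to the convex hull of $\alpha(S)$ (item~\ref{polygons}, applied to the boundary), send the mean of $S$ to the mean of $\alpha(S)$ (item~\ref{mean}), preserve ratios of areas (item~\ref{areas}), and — together with Observation~\ref{obs:direct} — preserve the ``same side / opposite side'' relation of a point with respect to a directed segment. Since every ingredient used to define $\vartheta$, $\delta$, and $\wp$ is one of these, the composite functions will be affine invariant.

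First I would verify the combinatorial skeleton is preserved. Write $S' = \alpha(S)$ and $\mu' = \alpha(\mu)$. By Proposition~\ref{prop:properties}(\ref{polygons}) the vertices of ${CH}(S)$ map bijectively, in cyclic order, to the vertices of ${CH}(S')$, so the edge $p_ip_{i+1}$ of ${CH}(S)$ corresponds to the edge $\alpha(p_i)\alpha(p_{i+1})$ of ${CH}(S')$, and hence the triangle $p_ip_{i+1}\mu$ maps to $\alpha(p_i)\alpha(p_{i+1})\mu'$. Next, because $\alpha$ multiplies all areas by the same factor $|{\det}(M)|$ (item~\ref{areas}), it preserves the relative ordering of the triangle areas; in particular the triangle of largest area maps to the triangle of largest area, and likewise for the second largest. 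Here I would note that the general-position-style hypothesis that the areas are pairwise distinct is exactly what makes ``largest'' and ``second largest'' well defined, and that this hypothesis is itself preserved by $\alpha$ (distinct areas stay distinct after scaling by a nonzero constant). Therefore $\triangle_{B_1}(S') = \alpha(\triangle_{B_1}(S))$ and $\triangle_{B_2}(S') = \alpha(\triangle_{B_2}(S))$, and since an affine map sends the barycenter of a triangle to the barycenter of its image (item~\ref{mean} applied to the $3$-point set of vertices, or directly), we get $f_1(S') = \alpha(\vartheta) = \alpha(f_1(S))$ and $f_2(S') = \alpha(\delta) = \alpha(f_2(S))$.

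It remains to handle $f_3 = \wp$, which is the only place orientation enters. The ray $\ray{\mu\vartheta}$ maps to $\ray{\alpha(\mu)\alpha(\vartheta)} = \ray{\mu'\vartheta'}$ by item~\ref{lines}, so I would apply Observation~\ref{obs:direct} (or its proof) to the directed segment $\overrightarrow{\mu\vartheta}$: the sign test ``$\delta$ to the left of $\ray{\mu\vartheta}$'' and the sign test ``$\alpha(\delta)$ to the left of $\ray{\mu'\vartheta'}$'' differ precisely by the factor ${\rm sign}({\det}(M))$, and the same factor governs the test deciding which vertex of $\triangle_{B_1}$ is ``to the left'' of the ray. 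The construction is designed so that this common sign factor cancels: if ${\det}(M) > 0$, then ``left'' and ``left'' are both preserved, so $\wp$ maps to the correspondingly-chosen vertex of $\triangle_{B_1}(S')$; if ${\det}(M) < 0$, both ``left'' become ``right'', the branch of the definition flips in the same way for $\delta$ and for the vertex choice, and one checks the selected vertex is still $\alpha(\wp)$. Writing this out as a short two-case argument (or, more slickly, observing that ``$\wp$ is the vertex of $\triangle_{B_1}$ lying on the same side of $\ray{\mu\vartheta}$ as $\delta$'' is a side-relation statement and invoking Observation~\ref{obs:direct} once) gives $f_3(S') = \alpha(\wp) = \alpha(f_3(S))$, completing the proof.

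The only real subtlety — the ``hard part,'' though it is more bookkeeping than difficulty — is the orientation cancellation for $\wp$: one must be careful that a reflection ($\det M < 0$) simultaneously flips the test that selects the branch (whether $\delta$ is left or right of the ray) and the test that selects the vertex within that branch, so that the net choice is still the affine image of the original $\wp$. Phrasing the selection of $\wp$ as ``the vertex of $\triangle_{B_1}$ on the same side of $\overrightarrow{\mu\vartheta}$ as $\delta$'' reduces this entirely to Observation~\ref{obs:direct} and makes the cancellation automatic; I would use that phrasing to keep the argument clean.
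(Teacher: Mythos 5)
Your proposal is correct and follows essentially the same route as the paper's proof: map the mean, convex hull, and edge-triangles via Proposition~\ref{prop:properties}, use preservation of area ratios to identify $\triangle_{B_1}$ and $\triangle_{B_2}$ with their images, map barycenters to barycenters, and invoke Observation~\ref{obs:direct} for the orientation-dependent choice of $\wp$. Your explicit two-case treatment of the $\det(M)<0$ cancellation (and the cleaner ``same side as $\delta$'' rephrasing) only elaborates the step the paper compresses into a single appeal to Observation~\ref{obs:direct}.
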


\begin{figure}[tb]
\centering
	\includegraphics[scale=1]{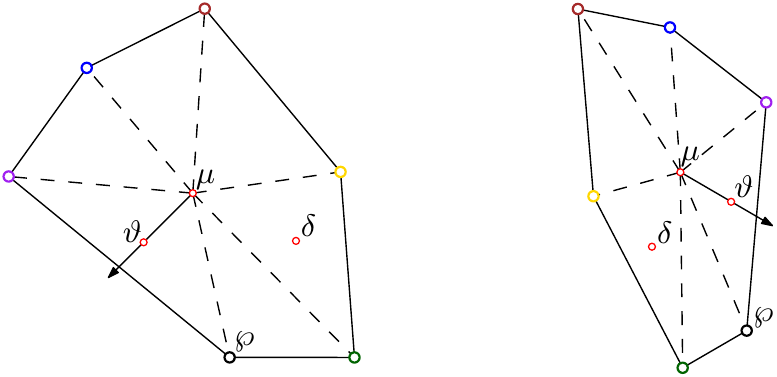}
	\caption{The bold polygons define the convex hull of two point sets that can be transform into each other by an affine transformation. Equal color indicate pairs of points and their affine transformations. The triangle defined by the points $\mu$, black and purple is the triangle with largest area. In addition, the triangle defined  by the points $\mu$, yellow and green is the triangle with  second largest area.}
	\label{fig:obs4-2}
\end{figure}

\begin{proof}
Let $S'=\alpha(S)$ be an affine transformation of $S$. Consider the convex hull ${CH}(S')$ of $S'$ and for each edge $p'_ip'_{i+1}$ of ${CH}(S')$ we consider the triangles $\triangle(p'_ip'_{i+1}\alpha(\mu))$. Let $\triangle'_{B_1}$ and $\triangle'_{B_2}$ be the triangles with largest and second largest area of the triangles defined by the edges of ${CH}(S')$ and $\alpha(\mu)$.
Let $f_1(S')=\vartheta'$ and $f_2(S')=\delta'$ be the barycenters of $\triangle'_{B_1}$ and $\triangle'_{B_2}$, respectively. If $\delta'$ is to the left of $\ray{\alpha(\mu)\vartheta}$, then let $f_3(S')=\wp'$ be the vertex in $\triangle'_{B_1}$ that is to the left of $\ray{\alpha(\mu)\vartheta'}$. Otherwise, let $f_3(S')=\wp'$ be the vertex $\triangle'_{B_1}$ that is to the right of $\ray{\alpha(\mu)\vartheta'}$.

Let us show that $\alpha(f_1(S))=\alpha(\vartheta)=\vartheta'=f_1(S'), \alpha(f_2(S))=\alpha(\delta)=\delta'=f_2(S')$ and $\alpha(f_3(S))=\alpha(\wp)=\wp'=f_3(S')$. From Proposition~\ref{prop:properties}(\ref{mean}) the mean of $S'$ is $\alpha(\mu)$. From Proposition~\ref{prop:properties}(\ref{polygons}) the convex hull of $S$ is mapped to the convex hull $S'$. Also, from Proposition~\ref{prop:properties}(\ref{polygons}), for each edge $p_ip_{i+1} \in {CH}(S)$, the triangle $\triangle(p_ip_{i+1}\mu)$ is mapped to $\triangle(\alpha(p_i)\alpha(p_{i+1})\alpha(\mu))$. 
Let $\triangle$ be a triangle defined by the endpoints of an edge in ${CH}(S)$ and $\mu$, that is different from $\triangle_{B_1}$. 
Since ${Area}(\triangle_{B_1}) > {Area}(\triangle)$, it follows that $\frac{{Area}(\triangle_{B_1})}{{Area}(\triangle)}>1$. From Proposition~\ref{prop:properties}(\ref{areas}) we have that $\frac{{Area}(\alpha(\triangle_{B_1}))}{{Area}(\alpha(\triangle))} = \frac{{Area}(\triangle_{B_1})}{{Area}(\triangle)}>1$. Thus, ${Area}(\alpha(\triangle_{B_1})) > {Area}(\alpha(\triangle))$. Therefore, $\alpha(\triangle_{B_1})=\triangle'_{B_1}$. By the same arguments, $\alpha(\triangle_{B_2})=\triangle'_{B_2}$. Hence, $\vartheta'$ and $\delta'$ are the barycenters of $\alpha(\triangle_{B_1})$ and $\alpha(\triangle_{B_2})$, respectively. Proposition~\ref{prop:properties}(\ref{mean}) implies that $\alpha(\vartheta)=\vartheta'$ and $\alpha(\delta)=\delta'$. By Proposition~\ref{prop:properties}(\ref{lines}) we have that $\ray{\mu\vartheta}$ is mapped to $\ray{\alpha(\mu)\vartheta'}$. From Observation~\ref{obs:direct} it follows that $\alpha(\wp)=\wp'$.
\end{proof}

Under the assumption that the largest and second largest area of the triangles defined by each edge of the ${CH}(S)$ and the mean $\mu$ of $S$ are different, we can define affine invariant functions that compute a point $\wp$ of $S$, a ray $\ray{\wp\mu}$, a line defined by $\mu$ and any of $\wp$ and $\vartheta$; and a direction defined by a ray $\ray{\wp\mu}$ and point $\delta$.  On the other hand, notice that using the area of a triangle formed by $\mu$ and any pair of points, we obtain an affine invariant order of the edges of the complete graph of $S$ such that each edge has weight equal to the area of the triangle formed by its endpoints and $\mu$. Under the assumption that pairwise triangles have different area, we can also obtain affine invariant algorithms that are based on the rank of the edges of the complete graph (see Section~\ref{subsec:norm}). 

We will say that a point $u$ is the \emph{$A_S$-closest} point to $\mu$ if $u$ minimizes the $A_S$-distance to $\mu$. Using the fact that the $A_S$-norm is affine invariant, we observe that there is another method for distinguishing the desired primitives for affine invariant radial and sweep-line ordered.

\begin{observation}\label{obs:closest}
Let $S$ be a point set in general position and let $\mu$ be the mean of $S$ and let $\alpha(S)$ be an affine transformation of $S$ with mean $\mu'$. The $k$-th $A_S$-closest point to $\mu$ is mapped by $\alpha$ to the $k$-th $A_{\alpha(S)}$-closest point to $\mu'$.
\end{observation}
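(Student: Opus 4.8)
The plan is to establish the key invariance property $A_S(\alpha(p) - \alpha(\mu)) = A_S(p-\mu)$ that is, Nielson's $A_S$-norm is preserved under affine transformations when applied to centered points and then conclude that the entire ordering of points by $A_S$-distance to the mean is preserved. First I would recall from the discussion in Section~\ref{sec:affine-del} that if $S' = \alpha(S)$ with $\alpha(p) = Mp + b$, then the mean of $S'$ is $\mu' = \alpha(\mu)$ (Proposition~\ref{prop:properties}(\ref{mean})) and the covariance matrix transforms as $\Sigma' = M\Sigma M^T$. Consequently $(\Sigma')^{-1} = (M^T)^{-1}\Sigma^{-1}M^{-1}$, and for any point $p \in S$ with image $p' = \alpha(p) \in S'$ we have $p' - \mu' = M(p-\mu)$, so that
\begin{equation*}
A_{S'}(p' - \mu') = (p'-\mu')^T(\Sigma')^{-1}(p'-\mu') = (p-\mu)^T M^T (M^T)^{-1}\Sigma^{-1}M^{-1}M(p-\mu) = (p-\mu)^T\Sigma^{-1}(p-\mu) = A_S(p-\mu).
\end{equation*}
Here I am using the quadratic-form definition of $A_S$ given in Section~\ref{sec:defs}, just evaluated at the centered vector $p-\mu$ rather than at $p$ itself; this is the natural reading consistent with Figure~\ref{fig:ellipses}, where the $A_S$-disks are centered at the mean.

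With this equality in hand, the observation follows immediately by a ranking argument. Since $S$ is in general position, all points of $S$ are at pairwise distinct $A_S$-distances from $\mu$; enumerate them as $p_{(1)}, p_{(2)}, \dots, p_{(n)}$ so that $A_S(p_{(1)}-\mu) < A_S(p_{(2)}-\mu) < \cdots < A_S(p_{(n)}-\mu)$. By the invariance just proved, $A_{S'}(\alpha(p_{(i)}) - \mu') = A_S(p_{(i)} - \mu)$ for every $i$, so the images $\alpha(p_{(1)}), \dots, \alpha(p_{(n)})$ are at $A_{S'}$-distances from $\mu'$ that appear in exactly the same increasing order (in particular they remain pairwise distinct, so $S'$ is also in general position in the relevant sense). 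Therefore the $k$-th $A_S$-closest point of $S$ to $\mu$, namely $p_{(k)}$, is mapped by $\alpha$ to $\alpha(p_{(k)})$, which is precisely the $k$-th $A_{S'}$-closest point of $S'$ to $\mu'$. This is what the observation asserts.

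I do not anticipate a serious obstacle here; the statement is essentially a bookkeeping consequence of the affine behavior of the covariance matrix, which has already been worked out in Section~\ref{sec:affine-del}. The one point requiring a little care and worth stating explicitly is the transformation rule $(\Sigma')^{-1} = (M^T)^{-1}\Sigma^{-1}M^{-1}$, which is where the two copies of $M$ cancel against $p'-\mu' = M(p-\mu)$; this is exactly the computation underlying why $A_S$ was designed to be affine invariant in the first place, so alternatively one may simply invoke the affine invariance of the $A_S$-norm as asserted by Nielson and recalled before Theorem~\ref{Nielson-triang}, and then the ranking argument is all that remains.
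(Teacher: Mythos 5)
Your proof is correct and follows essentially the same route as the paper's: order the points by $A_S$-distance to $\mu$, note that general position makes these distances pairwise distinct, and conclude that the affine invariance of the $A_S$-distance preserves the ranking. The only difference is that you explicitly derive the invariance $A_{S'}(\alpha(p)-\mu') = A_S(p-\mu)$ from $\Sigma' = M\Sigma M^T$, whereas the paper simply invokes this property of the $A_S$-norm as already established; your added computation is correct and harmless.
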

\begin{proof}
 Since $S$ is in general position, all of the points in $S$ are at different $A_S$-distance to $\mu$. Hence, we can order the points of $S$ in increasing order with respect to the $A_S$-distance from each point to $\mu$. Since $A_S$ is an affine invariant norm, the points in $\alpha(S)$ are also at different $A_{\alpha(S)}$-distance to the mean $\alpha(\mu)=\mu'$. This again defines an order of $\alpha(S)$. Since the $A_S$-distance is invariant under affine transformations, it follows that the increasing order with respect to the $A_S$-distance from each point to $\mu$ is affine invariant. Thus, the $k$-th $A_S$-closest point to $\mu$ is mapped by $\alpha$ to the $k$-th $A_{\alpha(S)}$-closest point to $\mu'$.
\end{proof}

From Observation~\ref{obs:closest} it follows that we can also use the $A_S$-norm in order to define affine invariant functions $f_1(S)$ and $f_2(S)$, such that each function computes a point in the following fashion. Let $\mu$ be the mean of $S$, $f_1(S)=\wp$ be the $A_S$-closest point to $\mu$. Let $f_2(S)=\delta$ be the second $A_S$-closest point to $\mu$ if it is not on the line defined by $\mu$ and $\wp$. Otherwise, let $f_2(S)=\delta$ be the third $A_S$-closest point to $\mu$. 

Therefore, from Observations~\ref{obs:otherpoints} and~\ref{obs:closest} we obtain the following theorem.

\begin{theorem}\label{thm:affine_functions}
There exist three affine invariant functions $f_1, f_2, f_3$ for any point set $S$ in general position, such that $f_1(S), f_2(S)$ and $f_3(S)$ are three different points in the plane that are non collinear and at least one of them is in $S$.   
\end{theorem}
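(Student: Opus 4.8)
The plan is to assemble the theorem directly from the two observations that precede it, treating each as a toolbox for producing affine invariant points, and then checking the non-collinearity and membership conditions by a small case analysis. The statement gives us freedom: we only need \emph{three} affine invariant functions producing three non-collinear points, at least one of which lies in $S$. So first I would fix a construction and then verify the three required properties (affine invariance, non-collinearity, and ``at least one in $S$'').

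The cleanest route, I think, is to use Observation~\ref{obs:closest} (the $A_S$-norm method) for $f_1$ and $f_2$, and the mean for the third function. Concretely: set $f_3(S)=\mu$ (affine invariant by Proposition~\ref{prop:properties}(\ref{mean})), let $f_1(S)=\wp$ be the $A_S$-closest point to $\mu$, and let $f_2(S)=\delta$ be the second $A_S$-closest point to $\mu$ if $\delta\notin$ line$(\mu,\wp)$, and the third $A_S$-closest point otherwise — exactly the functions described in the paragraph after Observation~\ref{obs:closest}. Affine invariance of $f_1$ and $f_2$ is immediate from Observation~\ref{obs:closest} together with the observation that $\alpha$ maps line$(\mu,\wp)$ to line$(\mu',\wp')$ (Proposition~\ref{prop:properties}(\ref{lines})), so the ``otherwise'' branch is selected consistently across affine images. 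Membership is automatic since $f_1(S)=\wp\in S$. For non-collinearity: by construction $\delta$ is not on line$(\mu,\wp)$, hence $\mu$, $\wp$, $\delta$ are not collinear, which is precisely what we need for $f_1(S),f_2(S),f_3(S)$.

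One subtlety to address is whether these functions are well-defined, i.e., whether enough distinct points exist. General position guarantees all points of $S$ are at distinct $A_S$-distances from $\mu$, so the ``$k$-th $A_S$-closest point'' is unambiguous; and since no three points of $S$ are collinear, at most two of the first three $A_S$-closest points can lie on line$(\mu,\wp)$ — actually at most one besides $\wp$ — so the fallback to the third closest point always yields a point off that line (this needs $|S|\ge 3$, which is implicit in talking about triangulations; if one wants to be careful one can state the standing assumption $n\ge 3$). I would spell this out in a sentence. Alternatively, and as a remark, one could note that Observation~\ref{obs:otherpoints} gives a second, convex-hull-based construction of the same primitives under its genericity assumption, so the theorem is robust to which machinery one prefers; but for the proof a single construction suffices.

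The main obstacle — really the only thing requiring care — is the degeneracy bookkeeping: making sure the conditional definition of $f_2$ is genuinely affine invariant (the condition ``$\delta$ lies on line$(\mu,\wp)$'' must be affine invariant, which it is because affine maps send lines to lines and preserve incidence) and that the three output points are honestly non-collinear rather than merely ``generically'' so. Everything else is a direct citation of Observation~\ref{obs:closest} and Proposition~\ref{prop:properties}. I expect the write-up to be short: state the construction, invoke Observation~\ref{obs:closest} for invariance of $f_1,f_2$, invoke Proposition~\ref{prop:properties}(\ref{mean}) for $f_3$, note $\wp\in S$, and close with the one-line non-collinearity argument from the definition of $\delta$.
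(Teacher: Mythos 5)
Your construction ($\mu$ as the mean, $\wp$ the $A_S$-closest point to $\mu$, and $\delta$ the second or third $A_S$-closest point chosen off the line through $\mu$ and $\wp$) is exactly the one the paper sets up in the paragraph following Observation~\ref{obs:closest}, and the paper's ``proof'' of the theorem is simply to cite Observations~\ref{obs:otherpoints} and~\ref{obs:closest}. Your write-up follows the same route and is in fact slightly more careful than the paper about the well-definedness and non-collinearity bookkeeping.
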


Finally, we obtain the next as a corollary of Theorem~\ref{thm:affine_functions}.

\begin{corollary}
Let $S$ be a point set in general position. Any  geometric algorithm based on a point in $S$ and a direction or orientation can be made affine invariant. 
\end{corollary}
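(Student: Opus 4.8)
The plan is to show that the three affine invariant points supplied by Theorem~\ref{thm:affine_functions} furnish exactly the ingredients that any ``point plus direction/orientation'' algorithm needs, and that feeding the transformed inputs into the algorithm produces the transformed output. First I would recall the setup motivated earlier in Section~\ref{sec:othergeom}: such an algorithm $A$, beyond the point set $S$ itself, requires (i)~a distinguished starting point, (ii)~an oriented line or ray through which a radial or sweep order is read off, and (iii)~a choice of the side (left/right, clockwise/counter\-clockwise) so that reflections do not silently reverse the output. I would make precise that $A$ is a black box that, once these three data are fixed, runs deterministically on $S$; the content of the corollary is that we can \emph{canonically} produce these data from $S$ alone, in a way that commutes with every invertible affine $\alpha$.

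The main steps: Let $f_1,f_2,f_3$ be the affine invariant functions of Theorem~\ref{thm:affine_functions}, with $f_1(\alpha(S))=\alpha(f_1(S))$ and likewise for $f_2,f_3$, the three points non collinear and at least one in $S$. Take $\wp$ to be whichever of $f_1(S),f_2(S),f_3(S)$ lies in $S$ (if more than one does, pick the one with the smallest index, which is itself affine invariant since the indexing of the $f_i$ is) as the start point; take the ray $\ray{\wp\mu}$ (well-defined since $\mu\neq\wp$ can be arranged, or else use the ray from $\wp$ through one of the other $f_i$) as the sorting axis; and use the third of the three points, say $\delta$, to fix the orientation by the sign of $(\mu-\wp)\times(\mu-\delta)$, which is meaningful because the three points are non collinear. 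By Proposition~\ref{prop:properties}(\ref{mean}) the mean is affine invariant, so $\ray{\wp\mu}$ maps to $\ray{\alpha(\wp)\alpha(\mu)}$ by Proposition~\ref{prop:properties}(\ref{lines}); by Observation~\ref{obs:direct} the side of $\delta$ relative to this ray is preserved (up to the global sign of $\det M$, which is precisely what the orientation bit corrects for). Hence the triple of primitives computed from $\alpha(S)$ is the $\alpha$-image of the triple computed from $S$.

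Finally I would invoke affine invariance of $A$ relative to its primitives: running $A$ on $(\alpha(S),\alpha(\wp),\ray{\alpha(\wp)\alpha(\mu)},\text{orientation})$ is, by Proposition~\ref{prop:properties}, the same combinatorial computation as running $A$ on $(S,\wp,\ray{\wp\mu},\text{orientation})$ — every comparison $A$ makes is an orientation test or a radial/sweep comparison, each of which is preserved by $\alpha$ once the reflection has been accounted for (cf.\ the order-type discussion in Section~\ref{sec:defs} and Observation~\ref{obs:direct}). Therefore $A(\alpha(S))\simeq A(S)$ under $\alpha$, i.e.\ the resulting algorithm is affine invariant.

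\textbf{Main obstacle.} The delicate point is not the algebra but the modelling: one must argue that ``geometric algorithm based on a point and a direction/orientation'' really does reduce, for the purpose of its output combinatorics, to orientation predicates and a single linear order along the chosen axis — so that nothing in $A$ secretly depends on Euclidean distances, angles, or absolute coordinates that an affine map would distort. I expect the cleanest way to handle this is to state explicitly the predicate model $A$ is assumed to use (sidedness tests and radial/sweep comparisons with respect to the given primitives), and to note that all standard incremental/sweep triangulation, quadrangulation, and polygon-triangulation algorithms fit this model; then the preservation of each predicate under $\alpha$ (with the orientation bit absorbing $\mathrm{sign}(\det M)$) is exactly Observation~\ref{obs:direct} together with the order-type invariance already recorded.
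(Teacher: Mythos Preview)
Your proposal is correct and follows exactly the line the paper intends: the paper states this corollary without proof, treating it as immediate from Theorem~\ref{thm:affine_functions}, and your elaboration (use the three affine invariant points to supply the start point, the ray, and the orientation bit via Observation~\ref{obs:direct}) is precisely the reasoning the paper has set up in the preceding paragraphs of Section~\ref{sec:othergeom}. Your ``main obstacle'' paragraph is also apt --- the corollary is really a meta-statement whose scope depends on what one takes ``based on a point and a direction/orientation'' to mean, and the paper handles this not by formalising the predicate model but by exhibiting concrete instances (\textsc{RadiallySort}, \textsc{SweepLine}, \textsc{Traversal}) in Section~\ref{sec:sorting}.
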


In the following section we present two algorithms that are based on a point in $S$ and a direction and orientation that can be made affine invariant.

\section{Affine invariant sorting algorithms of a point set}\label{sec:sorting}

In this section we present two different sorting algorithms that together with the affine invariant primitives defined in Section~\ref{sec:othergeom} result in two different affine invariant sorting methods. 

\subsection{Affine invariant radial ordering}\label{sec:radial}

Let $\mu, \delta$ and $\wp$ be three non-collinear points, such that $\wp$ is in $S$. Consider the following radial sorting procedure of $S$. 

\textsc{RadiallySort}($S, \mu, \delta, \wp$): 
Sort points radially around $\wp$, starting at $\mu$ in the direction of $\delta$ with respect to the line through $\wp$ and $\mu$ oriented from $\wp$ to $\mu$. In other words, if $\delta$ is to the left of $\ray{\wp\mu}$, then turn counter-clockwise. Else, turn clockwise. 
 See Figure~\ref{fig:ord-radial}(left), where we turn counter-clockwise and in~\ref{fig:ord-radial}(right) we turn clockwise. 

Using Observations~\ref{obs:direct} and~\ref{obs:closest} we can adapt  \textsc{RadiallySort} to be affine invariant and obtain a radial order of $S$. In other words, there exist three affine invariant function $f_1, f_2, f_3$ such that for any invertible affine transformation $\alpha$, the following holds: consider \textsc{RadiallySort}($S, f_1(S), f_2(S) f_3(S)$)$=p_0, \ldots p_{n-1}$ and  \textsc{RadiallySort}($\alpha(S), f_1(\alpha(S)), f_2(\alpha(S)),$ $f_3(\alpha(S))$)$=p'_0, \ldots p'_{n-1}$, then $\alpha(p_i)=p'_i$ for all $i \in \{0, \ldots, n-1\}$.

\begin{figure}
\centering
	\includegraphics[scale=1]{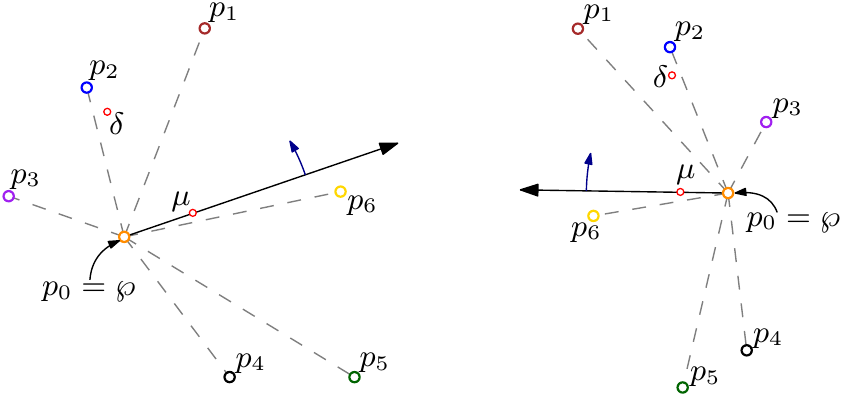}
	\caption{An affine invariant ordering of the point sets $S$ (\emph{left}) and its affine transformation $S'$$=$$\alpha(S)$ (\emph{right}). }
	\label{fig:ord-radial}
\end{figure}

\begin{theorem}\label{thm:graham}
Let $S$ be a point set in general position and let $f_1(S)=\mu,f_2(S)=\delta$ and $f_3(S)=\wp$ be three affine invariant functions such that $\mu, \delta, \wp$ are three non-collinear points and $\wp \in S$. Then, \textsc{RadiallySort($S, \mu, \delta, \wp$)} is an affine invariant radial ordering method of $S$. 
\end{theorem}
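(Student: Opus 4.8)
The plan is to show that \textsc{RadiallySort} commutes with any invertible affine transformation $\alpha$, given that the three input points are produced by affine invariant functions. First I would fix an invertible affine transformation $\alpha(x) = Mx + b$ and set $S' = \alpha(S)$. By hypothesis $f_1, f_2, f_3$ are affine invariant, so $\mu' := f_1(S') = \alpha(\mu)$, $\delta' := f_2(S') = \alpha(\delta)$, and $\wp' := f_3(S') = \alpha(\wp)$; moreover $\wp' \in S'$ since $\wp \in S$, and $\mu', \delta', \wp'$ remain non-collinear because $\alpha$ preserves collinearity (Proposition~\ref{prop:properties}(\ref{lines})). So the procedure \textsc{RadiallySort}($S', \mu', \delta', \wp'$) is well-defined.

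Next I would argue that the \emph{turning direction} chosen by the algorithm is consistent. The direction is ``counter-clockwise'' if $\delta$ is to the left of $\ray{\wp\mu}$ and ``clockwise'' otherwise; this is exactly a sidedness test of the form described before Observation~\ref{obs:direct}, namely the sign of $(\wp - \mu) \times (\wp - \delta)$. By Observation~\ref{obs:direct} (applied with $w = \wp$, $z = \mu$, $u = \delta$), the points determining this test transform consistently, so $\delta'$ lies to the left of $\ray{\wp'\mu'}$ if and only if $\delta$ lies to the left of $\ray{\wp\mu}$. Hence both runs of the algorithm agree on whether to sweep clockwise or counter-clockwise; note that this is the step that genuinely requires $\alpha$ to be invertible and that ${\rm sign}({\rm det}(M))$ is well-defined and nonzero --- reflections flip the orientation, but since the \emph{reference} direction is itself defined by the affine invariant point $\delta$, the flip cancels.

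The core step is then to show that the radial order around $\wp$ is preserved. Fix two points $q, r \in S \sm \{\wp\}$. In the run on $S$, $q$ precedes $r$ if and only if, sweeping from the ray $\ray{\wp\mu}$ in the chosen rotational direction, the ray $\ray{\wp q}$ is encountered before $\ray{\wp r}$. I would translate this into a finite sequence of sidedness comparisons: which of $q, r$ is on the same side of $\ray{\wp\mu}$, and (when on the same side, or to break ties) the sign of $(q - \wp) \times (r - \wp)$, i.e.\ whether $r$ is to the left or right of $\ray{\wp q}$. Each such comparison is an instance of the orientation primitive, so by Observation~\ref{obs:direct} each one yields the same Boolean answer for the images $\wp', q', r'$ under $\ray{\wp'\mu'}$ as for $\wp, q, r$ under $\ray{\wp\mu}$ --- except that every sign is multiplied by ${\rm sign}({\rm det}(M))$. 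Combining this uniform global sign flip (when $\alpha$ is orientation-reversing) with the matching flip of the sweep direction from the previous paragraph, the induced total order is identical: $p_i$ precedes $p_j$ in the run on $S$ iff $\alpha(p_i)$ precedes $\alpha(p_j)$ in the run on $S'$. Since both orders start at the point whose ray is $\ray{\wp\mu}$ (resp.\ $\ray{\wp'\mu'} = \ray{\alpha(\wp)\alpha(\mu)}$), the two sorted sequences are index-aligned, giving $\alpha(p_i) = p'_i$ for all $i$, which is exactly affine invariance of the method.

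The main obstacle is the bookkeeping in the last step: making precise how a radial sweep order decomposes into orientation tests and then tracking the global sign change from a possible reflection, being careful that it exactly compensates for the reversal of the sweep direction, so that no off-by-one or reversed-order artifact creeps in. General position (no three points collinear, and $\wp$ not equal to any other vertex) is what guarantees all these cross products are nonzero, so the comparisons are strict and the order is well-defined; I would state that dependence explicitly. Everything else is a direct appeal to Observation~\ref{obs:direct} and Proposition~\ref{prop:properties}.
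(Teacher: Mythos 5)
Your overall strategy is sound but genuinely different from the paper's. You reduce the radial comparison to a fixed set of orientation predicates and track the uniform multiplication of all signed areas by ${\rm sign}({\det}(M))$, arguing that this global flip is exactly compensated by the flip of the sweep direction. The paper instead proceeds by induction on the sorted index: assuming $p'_j=\alpha(p_j)$ for $j\le i-1$, it supposes $\alpha(p_i)\neq p'_i$, builds an auxiliary triangle $T$ with apex $\wp'$, and derives a contradiction from the fact that affine maps preserve containment in triangles and incidence on segments (Proposition~\ref{prop:properties}(1) and (3)), thereby never needing to reason about absolute orientation at all. Your decomposition-into-predicates route is arguably cleaner and makes the role of reflections explicit, whereas the paper's argument only ever uses the reflection-proof primitives of same-sidedness and containment.

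There is, however, a concrete error in your second paragraph that contradicts your third. Observation~\ref{obs:direct} is a statement about \emph{two} points $u,v$ lying on the same side of a directed segment; applied with only $\delta$ it yields nothing, and the conclusion you draw from it --- that $\delta'$ is to the left of $\ray{\wp'\mu'}$ if and only if $\delta$ is to the left of $\ray{\wp\mu}$, hence that ``both runs agree on whether to sweep clockwise or counter-clockwise'' --- is false whenever ${\det}(M)<0$: a reflection sends a point on the left of $\ray{\wp\mu}$ to a point on the \emph{right} of $\ray{\wp'\mu'}$, so the two runs choose \emph{opposite} rotational directions. That opposite choice is precisely what your final paragraph needs (``the matching flip of the sweep direction''), so the mechanism you invoke at the end is correct but is explicitly denied by the paragraph that is supposed to establish it. To repair the argument, replace the appeal to Observation~\ref{obs:direct} here by the identity $\alpha(w-z)\times\alpha(w-u)={\det}(M)\,((w-z)\times(w-u))$ stated in Section~\ref{sec:defs}, conclude that the side of $\delta$ (and of every other point, in every predicate) flips exactly when ${\det}(M)<0$, and then carry that single sign through both the direction test and the pairwise comparisons. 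One further small point: since $\mu\notin S$ in general, a point of $S$ may lie on the ray $\ray{\wp\mu}$ itself, a degenerate case the paper treats separately and your tie-breaking remark should cover explicitly.
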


\begin{proof}
Let $\alpha$ be an affine transformation. 
 Let $p_0=\wp, p_2, \ldots, p_{n-1}$ be the order of $S$ given by
  \textsc{RadiallySort($S, \mu, \delta, \wp$)} and let $p'_0=\wp',p'_2, \ldots, p'_{n-1}$ be the order of $\alpha(S)$ given by \textsc{RadiallySort($\alpha(S), f_1(\alpha(S))=\mu', f_2(\alpha(S))=\delta', f_3(\alpha(S))=\wp'$)}, respectively. Then, $\alpha(p_0)=\alpha(\wp)=\wp'=p_0, \alpha(\mu)=\mu'$ and $\alpha(\delta)=\delta'$. 
It remains to show that $p'_i = \alpha(p_i)$ for all $1 \leq i \leq n-1$.
If $p_1$ is on the ray $\overrightarrow{\wp \mu}$, then $\alpha(p_1)$ is on the ray $\overrightarrow{\wp'\mu'}$. Thus, $p'_1=\alpha(p_1)$. Assume that $p_1$ is not on $\overrightarrow{\wp\mu}$. 

 \begin{figure}[tb]
\centering
	\includegraphics[scale=1]{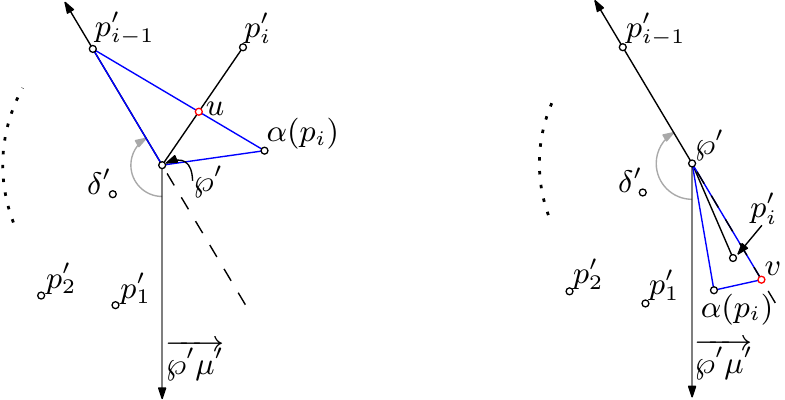}
	\caption{(\emph{left}) $p'_i$ is not in $T=\triangle(p'_{i-1}\wp'\alpha(p_i))$ and $u$ is the intersecting point of $\wp'p'_i$ and $\alpha(p_i)p'_{i-1}$. (\emph{right}) $p'_i$ is in the interior of $T=\triangle(v\wp'\alpha(p_i))$ and the Euclidean distance from $\wp'$ to $v$ is the same from $\wp'$ to $p'_{i-1}$.} 
	\label{fig:rad-proof}
\end{figure}

Let $i \in \{1, \ldots n-1\}$ and assume that $p'_j=\alpha(p_j)$ for all $j \leq i-1$. Let us show $\alpha(p_i)=p'_i$. Assume for the sake of a contradiction that $\alpha(p_i)\neq p'_i$. First, we show that $p'_i$ is on the same side as $\alpha(p_i)$ with respect to $\ray{\wp'p_{i-1}'}$ (we consider the ray $\ray{\wp'\mu'}$ when $i=1$).
Without loss of generality assume that $\delta'$ is to the right of $\overrightarrow{\wp' \mu'}$, since the case when $\delta'$ is to the left of $\overrightarrow{\wp' \mu'}$ is symmetric. 
If $\alpha(p_i)$ is to the right of $\ray{\wp'p_{i-1}'}$, then $p'_i$ has to be to the right of $\ray{\wp'p_{i-1}'}$ since $p'_i$ is hit first while rotating $\ray{\wp'p_{i-1}'}$ counter-clockwise. If $\alpha(p_i)$ is to the left of $\ray{\wp'p_{i-1}'}$, then by Observation~\ref{obs:direct} we have that $p_i$ is to the left (resp., \emph{to the right}) of  $\ray{\wp p_{i-1}}$ if $\delta$ is to the right (resp., \emph{to the left}) of $\ray{\wp\mu}$. Thus, there is no point of $S$ to the right (resp., to the left) of $\ray{\wp p_{i-1}}$ if $\delta$ is to the left (resp., to the right) of $\ray{\wp\mu}$. Hence, by Observation~\ref{obs:direct}, there is no point to the left of $\ray{\wp'p_{i-1}'}$, in particular, $p'_i$ cannot be on that side.
Therefore, $\alpha(p_i)$ and $p'_i$ lie on the same side with respect to $\overrightarrow{\wp'p_{i-1}'}$.

Let $\ell$ be the line containing $\wp'$ and $p'_{i-1}$ and let $v$ be the point on $\ell$ on opposite direction of $\overrightarrow{\wp'p'_{i-1}}$ with respect to $\wp'$, such that the length from $\wp'$ to $v$ is the same as the length from $\wp'$ to $p'_{i-1}$. See Figure~\ref{fig:rad-proof}(right).

We define a triangle $T$ in the following fashion.
\begin{compactitem}
\item If  $\alpha(p_i)$ and $p'_i$ are to the right of $\overrightarrow{\wp'p'_{i-1}}$, then $T=\triangle(p'_{i-1}\wp'\alpha(p_i))$. 
\item If $\alpha(p_i)$ and $p'_i$ are to the left of $\overrightarrow{\wp'p'_{i-1}}$, then $T=\triangle(v\wp'\alpha(p_i))$.
\end{compactitem}

Hence, either $p'_i$ is in $T$ or not. See Figure~\ref{fig:rad-proof}. Again, 
if $p'_i$ is contained in $T$, then from Proposition~\ref{prop:properties}(3) it follows that $\alpha^{-1}(p'_i)$ is contained in $T$, which contradicts that $p_i$ appears immediately after $p_{i-1}$ when sorting $S$ around $\wp$. On the other hand, if $p'_i$ is not in $T$, then the line segment $\wp'p'_i$ intersects $T$ on the edge opposite to $\wp'$. Let $u$ be the intersection point of $\wp'p'_i$ and the edge of $T$ opposite to $\wp'$. The point $u$ is in triangle $T$ and in the line segment $\wp'p'_1$. Thus, by Properties 1 and 3 of Proposition~\ref{prop:properties}, we have that $\alpha^{-1}(u)$ is in $\alpha(T)$ and on the line segment $\wp\alpha^{-1}(p'_i)$, which again contradicts the fact that $p_i$ appears first while ordering $S$ around $\wp$. Therefore, $p'_i=\alpha(p_i)$.
\end{proof}

\subsection{Affine invariant sweep-line ordering}\label{sec:sweep-line}

Let $\mu, \delta$ and $\wp$ be three non-collinear points, such that $\wp$ is in $S$. Consider the following sweep-line order method.

\textsc{SweepLine($S, \mu, \delta, \wp$):} 
Let $\ell^{\perp}$ be the line containing $\mu$ and $\wp$. Let $t$ and $b$ be a farthest point from the line $\ell^{\perp}$ that lies on the same and opposite side of $\delta$ with respect to  $\ell^{\perp}$, respectively.  
Then, sort the vertices in $S$ by sweeping with lines parallel to $\ell^{\perp}$ from $t$ towards $b$. If two vertices $u,v$ lie on the same line parallel to $\ell^{\perp}$, we say that $v$ appears before $u$ if and only if $\delta$ is to the left of $\overrightarrow{\wp\mu}$ and $v$ is to the left of $\overrightarrow{u\mu}$. 

Note that a consequence of Observation~\ref{obs:direct} is that if two vertices of $P$ lie on the same side with respect to $\ell^{\perp}$, then the corresponding vertices in $\alpha(P)$ lie again on the same side with respect to $\alpha(\ell^{\perp})$.

\begin{observation}\label{obs:side}
Let $\ell$ be a line in $\R^2$ and let $v, u$ be two points in $S$. Let $\alpha$ be an affine transformation. If $u$ and $v$ are on the same side of $\ell$, then $\alpha(u)$ and $\alpha(v)$ are on the same side of $\alpha(\ell)$. 
\end{observation}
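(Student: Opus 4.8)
The plan is to reduce Observation~\ref{obs:side} to Observation~\ref{obs:direct}, which already establishes the corresponding statement for the side predicate defined by a \emph{directed} segment. First I would fix two distinct points $w$ and $z$ on $\ell$; such points exist since $\ell$ is a line. The key elementary fact is that ``$u$ and $v$ lie on the same side of the undirected line $\ell$'' is exactly the same condition as ``$u$ and $v$ lie on the same side of the directed segment $\overrightarrow{wz}$'': the side predicate from Section~\ref{sec:defs} is defined through the sign of the cross product $(w-z)\times(w-\cdot)$, which depends only on the line spanned by $w$ and $z$; replacing $w,z$ by another pair of points on $\ell$, or reversing their order, multiplies these cross products by a common nonzero scalar, so the pairwise comparison ${sign}((w-z)\times(w-u))={sign}((w-z)\times(w-v))$ is unaffected.

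Next I would invoke Proposition~\ref{prop:properties}(\ref{lines}): $\alpha$ maps the line $\ell$ to the line $\alpha(\ell)$, and since $\alpha$ is invertible the images $\alpha(w)$ and $\alpha(z)$ are two \emph{distinct} points of $\alpha(\ell)$. Hence $\overrightarrow{\alpha(w)\alpha(z)}$ is a directed segment spanning $\alpha(\ell)$, and by the same elementary observation as above, ``$\alpha(u)$ and $\alpha(v)$ lie on the same side of $\alpha(\ell)$'' coincides with ``$\alpha(u)$ and $\alpha(v)$ lie on the same side of $\overrightarrow{\alpha(w)\alpha(z)}$''.

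Finally, applying Observation~\ref{obs:direct} to the directed segment $\overrightarrow{wz}$ and the points $u,v$ gives that $u,v$ are on the same side of $\overrightarrow{wz}$ if and only if $\alpha(u),\alpha(v)$ are on the same side of $\overrightarrow{\alpha(w)\alpha(z)}$; chaining the three equivalences yields the claim. I do not expect any genuine obstacle here: the only point requiring a little care is the first step, namely checking that the undirected same-side relation is independent of the choice of the two defining points and of their orientation, and this is immediate from the scaling argument just described.
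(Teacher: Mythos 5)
Your proposal is correct and follows essentially the same route as the paper, which simply presents Observation~\ref{obs:side} as a consequence of Observation~\ref{obs:direct} applied to a directed segment spanning $\ell$. You spell out the details the paper leaves implicit (that the same-side predicate is independent of the choice and orientation of the two spanning points), which is a harmless and accurate elaboration.
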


The following theorem shows that such an ordering method is affine invariant when the primitives are affine invariant. See Figure~\ref{fig:sweep-line}. 

  \begin{figure}
\centering
	\includegraphics[scale=1]{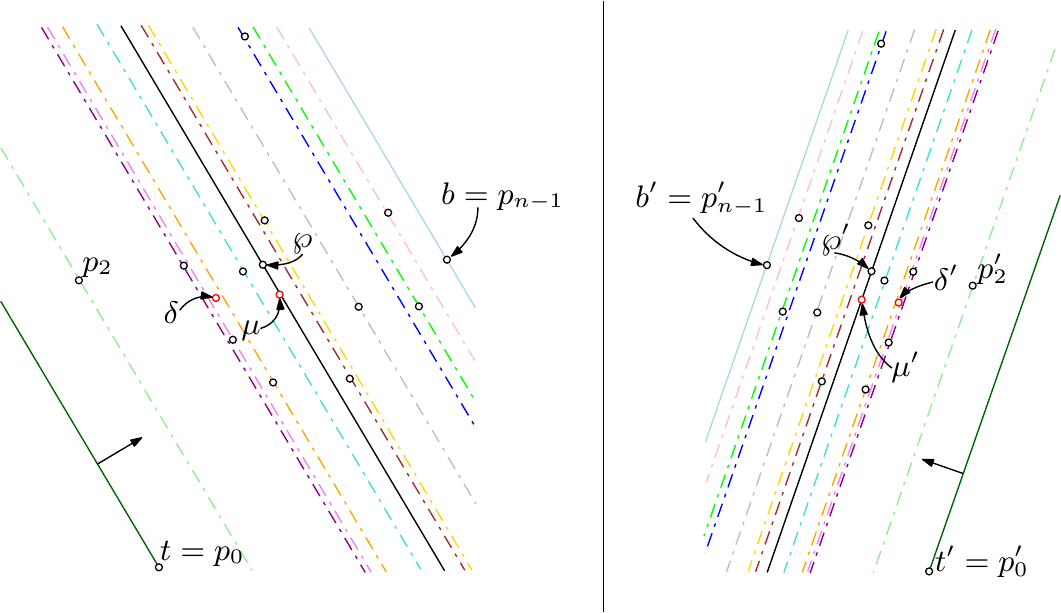}
	\caption{Each colored line is a line parallel to the black line containing the corresponding $f_3(S)=\wp$ and $f_1(S)=\mu$. Each color corresponds to the same line in the transformation $\alpha$. \text{SweepLine($S, \mu, \delta, \wp$)} is affine invariant.} 
	\label{fig:sweep-line}
\end{figure}

\begin{theorem}\label{thm:sweeping}
Let $S$ be a point set in general position and let $f_1(S)=\mu,f_2(S)=\delta$ and $f_3(S)=\wp$ be three affine invariant functions such that $\mu, \delta, \wp$ are three non-collinear points and $\wp \in S$. Then, \textsc{SweepLine($S, \mu, \delta, \wp$)} is an affine invariant sweep-line ordering method of $S$.
\end{theorem}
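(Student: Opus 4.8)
The plan is to mimic the structure of the proof of Theorem~\ref{thm:graham}: reduce everything to the three affine invariant primitives and Observations~\ref{obs:direct} and~\ref{obs:side}, and then argue that the combinatorial decisions made by \textsc{SweepLine} are preserved by $\alpha$. Write $p_0,\dots,p_{n-1}$ for the output of \textsc{SweepLine}($S,\mu,\delta,\wp$) and $p'_0,\dots,p'_{n-1}$ for the output of \textsc{SweepLine}($\alpha(S),\mu',\delta',\wp'$), where $\mu'=\alpha(\mu)$, $\delta'=\alpha(\delta)$, $\wp'=\alpha(\wp)$ by affine invariance of $f_1,f_2,f_3$. The goal is $p'_i=\alpha(p_i)$ for all $i$. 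First I would record the easy structural facts: by Proposition~\ref{prop:properties}(\ref{lines}) the line $\ell^\perp$ through $\mu$ and $\wp$ is mapped to the line through $\mu'$ and $\wp'$, i.e.\ $\alpha(\ell^\perp)=\ell'^\perp$; by Proposition~\ref{prop:properties}(\ref{parallel}) a line parallel to $\ell^\perp$ is mapped to a line parallel to $\ell'^\perp$; and by Observation~\ref{obs:side} the "same side of $\ell^\perp$ as $\delta$" classification is preserved, so the side on which $t$ (resp.\ $b$) lies is mapped correctly.

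The first substantive step is to show that $\alpha(t)=t'$ and $\alpha(b)=b'$, i.e.\ the extreme points defining the sweep direction are preserved. Here I would use Proposition~\ref{prop:properties}(\ref{areas}) (ratio of areas) rather than a metric argument: "being farthest from $\ell^\perp$ on a given side" among the points of $S$ is equivalent to maximizing the area of the triangle spanned by that point and any fixed segment of $\ell^\perp$ — equivalently, for any two candidate points $u,v$ on that side, $u$ is farther than $v$ iff $\mathrm{Area}(\triangle)$ formed with a fixed base on $\ell^\perp$ is larger, and area ratios are affine invariant. (One must note this extremum is unique generically; if not, the tie is broken by the same secondary rule used for points on a common sweep line, and the argument below covers that too.) Hence the "from $t$ towards $b$" sweep direction is mapped by $\alpha$ to the "from $t'$ towards $b'$" direction.

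The second substantive step is the order itself. Two points $u,v\in S$: either they lie on a common line parallel to $\ell^\perp$ or not. If not, the one appearing first is the one whose parallel line is hit first when sweeping from $t$; equivalently, $u$ appears before $v$ iff $u$ and $t$ are on the same side of (or on) the line through $v$ parallel to $\ell^\perp$. This is a "same side of a line parallel to $\ell^\perp$" predicate, which by Proposition~\ref{prop:properties}(\ref{parallel}) and Observation~\ref{obs:side} is preserved by $\alpha$, so the relative sweep order of $\alpha(u)$ and $\alpha(v)$ agrees. If $u,v$ lie on a common parallel line, the tie-break says $v$ precedes $u$ iff $\delta$ is to the left of $\ray{\wp\mu}$ and $v$ is to the left of $\ray{u\mu}$ (and symmetrically on the right); both "$\delta$ left/right of $\ray{\wp\mu}$" and "$v$ left/right of $\ray{u\mu}$" are orientation predicates of triples, which by Observation~\ref{obs:direct} flip by the same sign $\mathrm{sign}(\det M)$, so their conjunction is preserved. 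In particular the two cases (reflection vs.\ no reflection) are handled uniformly because the rule is stated in terms of the position of $\delta$, which tracks the sign of $\det M$. Combining: the total order produced on $\alpha(S)$ coincides edge-by-edge with the $\alpha$-image of the order on $S$, so $p'_i=\alpha(p_i)$ for all $i$, which is exactly affine invariance.

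The main obstacle I expect is the first substantive step — handling the farthest points $t$ and $b$ cleanly — because "farthest from a line" is a metric notion that is \emph{not} affine invariant in general; it must be recast as an area (or parallel-segment-ratio) comparison so that Proposition~\ref{prop:properties}(\ref{areas}) or~(\ref{ratio}) applies, and one must be slightly careful about uniqueness of these extrema and about how ties there are resolved consistently under $\alpha$. Once that reformulation is in place, the rest is a routine induction-free check that each of the finitely many sidedness/orientation predicates used by the algorithm is among those shown invariant by Observations~\ref{obs:direct} and~\ref{obs:side}.
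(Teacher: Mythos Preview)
Your proposal is correct and covers the same content as the paper's proof, but the two arguments are organized differently in both substantive steps. For the extreme points $t,b$, the paper does \emph{not} invoke areas at all: it characterizes the line $\ell^{\perp}_{t}$ purely combinatorially as the unique line parallel to $\ell^{\perp}$ on the $\delta$-side through a point of $S$ such that all remaining points lie on one side, and then uses Observation~\ref{obs:side} directly to conclude $\ell'^{\perp}_{\alpha(t)}=\ell'^{\perp}_{t'}$ (note: only equality of \emph{lines}, not of points, is claimed or needed). Your area-ratio reformulation via Proposition~\ref{prop:properties}(\ref{areas}) is a valid alternative and is the natural way to rescue the metric word ``farthest'', but it is slightly heavier than the paper's supporting-line argument. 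For the sweep order itself, the paper proceeds by induction on the index of the sweep line, pushing each $\alpha(\ell^{\perp}_i)$ to the $i$-th line on the $\alpha$-side using Observation~\ref{obs:side}; your pairwise predicate ``$u$ precedes $v$ iff $u$ and $t$ lie on the same side of the line through $v$ parallel to $\ell^{\perp}$'' is cleaner and avoids the induction entirely, at the cost of implicitly relying on $\alpha(t)$ and $t'$ lying on the same parallel line (which you established in step one). The tie-breaking argument is essentially identical in both: the paper splits explicitly on $\operatorname{sign}(\det M)$, while you phrase it as ``both orientation predicates flip by the same sign'', which is the same content via Observation~\ref{obs:direct}.
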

\begin{proof}
Let $\alpha(x)=Mx+b$ be an affine transformation.  
Let $f_1(\alpha(S))=\mu'$, $f_2(\alpha(S))=\delta', f_3(\alpha(S))=\wp'$. Then,  
$\alpha(\mu)=\mu', \alpha(\wp)=\wp', \alpha(\delta)=\delta'$. Let $p_0, p_1, \ldots, p_{n-1}$ be the resulting order of \textsc{SweepLine}($S, \mu, \delta, \wp$) and let  $p'_0,$ $p'_1, \ldots, p'_{n-1}$ be the resulting order of \textsc{SweepLine}($\alpha(S), \mu', \delta', \wp'$). Let us show that $\alpha(p_i)=p'_i$ for all $i \in \{0, 1, \ldots, n-1\}$.

Let $\ell'^{\perp}$ be the line containing $\mu'$ and $\wp'$. From Proposition~\ref{prop:properties}(1), we have that $\alpha(\ell^{\perp})=\ell'^{\perp}$. We denote by $\ell'^{\perp}_v$ the line parallel to $\ell'^{\perp}$ containing the point $v \in S$. Let $t' \in \alpha(S)$ be a farthest point from $\ell'^{\perp}$ that lies on the same side of $\delta'$ with respect to $\ell'^{\perp}$ and let $b'$ be a farthest point from $\ell'^{\perp}$ that lies on the opposite side of $\delta'$ with respect to $\ell'^{\perp}$. Thus, all points of $\alpha(S)$ that are not on  $\ell'^{\perp}_{t'}$ lie on the same side of $\ell'^{\perp}_{t'}$. Also, all points of $\alpha(S)$ that are not on  $\ell'^{\perp}_{b'}$ lie on the same side of $\ell'^{\perp}_{b'}$. 

On the other hand, $t$ and $\delta$ are on the same side of $\ell^{\perp}$. Thus, by Observation~\ref{obs:side}, it follows that $\alpha(t)$ and $\delta'$ are on the same side of $\ell'^{\perp}$. Since all points of $S$ that are not on  $\ell^{\perp}_{t}$ lie on the same side of $\ell^{\perp}_{t}$, by Observation~\ref{obs:side} it follows that all points of $\alpha(S)$ that are not on  $\ell'^{\perp}_{\alpha(t)}$ lie on the same side of $\ell'^{\perp}_{\alpha(t)}$. Since both $\alpha(t)$ and $t'$ lie on the same side of $\delta'$ with respect to $\ell'^{\perp}$, it follows that both lines $\ell'^{\perp}_{t'}$ and $\ell'^{\perp}_{\alpha(t)}$ lie on the same side with respect to $\ell'^{\perp}$. Finally, since there is exactly one line parallel to $\ell'^{\perp}$ on the same side of $\delta'$ with respect to $\ell'^{\perp}$ that passes through a point of $S$ such that all points of $S$ that are not in such line are on the same side, it follows that
 $\ell'^{\perp}_{\alpha(t)}=\ell'^{\perp}_{t'}$. 
 
 Similarly, since all points of $S$ that are not on  $\ell^{\perp}_{b}$ lie on the same side of $\ell^{\perp}_{b}$, by Observation~\ref{obs:side} it follows that all points of $\alpha(S)$ that are not on  $\ell'^{\perp}_{\alpha(b)}$ lie on the same side of $\ell'^{\perp}_{\alpha(b)}$. In addition, since $b$ is not on $\ell^{\perp}_{t}$, $\alpha(b)$ is not on $\ell'^{\perp}_{t'}$. Therefore, $\ell'^{\perp}_{\alpha(b)}=\ell'^{\perp}_{b'}$. Thus, the initial and the final lines while sweeping $S$ with lines parallel to $\ell^{\perp}$ are mapped to the initial and final lines while sweeping $S'$ with lines parallel to $\ell'^{\perp}$. Assume that the order of the lines parallel to $\ell^{\perp}$ up to the $(i-1)$-th line while sweeping $S$ is preserved under affine transformations. Let $\ell^{\perp}_i$ be the $i$-th line parallel to $\ell^{\perp}$ while sweeping $S$. From Observation~\ref{obs:side} it follows that for each point $v$ of $S$ that lies on the same side of $\ell^{\perp}_i$ as $t$, $\alpha(v)$  lies on the same side of $\alpha(\ell^{\perp}_i)$ as $t'$. Since all points of $S$ that are on the same side  of $\ell^{\perp}_i$ as $t$ have been swept by the first $i-1$ lines parallel to $\ell^{\perp}$, it follows that $\alpha(\ell^{\perp}_i)$ is the $i$-th line parallel to $\ell'^{\perp}$ while sweeping $\alpha(S)$. 

It remains to prove that \textsc{SweepLine($S, \mu, \delta, \wp$)} is affine invariant when two points $v$ and $u$ of $S$ lie on the same line parallel to $\ell^{\perp}$. Let $v$ and $u$ be two points in $S$ that lie on the same line parallel to $\ell^{\perp}$ such that $v$ appears before $u$ while ordering $S$. From Proposition~\ref{prop:properties}(1), $\alpha(u)$ and $\alpha(v)$ lie on the same line parallel to $\ell'^{\perp}$. Assume $\delta$ is to the left of $\overrightarrow{\wp\mu}$, then $v$ is to the left of $\overrightarrow{u\mu}$. Consider the following cases.
 
 \emph{Case 1)} If ${det}(M)>0$, then $\delta'$ is to the left of $\overrightarrow{\wp'\mu'}$ and $\alpha(v)$ is to the left of $\overrightarrow{\alpha(u)\mu'}$. Therefore, $\alpha(v)$ appears before $\alpha(u)$. 
 
 \emph{Case 2)} If ${det}(M)<0$, then $\delta'$ is to the right of $\overrightarrow{\wp'\mu'}$ and $\alpha(v)$ is to the right of $\overrightarrow{\alpha(u)\mu'}$. Therefore, $\alpha(v)$ appears before $\alpha(u)$.
 
 The case when $\delta$ is to the right of $\overrightarrow{\wp\mu}$ is symmetric.
\end{proof}

\section{Applications to affine invariant geometric objects}\label{sec:applications}

In this section we give different applications of the $A_S$-norm and methods given in Section~\ref{sec:sorting} that result in affine invariant algorithms. 

\subsection{Affine invariant algorithms based on an $A_S$-disk}\label{subsec:disk}

In this section we mention different geometric objects of a point set $S$ that satisfy an empty $N$-disk property: for each pair $p_i, p_j \in S$, edge $p_ip_j$ is added if there exists a homothet of the $N$-disk that contains $p_i$ and $p_j$ on its boundary and no other point of $S$ in its interior. Using the fact that the $A_S$-norm is affine invariant, then the empty $A_S$-disk property is also affine invariant. Thus, using the $A_S$-disk gives us a set of affine invariant algorithms for computing more geometric objects.  

Let $N$ be a normed metric and let $S$ be a point set in general position. 
The \emph{$N$-Gabriel graph} of $S$, denoted $GG_{N}[S]$, is defined with $S$ as the vertex set and for each pair of points $u, v \in S$, the edge $uv$ is in $GG_N[S]$ if and only if there exists an $N$-disk of radius $N(u$$-$$v)/2$ containing $u$ and $v$ on its boundary and no other point of $S$. In other words, the edge $uv$ is in $GG_N[S]$ if a smallest $N$-disk containing $u$ and $v$ contains no other point of $S$. If the boundary of the $N$-disk is defined by a smooth convex shape, then the $N$-disk of radius $N(u$$-$$v)/2$ containing $u$ and $v$ is unique.
The \emph{$N$-relative neighborhood graph} of $S$, denoted $RNG_N[S]$, is the graph with vertex set $S$ where for every pair of points $u,v \in S$, the edge $uv$ is in $RNG_N[S]$ if and only if $N(u$$-$$v)\leq\max\{N(u$$-$$w), N(v$$-$$w)\}$ for any $w \in S$. Geometrically, for every pair of points $u,v \in S$, the edge $uv$ is in $RNG_N[S]$ if and only if the intersection of the $N$-disks with radius $N(u-v)$ centered at $u$ and $v$, respectively, does not contain points of $S$ in its interior. 
 The \emph{order-$k$ $N$-Delaunay graph} of $S$, denoted $k$-${DG}_{N}[S]$, is the graph with vertex set $S$ and has an edge $uv$ provided that there exists an  $N$-disk with $u$ and $v$ on its boundary enclosing at most $k$ points of $S$ different from $u$ and $v$. Similarly, one can define the \emph{order-$k$ $N$-Gabriel graph} and the \emph{order-$k$ $N$-relative neighborhood graph} of $S$. 
 
 Since the $A_S$-disks are affine invariant, we obtain the following corollary. 
 
 \begin{corollary}
 Let $S$ be a point set in general position. Then, the following geometric structures are affine invariant: $GG_{A_S}[S]$, ${RNG}_{A_S}[S]$, $k$-${DG}_{A_S}[S]$, $k$-${GG}_{A_S}[S]$, $k$-${RNG}_{A_S}[S]$.
 \end{corollary}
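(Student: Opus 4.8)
The plan is to show that each of the five structures $GG_{A_S}[S]$, $RNG_{A_S}[S]$, $k\text{-}DG_{A_S}[S]$, $k\text{-}GG_{A_S}[S]$, $k\text{-}RNG_{A_S}[S]$ is affine invariant by reducing its defining condition to an \emph{empty-region} condition expressed solely in terms of homothets of the $A_S$-disk, and then invoking the affine invariance of the $A_S$-norm (for each point $v\in S$, $A_S(v-\mu)=A_{\alpha(S)}(\alpha(v)-\mu')$) together with Proposition~\ref{prop:properties}. First I would record the key transformation fact: if $\alpha(x)=Mx+b$ and $S'=\alpha(S)$, then $\alpha$ maps the $A_S$-disk centered at $c$ with radius $r$ bijectively onto the $A_{S'}$-disk centered at $\alpha(c)$ with radius $r$; this follows because $\Sigma'=M\Sigma M^T$ gives $A_{S'}(\alpha(v)-\alpha(c))=(v-c)^TM^T(M\Sigma M^T)^{-1}M(v-c)=(v-c)^T\Sigma^{-1}(v-c)=A_S(v-c)$. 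In particular homothets of the $A_S$-disk map to homothets of the $A_{S'}$-disk, boundary maps to boundary, interior to interior, and containment/exclusion of points of $S$ is preserved since $\alpha$ restricts to a bijection $S\to S'$.

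With that lemma in hand the five cases are essentially uniform. For $k\text{-}DG_{A_S}[S]$: an edge $uv$ is present iff there is an $A_S$-disk with $u,v$ on its boundary enclosing at most $k$ other points of $S$; applying $\alpha$ sends such a disk to an $A_{S'}$-disk with $\alpha(u),\alpha(v)$ on its boundary enclosing exactly the images of those $\le k$ points, hence at most $k$ points of $S'$, and conversely using $\alpha^{-1}$; so $uv\in k\text{-}DG_{A_S}[S]\iff \alpha(u)\alpha(v)\in k\text{-}DG_{A_{S'}}[S']$, which is the required isomorphism under $\alpha$. The case $k=0$ recovers $DT_{A_S}[S]$ but we do not need that; $GG_{A_S}[S]$ and $k\text{-}GG_{A_S}[S]$ are the special sub-case where the witnessing disk is further required to be the (unique, by smoothness of the ellipse) $A_S$-disk of radius $A_S(u-v)/2$ through $u$ and $v$ — and since $\alpha$ maps this specific disk to the analogous specific disk for $S'$ (radii being preserved), the same argument applies verbatim. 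For $RNG_{A_S}[S]$ and its order-$k$ version I would use the geometric characterization already stated in the text: $uv$ is an edge iff the lune, i.e. the intersection of the two $A_S$-disks of radius $A_S(u-v)$ centered at $u$ and at $v$, contains at most $k$ points of $S$ in its interior (with $k=0$ for $RNG$); $\alpha$ maps this lune to the corresponding lune for $\alpha(u),\alpha(v)$ in $S'$ because it maps each of the two disks correctly and preserves intersections, interiors, and the point set, so again the edge sets correspond under $\alpha$.

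The only genuinely delicate point — and the one I would single out as the main obstacle — is making sure the ``witness disk'' in each definition really is a \emph{homothet of the unit $A_S$-disk} and nothing more, so that the transformation lemma applies cleanly; concretely, one must check that the $A_S$-disk of radius $A_S(u-v)/2$ through $u,v$ is uniquely determined (which holds because the boundary is a smooth strictly convex ellipse, as already noted in the excerpt just before the Gabriel-graph definition) and that ``radius'' is measured in the $A_S$-norm so that it is literally preserved by $\alpha$, not merely up to a constant. Once these are pinned down, the proof is a short case analysis; I would write it as a single paragraph treating $k\text{-}DG_{A_S}$ in full, then remark that $GG_{A_S}$ and $k\text{-}GG_{A_S}$ are the restriction to the diametral disk and $RNG_{A_S}$ and $k\text{-}RNG_{A_S}$ are the lune condition, each handled by the identical mapping argument, and conclude that all five structures satisfy $A(G)\simeq A(\alpha(G))$ under $\alpha$, i.e. are affine invariant.
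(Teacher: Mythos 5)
Your proposal is correct and follows the same route as the paper, which derives the corollary directly from the fact that $\alpha$ maps $A_S$-disks to $A_{\alpha(S)}$-disks of the same radius, so that every empty-region condition defining these graphs is preserved; the paper leaves this as an immediate consequence of that fact, whereas you spell out the transformation identity $A_{\alpha(S)}(\alpha(u)-\alpha(v))=A_S(u-v)$ via $\Sigma'=M\Sigma M^T$ and check each of the five structures explicitly. No gaps.
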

 
 There is a hierarchical relation between these structures the we will give in more detail in Subsection~\ref{subsec:norm}. 

We conclude this subsection with the following lemma. 

\begin{lemma}
Any algorithm that computes a geometric object based on an $A_S$-disk is affine invariant.
\end{lemma}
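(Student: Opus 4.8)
The plan is to reduce the claim to the affine invariance of the $A_S$-norm itself, which has already been established, together with Proposition~\ref{prop:properties}. The key observation is that an ``algorithm that computes a geometric object based on an $A_S$-disk'' makes all of its combinatorial decisions by queries of a single type: given a finite collection of points from $S$ and a candidate homothet $D$ of the unit $A_S$-disk, decide which points of $S$ lie inside $D$, on its boundary, or outside. Thus it suffices to show that this incidence/containment predicate is preserved under any invertible affine transformation $\alpha$, in the sense that $p$ lies inside (resp.\ on the boundary of, outside) a homothet $D$ of the unit $A_S$-disk if and only if $\alpha(p)$ lies inside (resp.\ on the boundary of, outside) the corresponding homothet $\alpha(D)$ of the unit $A_{\alpha(S)}$-disk.

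First I would make precise the correspondence between $A_S$-disks and $A_{\alpha(S)}$-disks. Recall from Section~\ref{sec:defs} that the unit $A_S$-disk is the ellipse $\{v : A_S(v-\mu)\le 1\}$ determined by $\Sigma^{-1}$, and an arbitrary $A_S$-disk is a homothet of it: a set of the form $\{v : A_S(v-c)\le r^2\}$ for a center $c$ and radius $r$. In Section~\ref{sec:affine-del} it is shown that for $S'=\alpha(S)$ with $\alpha(v)=Mv+b$ one has $\Sigma'=M\Sigma M^T$, which gives the identity $A_{\alpha(S)}(\alpha(v)-\alpha(c))=A_S(v-c)$ for all $v,c$ (this is exactly the affine invariance of Nielson's norm, $A_S(v)=A_{\alpha(S)}(\alpha(v))$ applied to the difference vector, using that $\alpha(v)-\alpha(c)=M(v-c)$). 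Consequently $\alpha$ maps the $A_S$-disk centered at $c$ with radius $r$ bijectively onto the $A_{\alpha(S)}$-disk centered at $\alpha(c)$ with radius $r$, and it maps interior to interior and boundary to boundary since $\alpha$ is a homeomorphism of $\R^2$. Hence $p$ is inside/on/outside an $A_S$-disk $D$ if and only if $\alpha(p)$ is inside/on/outside the $A_{\alpha(S)}$-disk $\alpha(D)$.

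With the predicate settled, I would finish by an induction on the steps of the algorithm, exactly as in the proofs of Theorems~\ref{thm:graham} and~\ref{thm:sweeping}: run the algorithm in parallel on $S$ and on $\alpha(S)$, and argue that at every step the two runs are in corresponding states, because every branch is decided by empty-$A_S$-disk queries whose answers agree under $\alpha$ by the previous paragraph, and every edge the algorithm outputs, say $p_ip_j$, corresponds under $\alpha$ to the edge $\alpha(p_i)\alpha(p_j)$ (using Proposition~\ref{prop:properties}(\ref{lines}) so that line segments are mapped to line segments). Therefore the output geometric graph $A(S)$ satisfies $A(S)\simeq A(\alpha(S))$ under $\alpha$, which is the definition of affine invariance.

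The only genuinely delicate point, and the one I would be careful to state cleanly rather than gloss over, is the formalization of ``based on an $A_S$-disk'': the lemma is really a meta-statement, so the content lies in isolating the right predicate (empty-homothet queries, plus the linear primitives already covered by Proposition~\ref{prop:properties}) and observing that this is the \emph{only} geometric information such an algorithm consumes. Once that modeling choice is fixed, the mathematics is just the norm-invariance identity $A_{\alpha(S)}(M(v-c))=A_S(v-c)$ combined with the fact that $\alpha$ is a bijection preserving interiors and boundaries; there is no further obstacle.
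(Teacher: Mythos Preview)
The paper does not actually give a proof of this lemma; it is stated as a summarizing remark at the end of Section~\ref{subsec:disk}, with the only justification being the sentence at the start of that subsection: ``Using the fact that the $A_S$-norm is affine invariant, then the empty $A_S$-disk property is also affine invariant.'' Your proposal is a correct and careful formalization of exactly this idea: you make explicit the identity $A_{\alpha(S)}(M(v-c))=A_S(v-c)$ (which follows from $\Sigma'=M\Sigma M^T$ as noted in Section~\ref{sec:affine-del}), deduce that $\alpha$ carries each $A_S$-disk bijectively to an $A_{\alpha(S)}$-disk preserving interior/boundary/exterior, and then run the algorithm in parallel on $S$ and $\alpha(S)$. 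This is the same approach the paper gestures at, only made rigorous; your final paragraph correctly identifies that the real content of such a meta-lemma lies in fixing what ``based on an $A_S$-disk'' means.
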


\subsection{Affine invariant algorithms based on an $A_S$-norm}\label{subsec:norm}

Consider a norm $N$. In this section we mention different geometric objects of a point set $S$ constructed using the $N$-length of all possible $n\choose2$ edges in $S$. Using the fact that the $A_S$-norm is affine invariant, the order of the edges between each pair of points in $S$ given by the $A_S$-norm is also affine invariant. Therefore, using such ordering gives us a set of affine invariant algorithms for computing geometric objects. 

The first object is the \emph{$N$-minimum weight triangulation} of $S$, denoted ${MWT}_N[S]$, which minimizes the sum of the $N$-length of its edges. The \emph{$N$-Greedy triangulation} of $S$, denoted ${\emph{GreedyT}}_{N}[S]$ is a triangulation that adds at each step an edge in strict increasing order of the $N$-length of the edges provided that the new edge does not intersect in the interior of a previously inserted edge. 
The \emph{$N$-minimum spanning tree} of $S$, denoted $MST_N[S]$, is a spanning tree of $G$ with minimum total edge $N$-length. The \emph{$N$-closest pair} of $S$ computes the edge with smallest $N$-length. The \emph{$k$-$N$-nearest neighbor graph} of $S$, denoted $k$-$NNG_N[S]$, is the graph with vertex set $S$ and an edge $uv \in k$-$NNG_N[S]$ provided that the $N$-distance between $u$ and $v$ is the $k$-th smallest $N$-distance from $u$ to any other point in $S$. 

We show the following lemma using the fact that the $A_S$-norm is affine invariant.

\begin{lemma}\label{lemma:edge-length-order}
There exists an affine invariant order of the $n\choose2$ edges in any point set in general position.
\end{lemma}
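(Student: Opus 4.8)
The plan is to show that ordering the $\binom{n}{2}$ edges of $S$ by their $A_S$-length is an affine invariant procedure, and that (after breaking ties in general position) this order is unique. The key tool is the affine invariance of the $A_S$-norm already established via Nielson's construction: for any edge $uv$ and any invertible affine transformation $\alpha$, we have $A_S(u-v) = A_{\alpha(S)}(\alpha(u)-\alpha(v))$, since the $A_S$-norm of a difference vector is preserved (this follows from $\Sigma' = M\Sigma M^T$ together with $\alpha(u)-\alpha(v) = M(u-v)$, which gives $(\alpha(u)-\alpha(v))^T \Sigma'^{-1}(\alpha(u)-\alpha(v)) = (u-v)^T M^T (M^T)^{-1}\Sigma^{-1}M^{-1} M (u-v) = (u-v)^T\Sigma^{-1}(u-v)$).

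First I would define the order explicitly: list the edges $e_1, e_2, \ldots, e_{\binom{n}{2}}$ of the complete geometric graph on $S$ so that $A_S(e_1) \le A_S(e_2) \le \cdots$, where $A_S(uv)$ abbreviates $A_S(u-v)$. Next I would invoke general position to guarantee that all these $A_S$-lengths are pairwise distinct, so the order is strict and hence canonically determined; if one prefers not to strengthen the general position hypothesis, one can instead append a secondary affine invariant tie-break using, e.g., the area-based order on edges described earlier in Section~\ref{subsec:norm} (the area of the triangle formed by an edge's endpoints and $\mu$), which is also affine invariant by Proposition~\ref{prop:properties}(\ref{areas},\ref{mean}). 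Either way the order becomes well defined.

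Then I would verify invariance: given $\alpha$, let $e'_i = \alpha(e_i)$ be the corresponding edges of $\alpha(S)$. Since $\alpha$ is a bijection on point pairs, $\{e'_1, \ldots, e'_{\binom{n}{2}}\}$ is exactly the edge set of the complete graph on $\alpha(S)$. Because $A_{\alpha(S)}(e'_i) = A_S(e_i)$ for every $i$, the sequence $A_{\alpha(S)}(e'_1) \le A_{\alpha(S)}(e'_2) \le \cdots$ is exactly the sorted order of edge lengths in $\alpha(S)$, so the $i$-th edge in the $A_S$-order of $S$ is mapped by $\alpha$ to the $i$-th edge in the $A_{\alpha(S)}$-order of $\alpha(S)$. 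This is precisely the statement that the edge order is affine invariant.

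The proof is essentially a short bookkeeping argument, so there is no real obstacle — the one point requiring a little care is establishing that the $A_S$-norm of a \emph{difference} vector (rather than of a point measured from the mean) is affine invariant, i.e. that the translation part $b$ of $\alpha$ cancels and only $M$ acts, combined with the transformation law $\Sigma' = M\Sigma M^T$; this is the computation sketched above and is the crux of why the whole hierarchy in Section~\ref{subsec:norm} goes through.
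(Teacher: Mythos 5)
Your proof is correct and follows essentially the same route as the paper's: sort the edges by $A_S$-length and observe that $A_S(u-v)=A_{\alpha(S)}(\alpha(u)-\alpha(v))$ forces the sorted order to be preserved under $\alpha$. The only differences are that you spell out the computation behind the norm's invariance on difference vectors (which the paper takes from Nielson) and you handle ties explicitly, both of which are welcome but do not change the argument.
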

\begin{proof}
Consider an affine transformation $\alpha$, since $A_{S}(u-v)=A_{\alpha(S)}(\alpha(u)-\alpha(v))$, we obtain that if $A_{S}(u-v)\leq A_{S}(u'-v')$, then $A_{\alpha(S)}(\alpha(u)-\alpha(v))\leq A_{\alpha(S)}(\alpha(u')-\alpha(v'))$. Hence, the increasing and decreasing order of the $A_S$-length of the edges in $S$ is affine invariant, i.e., the $i$-th $A_S$-largest (resp. $A_S$-shortest) edge in $S$ is mapped by $\alpha$ to the $i$-th $A_{\alpha(S)}$-largest (resp. $A_{\alpha(S)}$-shortest) edge of $\alpha(S)$.
\end{proof}

Lemma~\ref{lemma:edge-length-order} implies the following corollaries.

\begin{corollary}
Any greedy algorithm defined by the $A_{S}$-length of the edges of $S$ is affine invariant.
\end{corollary}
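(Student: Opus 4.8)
The plan is to argue by induction on the steps of the greedy process, using Lemma~\ref{lemma:edge-length-order} to synchronize the order in which edges are examined for $S$ and for $\alpha(S)$, and using Proposition~\ref{prop:properties} to argue that each acceptance/rejection decision is preserved by $\alpha$. First I would fix an invertible affine transformation $\alpha$ and let $e_1, e_2, \ldots, e_{\binom{n}{2}}$ be the edges of the complete graph on $S$ listed in non-decreasing $A_S$-length. By Lemma~\ref{lemma:edge-length-order}, applying $\alpha$ to this list yields exactly the edges of the complete graph on $\alpha(S)$ listed in non-decreasing $A_{\alpha(S)}$-length; that is, $\alpha(e_i)$ is the $i$-th edge examined when the same greedy procedure is run on $\alpha(S)$. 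This reduces the problem to showing that the subgraph maintained by the algorithm stays ``the same up to $\alpha$'' throughout.

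Next I would set up the induction. Let $G_i$ (resp.\ $G_i'$) be the subgraph of edges accepted among $e_1, \ldots, e_i$ when the algorithm is run on $S$ (resp.\ among $\alpha(e_1),\ldots,\alpha(e_i)$ when run on $\alpha(S)$). The claim is $G_i' = \alpha(G_i)$ for all $i$, with base case $G_0 = G_0' = \emptyset$. For the inductive step, the greedy decision on $e_{i+1}$ depends only on $e_{i+1}$ together with the partial subgraph $G_i$, through a geometric predicate that is itself affine invariant: for the $N$-greedy triangulation the predicate is ``$e_{i+1}$ does not cross any edge of $G_i$ in its interior'', which is preserved by $\alpha$ because $\alpha$ maps segments to segments and preserves proper intersections (Proposition~\ref{prop:properties}(\ref{lines})--(\ref{parallel})); for a Kruskal-type $A_S$-minimum spanning tree the predicate is ``the endpoints of $e_{i+1}$ lie in distinct components of $G_i$'', which depends only on the combinatorial structure of $G_i$ and hence is trivially transported by the graph isomorphism $\alpha|_{G_i}$. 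In either case, $e_{i+1}$ is accepted for $S$ if and only if $\alpha(e_{i+1})$ is accepted for $\alpha(S)$, giving $G_{i+1}' = \alpha(G_{i+1})$. Taking $i = \binom{n}{2}$ shows the final outputs satisfy $A(\alpha(S)) = \alpha(A(S))$, i.e.\ $A(S) \simeq A(\alpha(S))$ under $\alpha$.

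The main point to be careful about — and the only real ``obstacle'' — is making precise what ``greedy algorithm defined by the $A_S$-length'' means: the argument needs the acceptance rule at each step to be a function solely of the current edge and of affine-invariant data of the already-built subgraph. I would therefore state this as a mild standing assumption (satisfied by all the examples in Section~\ref{subsec:norm}: $\mathit{GreedyT}_{A_S}$, $MST_{A_S}$, the $A_S$-closest pair, and the $k$-$NNG_{A_S}$), and then the induction above goes through verbatim. Everything else is routine: the synchronization of the edge order is exactly Lemma~\ref{lemma:edge-length-order}, and the step-wise preservation of the predicate is a direct application of Proposition~\ref{prop:properties}.
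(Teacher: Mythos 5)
Your proof is correct and follows the same route the paper intends: the paper states this corollary as an immediate consequence of Lemma~\ref{lemma:edge-length-order} with no written proof, and your induction simply makes explicit the synchronization of the edge order and the step-by-step preservation of the acceptance predicate. Your remark that one must additionally assume the acceptance rule is an affine-invariant function of the current edge and the partial subgraph is a genuine (and correct) clarification of a hypothesis the paper leaves implicit.
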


 \begin{corollary}
 Let $S$ be a point set in general position. The following objects are affine invariant: ${MWT}_{A_S}[S], {\emph{GreedyT}}_{A_S}[S], MST_S[S],$ and $k$-$NNG_N[S]$.
 \end{corollary}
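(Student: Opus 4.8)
The final statement to prove is the corollary asserting that $MWT_{A_S}[S]$, $\mathrm{GreedyT}_{A_S}[S]$, $MST_S[S]$, and $k$-$NNG_N[S]$ are affine invariant, given Lemma~\ref{lemma:edge-length-order} (the existence of an affine invariant ordering of the $\binom{n}{2}$ edges).

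\medskip

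\noindent\textbf{Proof plan.}
The plan is to treat each of the four structures in turn and, in each case, show that the structure is completely determined by (i) the vertex set $S$ together with (ii) the total order on the $\binom{n}{2}$ pairs of points induced by the $A_S$-norm, and possibly (iii) the order type (for the geometric non-crossing condition in the greedy triangulation), all of which are affine invariant. First I would recall that by Lemma~\ref{lemma:edge-length-order} the map $\alpha$ carries the $i$-th $A_S$-shortest edge of $S$ to the $i$-th $A_{\alpha(S)}$-shortest edge of $\alpha(S)$, and by Proposition~\ref{prop:properties}(\ref{lines}) and the order-type discussion in Section~\ref{sec:defs}, $\alpha$ preserves collinearity and preserves or uniformly reverses orientations. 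These two facts are the only inputs needed.

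\medskip

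\noindent\textbf{Key steps.}
For $MST_{A_S}[S]$: run Kruskal's algorithm, which inspects edges in increasing order of $A_S$-length and adds an edge exactly when it does not close a cycle in the forest built so far. Since the edge order is affine invariant (Lemma~\ref{lemma:edge-length-order}) and the acyclicity test depends only on the combinatorial forest, a straightforward induction on the number of processed edges shows that $\alpha$ maps the $i$-th edge examined for $S$ to the $i$-th edge examined for $\alpha(S)$ and that it is accepted for $S$ iff it is accepted for $\alpha(S)$; hence $\alpha(MST_{A_S}[S]) = MST_{A_{\alpha(S)}}[\alpha(S)]$. (Here I would note general position guarantees distinct $A_S$-lengths so the MST is unique and Kruskal is deterministic.) For the $k$-$NNG_{A_S}[S]$: the edge $uv$ is present iff $A_S(u-v)$ is the $k$-th smallest among $A_S$-distances from $u$; since affine transformations preserve, for each fixed $u$, the ranking of $A_S$-distances from $u$ (again Lemma~\ref{lemma:edge-length-order} restricted to pairs with one endpoint $u$), the $k$-th nearest neighbor of $u$ maps to the $k$-th nearest neighbor of $\alpha(u)$. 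For $MWT_{A_S}[S]$: any triangulation $T$ of $S$ is mapped by $\alpha$ to a triangulation $\alpha(T)$ of $\alpha(S)$ (Proposition~\ref{prop:properties}(\ref{lines}),(\ref{polygons})), and its total $A_S$-weight equals the total $A_{\alpha(S)}$-weight of $\alpha(T)$ because $A_S(u-v) = A_{\alpha(S)}(\alpha(u)-\alpha(v))$ edge by edge; since $\alpha$ is a bijection between the triangulations of $S$ and those of $\alpha(S)$ preserving total weight, a weight-minimizer maps to a weight-minimizer (using general position, or a fixed tie-breaking rule, to keep it well-defined). For $\mathrm{GreedyT}_{A_S}[S]$: process edges in increasing $A_S$-order, inserting an edge iff it does not cross a previously inserted one; the crossing predicate for two segments is decidable from the order type of their four endpoints, which $\alpha$ preserves up to a global sign (Section~\ref{sec:defs}), and a segment $ab$ crosses $cd$ iff $\alpha(a)\alpha(b)$ crosses $\alpha(c)\alpha(d)$ — so the same induction as for Kruskal applies.

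\medskip

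\noindent\textbf{Main obstacle.}
The routine part is the induction arguments; the only genuine subtlety is well-definedness of $MWT_{A_S}$ and $\mathrm{GreedyT}_{A_S}$ when two candidate triangulations have equal weight or two edges have equal length — but general position (distinct $A_S$-lengths of all $\binom{n}{2}$ edges, no three collinear points) rules this out for the greedy construction, and for $MWT$ one either invokes general position plus a secondary tie-break (e.g.\ lexicographic on the sorted edge-rank vector, which is itself affine invariant by Lemma~\ref{lemma:edge-length-order}) or simply states the claim up to the choice of optimum. I would make this tie-breaking remark explicit in one sentence and then the four verifications are each two or three lines.
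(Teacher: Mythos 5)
Your proposal is correct and follows essentially the same route as the paper: the paper states this corollary as an immediate consequence of Lemma~\ref{lemma:edge-length-order} (the affine invariance of the $A_S$-length ordering of the $\binom{n}{2}$ edges), together with the fact that each structure is determined by that ordering plus affine-invariant incidence/orientation predicates, and gives no further argument. You simply write out the per-structure inductions and the tie-breaking caveat for $MWT$ that the paper leaves implicit.
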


Consider an $N$-norm such that boundary of the $N$-disk defines a convex shape in the plane. Aurenhammer and Paulini~\cite{aurenhammer2014} proved the following theorem.

\begin{theorem}[\cite{aurenhammer2014}]\label{thm:aurenhammer}
 For point set in the plane and any $N$-norm with convex $N$-disk: 
 
 $MST_N[S] \subseteq RNG_N[S] \subseteq GG_N[S] \subseteq DT_N[S]$. 
 \end{theorem}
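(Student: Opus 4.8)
The plan is to establish the three set inclusions separately --- namely $GG_N[S]\subseteq DT_N[S]$, $RNG_N[S]\subseteq GG_N[S]$, and $MST_N[S]\subseteq RNG_N[S]$ --- and then chain them; each proof uses little more than the triangle inequality for $N$ and the (automatic) convexity of $N$-disks. For $u,v\in S$ write $r_{uv}=N(u-v)/2$ and $c_{uv}=\frac12(u+v)$; by homogeneity and symmetry of a norm, $N(u-c_{uv})=N(v-c_{uv})=r_{uv}$, so the ``midpoint disk'' $D_{uv}$ of radius $r_{uv}$ centered at $c_{uv}$ always carries both $u$ and $v$ on its boundary, and it will serve as the canonical witness in the two geometric inclusions. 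The inclusion $GG_N[S]\subseteq DT_N[S]$ is then immediate: an $N$-disk of radius $r_{uv}$ that has $u,v$ on its boundary and no other point of $S$ inside is, by definition, a witness for $uv$ being an edge of $DT_N[S]$.

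\emph{For $RNG_N[S]\subseteq GG_N[S]$} I would argue by contraposition. Suppose $uv\notin GG_N[S]$; then no radius-$r_{uv}$ disk through $u$ and $v$ is empty, so in particular $D_{uv}$ contains some $w\in S\sm\{u,v\}$, and the triangle inequality gives
\[ N(u-w)\le N(u-c_{uv})+N(c_{uv}-w)\le r_{uv}+r_{uv}=N(u-v), \]
and symmetrically $N(v-w)\le N(u-v)$. When the $N$-disk is strictly convex --- in particular for every $A_S$-disk --- these inequalities are strict whenever $w\ne u,v$, so $\max\{N(u-w),N(v-w)\}<N(u-v)$, i.e.\ $w$ lies in the open intersection of the two $N$-disks of radius $N(u-v)$ about $u$ and about $v$; hence $uv\notin RNG_N[S]$, as desired.

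\emph{For $MST_N[S]\subseteq RNG_N[S]$} I would use the standard exchange argument. Let $T$ be an $N$-minimum spanning tree, let $uv\in T$, and suppose $uv\notin RNG_N[S]$, so there is $w\in S$ with $N(u-w)<N(u-v)$ and $N(v-w)<N(u-v)$. Deleting $uv$ splits $T$ into components $T_u\ni u$ and $T_v\ni v$; the point $w$ lies in exactly one of them, say $T_v$ without loss of generality. Then $T-uv+uw$ is again a spanning tree whose total $N$-length is strictly less than that of $T$, contradicting minimality; if $w\in T_u$ instead, the edge $vw$ gives the same contradiction. Concatenating the three inclusions then yields $MST_N[S]\subseteq RNG_N[S]\subseteq GG_N[S]\subseteq DT_N[S]$.

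The exchange argument and the inclusion $GG_N[S]\subseteq DT_N[S]$ are routine, and $RNG_N[S]\subseteq GG_N[S]$ is clean once the $N$-disk is strictly convex, since equality in the triangle inequality $N(u-w)\le N(u-c_{uv})+N(c_{uv}-w)$ would force $u-c_{uv}$ and $c_{uv}-w$ to be positively proportional, which together with $N(c_{uv}-w)=N(u-c_{uv})$ forces $w=v$. The step I expect to be the real obstacle is the fully general case of a convex but not strictly convex $N$-disk (for instance the $\ell_\infty$ ball): there the radius-$r_{uv}$ disks through $u$ and $v$ form a one-parameter family rather than the singleton $\{D_{uv}\}$, and proving $uv\in RNG_N[S]\Rightarrow uv\in GG_N[S]$ then requires exhibiting \emph{some} empty member of that family, which is exactly the content of the argument of Aurenhammer and Paulini~\cite{aurenhammer2014} that we are invoking. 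Since $N$ is always an $A_S$-norm here, whose disk is a strictly convex ellipse, the clean version above covers all of our applications.
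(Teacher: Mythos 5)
The paper does not actually prove this statement: it is imported from Aurenhammer and Paulini~\cite{aurenhammer2014}, with only the remark that their proof ``follows from the definition of each structure.'' Your proposal therefore supplies an argument where the paper supplies a citation, and the argument is the standard one and is correct in the regime the paper needs. The decomposition into three inclusions, the exchange argument for $MST_N[S]\subseteq RNG_N[S]$, and the lune computation for $RNG_N[S]\subseteq GG_N[S]$ via the midpoint disk $D_{uv}$ are all sound; in particular your equality-case analysis correctly isolates strict convexity as the hypothesis that upgrades $N(u-w)\le N(u-v)$ to a strict inequality, which is exactly what the paper's definition of $RNG_N$ (stated with ``$\le$'' over all $w$) requires in order to conclude $uv\notin RNG_N[S]$. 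You are also right about where the only genuine difficulty lies: for a convex but not strictly convex unit disk the Gabriel witness is no longer the unique midpoint disk, and one must produce an empty member of the whole one-parameter family of radius-$N(u-v)/2$ disks through $u$ and $v$ --- that is the part genuinely owed to the cited reference, and it never arises here because every $A_S$-disk is an ellipse. One small caution: the paper defines $DT_N[S]$ via empty disks circumscribing \emph{triangles}, not via empty disks witnessing edges, so your ``immediate'' step $GG_N[S]\subseteq DT_N[S]$ implicitly invokes the standard equivalence between the two notions (grow the empty disk through $u$ and $v$ within the pencil of disks keeping $u,v$ on the boundary until it reaches a third point); for points in general position this is routine, but it is a step rather than a tautology and deserves a sentence.
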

 
Their proof follows from the definition of each structure and it can be extended for  $MST_N[S] \subseteq k$-$RNG_N[S] \subseteq k$-$GG_N[S] \subseteq k$-$DT_N[S]$. Thus, ${MST}_{A_S}[S]\subseteq k$-${RNG}_{A_S}[S]\subseteq k$-${GG}_{A_S}[S]\subseteq k$-${DT}_{A_S}[S]$.

\subsection{Affine invariant geometric objects using radial ordering}

In this Section we give two applications of Theorem~\ref{thm:graham} for computing affine invariant triangulations of a point set and a method that computes an affine invariant quadrangulation of a point set.

\subsubsection{An affine invariant Graham triangulation}\label{sec:graham}
One of the most popular algorithms for computing the convex hull of a point set $S$ is Graham's scan~\cite{graham1972efficient}. A nice property of this algorithm is that a modification of the algorithm can also produce a triangulation~\cite{toussaint1982convex}, sometimes called \emph{Graham's triangulation}~\cite{monroy2005graham}. In this section we present an affine invariant version of Graham's scan using the sorting method \textsc{RadiallySort}.

Our method is based on the algorithm for Graham triangulation~\cite{toussaint1982convex}. 
 Informally, the method consists in first choosing a point $p$ and then radially sorting the remaining points around $p$, say $p_1, \ldots, p_{n-1}$. Once the points are sorted, the algorithm adds edge $pp_i$ for all $i\in \{1, \ldots, n-1\}$. Finally, the algorithm visits each remaining point $p_i$ in order, and adds edge $p_ip_j$ if and only if  $p_j$ is visible~\footnote{A point $p$ is \emph{visible} from $p$ in a graph $G$ if one of the following holds: (a) the edge $pq$ is in $G$, or (b) the line segment $pq$ does not cross any edge of $G$ in its interior.}  
from $p_i$, for all $j>i$ (see~\cite{toussaint1982convex} for details and proof of correctness).

The Graham triangulation can be computed in $O(n)$ time when $S$ is given radially ordered. Moreover, since the edges of the triangulation are added according to the radial sort of $S$, it follows that if the radial order is affine invariant, then the triangulation is affine invariant.

The following result is an implication of Theorem~\ref{thm:graham}. 

\begin{corollary}
There exists an affine invariant Graham triangulation for any point set $S$ in general position that runs in $O(n)$ time after sorting.
\end{corollary}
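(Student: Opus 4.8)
The plan is to combine the affine invariance of the radial ordering (Theorem~\ref{thm:graham}) with the fact that Graham's triangulation is determined purely by that radial ordering together with affine-invariant visibility tests. First I would invoke Theorem~\ref{thm:affine_functions} to fix three affine invariant functions $f_1,f_2,f_3$ producing the non-collinear points $\mu,\delta,\wp$ with $\wp\in S$, and then run \textsc{RadiallySort}$(S,\mu,\delta,\wp)$ to obtain a radial order $p_0=\wp,p_1,\dots,p_{n-1}$ which, by Theorem~\ref{thm:graham}, satisfies $\alpha(p_i)=p'_i$ for the analogously computed order $p'_0,\dots,p'_{n-1}$ of $\alpha(S)$, for every invertible affine transformation $\alpha$.

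Next I would run the Graham triangulation algorithm of Toussaint~\cite{toussaint1982convex} on this ordered point set: add all edges $\wp p_i$, then process $p_1,\dots,p_{n-1}$ in order, adding $p_ip_j$ (for $j>i$) whenever $p_j$ is visible from $p_i$ in the graph built so far. The key observation is that every decision made by this algorithm reduces to (a) comparisons in the fixed radial order, which are affine invariant by the previous paragraph, and (b) visibility queries, which amount to testing whether a segment crosses an already-inserted segment, i.e.\ to orientation/side tests of the kind handled by Observation~\ref{obs:direct}; since affine maps preserve segment crossings (Proposition~\ref{prop:properties}(\ref{lines}),(\ref{polygons})), the same edges are inserted for $S$ and for $\alpha(S)$ up to $\alpha$. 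I would argue this by induction on the insertion step, using the inductive hypothesis that after processing $p_0,\dots,p_{i-1}$ the partial triangulation $G_{i-1}$ on $S$ maps under $\alpha$ exactly to the partial triangulation $G'_{i-1}$ on $\alpha(S)$; then $p_j$ is visible from $p_i$ in $G_{i-1}$ iff $\alpha(p_j)=p'_j$ is visible from $\alpha(p_i)=p'_i$ in $G'_{i-1}$, so the same edge set is added at step $i$. Hence $\triangle(p_ip_jp_k)$ is in the output for $S$ iff $\triangle(\alpha(p_i)\alpha(p_j)\alpha(p_k))$ is in the output for $\alpha(S)$, which is exactly affine invariance of the triangulation.

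For the running time, once $S$ is given in radial order, Toussaint's algorithm runs in $O(n)$ time (each point is made visible/occluded a bounded number of times amortized), and computing the affine invariant primitives $f_1,f_2,f_3$ plus the radial sort is a one-time preprocessing cost; so the triangulation phase is $O(n)$ after sorting, giving the stated bound.

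The main obstacle I expect is purely expository rather than mathematical: one must be careful that the orientation convention (clockwise vs.\ counter-clockwise) chosen inside \textsc{RadiallySort} via the side of $\delta$ is consistently threaded through Toussaint's visibility/ear-removal logic, so that a reflection ($\det(M)<0$) does not silently reverse which of two candidate edges the algorithm picks; Observation~\ref{obs:direct} and the $\det(M)$ sign-tracking used in the proof of Theorem~\ref{thm:sweeping} are exactly what is needed to check this, but it is the step that requires the most care. Everything else follows from the fact that Graham's triangulation is a deterministic function of the radial order and a sequence of crossing tests, both of which are already established to be affine invariant.
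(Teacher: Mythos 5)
Your proposal matches the paper's proof in all essentials: both reduce the claim to the affine invariance of the radial order (Theorem~\ref{thm:graham}) plus the fact that visibility in the partial graph is a segment-crossing condition preserved by Proposition~\ref{prop:properties}(\ref{lines}); the paper phrases the visibility step as a minimal-counterexample contradiction (first index $i$ with $p'_ip'_j\notin T(S')$ forces a crossing edge $p'_kp'_t$ with $k<i$, which pulls back to a crossing in $T(S)$), while your induction on the insertion step is the same argument stated forwards. This is essentially the paper's approach and is correct.
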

\begin{proof}
Let $S$ be a point set in general position, and let $\alpha(S)=S'$ be an affine transformation of $S$. Let $f_1(S)=\mu,f_2(S)=\delta$ and $f_3(S)=\wp$ be three affine invariant functions such that $\mu, \delta, \wp$ are three non-collinear points and $\wp \in S$. Hence, $f_1(S')=\alpha(\mu), f_2(S')=\alpha(\delta)$, and $f_3(S')=\alpha(\wp)$. Let \textsc{RadiallySort($S, \mu, \delta, \wp$)}$=\{p_0=\wp, p_1, \ldots, p_{n-1}\}$ and \textsc{RadiallySort($S', \alpha(\mu),$ $\alpha(\delta), \alpha(\wp)$)} $=\{p'_0=\alpha(\wp), p'_1, \ldots, p'_{n-1}\}$. Let $T(S)$ and $T(S')$ be the resulting Graham triangulations of the radially sorted $S$ and $S'$, respectively. Note that it remains to show that for each edge $p_ip_j \in T(S)$ the edge $\alpha(p_i)\alpha(p_j) \in T(S')$: We have that  if $\triangle(p_ip_jp_k) \in T(S)$, then the edges of $\triangle(\alpha(p_i)\alpha(p_j)\alpha(p_k))$ are in  $T(S')$. Moreover, by Proposition~\ref{prop:properties}(\ref{polygons}), no point of $S$ is mapped in the interior of $\triangle(\alpha(p_i)\alpha(p_j)\alpha(p_k))$. Thus, $\triangle(\alpha(p_i)\alpha(p_j)\alpha(p_k)) \in T(S')$. Hence, let us show that for each edge $p_ip_j \in T(S)$, the edge $\alpha(p_i)\alpha(p_j) \in T(S')$. Assume $0\leq i<j\leq n-1$. We know by Theorem~\ref{thm:graham} that $p'_i=\alpha(p_i)$ and $p'_j=\alpha(p_j)$. So the algorithm visits each point in an affine invariant order. If $i=0$, then by construction $p'_0p'_j \in T(S')$. For the sake of a contradiction suppose $i>0$ is the first index such that $p'_ip'_j \notin T(S')$. Then, when visiting $p'_i$ in the algorithm, $p'_j$ is not visible to $p'_i$. Hence, there exists an edge $p'_{k}p'_t \in T(S')$ that intersects $p'_ip'_j$ in its interior, such that $k<i$. Thus, $p_kp_t \in T(S)$. By Proposition~\ref{prop:properties}(\ref{lines}) line segments are mapped to line segments, thus $p_ip_j$ intersects with $p_kp_t$, a contradiction. Therefore, $p'_ip'_j \in T(S')$.

The running time follows from the algorithms of Avis and Toussaint~\cite{toussaint1982convex}. 
\end{proof}

\subsubsection{Affine invariant Hamiltonian triangulations}\label{subsec:hamiltonian}

When the point set $S$ has at least one point in the interior of its convex hull, then $S$ can be triangulated by the \emph{insertion method}, which consists of computing the convex hull of $S$ and then inserting points from the interior in arbitrary order. Every time a point $v$ is inserted, the edges connecting
 $v$ with the points defining the face that contains $v$ are added. 
 
 Applying the insertion method to $S$ such that the interior points in the convex hull of $S$ are inserted in the order given by \textsc{RadiallySort} results in an affine invariant algorithm that computes a triangulation of $S$. 
 
 \begin{corollary}
 There exists an affine invariant insertion method that computes a triangulation for any point set in general position with at least one point in the interior of its convex hull.
 \end{corollary}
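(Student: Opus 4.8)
The plan is to show that the insertion method, when interior points are processed in the order given by \textsc{RadiallySort}, is affine invariant by combining three ingredients already established: the affine invariance of the convex hull (Proposition~\ref{prop:properties}(\ref{polygons})), the affine invariance of the radial ordering (Theorem~\ref{thm:graham}), and the invariance of point-in-triangle containment (Proposition~\ref{prop:properties}(\ref{polygons}), together with the line-preservation property (\ref{lines})). Let $S$ be in general position with at least one interior point, and let $\alpha(x)=Mx+b$ be an invertible affine transformation with $S'=\alpha(S)$. First I would fix affine invariant functions $f_1,f_2,f_3$ as in Theorem~\ref{thm:affine_functions}, so that $\wp,\mu,\delta$ are carried by $\alpha$ to the corresponding primitives of $S'$; then the radial order $p_0,p_1,\dots,p_{m-1}$ of the interior points of $S$ produced by \textsc{RadiallySort} satisfies $\alpha(p_i)=p'_i$ by Theorem~\ref{thm:graham} applied to the interior point set (one must note that ``interior of the convex hull'' is itself an affine invariant notion, which follows from Proposition~\ref{prop:properties}(\ref{polygons}) and (\ref{lines}): a point is interior to $\mathrm{CH}(S)$ iff its image is interior to $\mathrm{CH}(S')$).

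The core of the argument is an induction on the insertion step $k$. The inductive hypothesis is that after inserting $p_0,\dots,p_{k-1}$ into the triangulation $T_k$ of the partial point set, the triangulation $T'_k$ obtained by running the algorithm on $S'$ satisfies $T'_k = \alpha(T_k)$, meaning every triangle $\triangle(abc)\in T_k$ corresponds to $\triangle(\alpha(a)\alpha(b)\alpha(c))\in T'_k$ and vice versa. The base case $k=0$ is exactly the affine invariance of the convex hull triangulation---here I would invoke the already-proved affine invariant Graham triangulation (or simply that $\mathrm{CH}$ and its initial fan/triangulation are affine invariant). For the inductive step, when $p_k$ is inserted, it lies in a unique face $\triangle(abc)$ of $T_k$; since $\alpha$ preserves lines and hence point-in-triangle containment, $\alpha(p_k)=p'_k$ lies in the face $\triangle(\alpha(a)\alpha(b)\alpha(c))$ of $T'_k=\alpha(T_k)$, which is the face the algorithm on $S'$ identifies. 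The three new edges $p_k a$, $p_k b$, $p_k c$ are then mapped by $\alpha$ to $p'_k\alpha(a)$, $p'_k\alpha(b)$, $p'_k\alpha(c)$ (line segments map to line segments), so $T'_{k+1}=\alpha(T_{k+1})$, completing the induction.

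After the induction reaches $k=m$ we conclude that the final triangulation $T(S')=\alpha(T(S))$, i.e., $\triangle(pqr)\in T(S)$ iff $\triangle(\alpha(p)\alpha(q)\alpha(r))\in T(S')$, which is precisely affine invariance of this triangulation method. I expect the main obstacle to be purely a matter of careful bookkeeping rather than any deep difficulty: one must verify that the notion of ``the face containing $v$'' is well-defined and affine invariant (needing general position, in particular that no interior point lies on an edge of the current triangulation, which follows from general position of $S$ and the fact that all edges of $T_k$ lie on lines through pairs of points of $S$), and one must make sure Theorem~\ref{thm:graham} is legitimately applicable to the sub-point-set of interior points---its hypotheses only require general position and non-collinear affine invariant primitives, both of which transfer. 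Everything else is a routine consequence of Proposition~\ref{prop:properties}. Hence the corollary follows.
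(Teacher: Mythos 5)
Your proposal is correct and follows essentially the same route as the paper's proof: apply Theorem~\ref{thm:graham} to the interior points $S\setminus CH(S)$ to get an affine invariant insertion order, then induct on the insertion step using Proposition~\ref{prop:properties}(\ref{polygons}) to show that the face containing each newly inserted point, and hence the new edges, are preserved by $\alpha$. The only cosmetic difference is that you pre-triangulate the convex hull for the base case, whereas the paper's insertion method connects the first interior point to the (possibly non-triangular) face of the hull itself; this does not affect the argument.
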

  \begin{proof}
  Let $S$ be a point set in general position, and let $\alpha(S)=S'$ be an affine transformation of $S$. By Proposition~\ref{prop:properties}(\ref{polygons}), the convex hull of $S$, denoted $CH(S)$, is mapped to the convex hull of $S'$, denoted $CH(S')$. We consider the point sets $S\setminus CH(S)$ and $S'\setminus CH(S')$, both of size $1\leq m\leq n-3$.  Let $f_1(S\setminus CH(S))=\mu,f_2(S\setminus CH(S))=\delta$ and $f_3(S\setminus CH(S))=\wp$ be three affine invariant functions such that $\mu, \delta, \wp$ are three non-collinear points and $\wp \in S\setminus CH(S)$. Let \textsc{RadiallySort($S\setminus CH(S), \mu, \delta, \wp$)}$=\{p_0=\wp, p_1, \ldots, p_{m-1}\}$ and \textsc{RadiallySort($S'\setminus CH(S'), f_1(S'\setminus CH(S'))=\alpha(\mu), f_2(S'\setminus CH(S'))=\alpha(\delta), f_3(S'\setminus CH(S'))=\alpha(\wp)$)} $=\{p'_0=\alpha(\wp), p'_1, \ldots, p'_{m-1}\}$. Let $T(S)$ and $T(S')$ be the resulting triangulations using the insertion method with the radially sorted $S\setminus CH(S)$ and $S'\setminus CH(S')$, respectively. From Theorem~\ref{thm:graham} we have that $p'_i=\alpha(p_i)$ for all $i \in \{0, \ldots, m-1\}$. Thus, we insert the points in an affine invariant order. Assume that we have inserted $i$ points with $i \in \{0, \ldots m-1\}$ and that the resulting triangulations $T_i$ and $T'_i$ of $S\setminus\{p_i, \ldots p_{m-1}\}$ and $S'\setminus\{p'_i, \ldots, p'_{m-1}\}$, respectively, can be mapped into each other by $\alpha$. Consider the polygon $P_i$ of $T_i$ that contains $p_i$ in its interior. From Proposition~\ref{prop:properties}(\ref{polygons}) it follows that $p'_i$ is in the interior of $\alpha(P_i) \in T'_i$. Thus,  when inserting $p_i$, each new edge $p_iv$ with $v \in P_i$, corresponds to the new edge $p'_i\alpha(v)$ when inserting $p'_i$ in $T'_i$. Consequently, $T(S)$ is isomorphic to $T(\alpha(S))$ with respect to $\alpha$. 
\end{proof}

   These triangulations (Graham triangulation and the triangulation by insertion method) are Hamiltonian, i.e., their duals~\footnote{The \emph{dual} of a graph $G$ has as vertices the faces of $G$ and an edge between two vertices if and only if the two faces share an edge in $G$.} 
contain a Hamiltonian path, and are of interest for fast rendering in computer graphics~\cite{arkin1996hamiltonian, bose1997characterizing, monroy2005graham}.

\begin{corollary}
There exist affine invariant methods that compute triangulations whose duals are Hamiltonian in $O(n\log n)$ time for any set of $n$ points in the plane in general position.
\end{corollary}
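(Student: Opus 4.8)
The plan is to assemble this statement as a direct corollary of the two affine invariant triangulation constructions established immediately above, namely the affine invariant Graham triangulation and the affine invariant insertion method, together with the running time bounds already discussed in the excerpt. The two preceding corollaries give, for a point set $S$ in general position: (i) an affine invariant Graham triangulation computable in $O(n)$ time \emph{after} the radial sort, and (ii) an affine invariant insertion-method triangulation, valid whenever $S$ has at least one interior point, again driven by \textsc{RadiallySort}. The remark at the end of Section~\ref{subsec:hamiltonian} records that both of these triangulations are Hamiltonian, i.e.\ their duals contain a Hamiltonian path, citing \cite{arkin1996hamiltonian, bose1997characterizing, monroy2005graham}. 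So the only genuinely new content in the statement is the $O(n\log n)$ bound and the handling of the degenerate case where $S$ is in convex position.

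First I would fix $S$ in general position with $|S| = n$, and split into two cases. If $S$ has at least one point interior to $CH(S)$, I invoke the affine invariant insertion-method corollary: it produces a Hamiltonian triangulation in an affine invariant way. If $S$ is in convex position (no interior point), I instead invoke the affine invariant Graham triangulation corollary, which applies to any point set in general position and also yields a Hamiltonian triangulation. In both cases, the output is isomorphic under $\alpha$ to the output on $\alpha(S)$ for any invertible affine transformation $\alpha$, by Theorems~\ref{thm:graham} (via the respective corollaries). This disposes of correctness and affine invariance with essentially no new argument.

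For the running time, I would argue as follows. The dominant cost is computing the affine invariant primitives $f_1, f_2, f_3$ and then the radial order. By Observation~\ref{obs:otherpoints}, the primitives can be obtained from $CH(S)$ and $\mu$: computing $\mu$ is $O(n)$, computing $CH(S)$ is $O(n\log n)$, and scanning the hull edges for the two largest triangle areas is linear in $|CH(S)| = O(n)$; alternatively, by Observation~\ref{obs:closest}, the primitives come from the two or three $A_S$-closest points to $\mu$, which after computing $\Sigma$ (hence the $A_S$-norm) in $O(n)$ time costs $O(n)$ for a selection. Then \textsc{RadiallySort} is a comparison sort around $\wp$, taking $O(n\log n)$. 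Given the radially sorted point set, the Graham triangulation is built in $O(n)$ by \cite{toussaint1982convex}, and the insertion-method triangulation is likewise $O(n)$ overall along the sorted order (each inserted point subdivides one triangular face, adding three edges, and the face containing it is located in amortized constant work when points arrive in radial order; this is the standard analysis underlying the $O(n)$ claim in the Graham-triangulation corollary). Summing, the total is $O(n\log n)$ in either case.

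The main obstacle, such as it is, is not mathematical but expository: one must be careful that the insertion-method corollary assumes an interior point, so the convex-position case really does need the Graham route, and one must confirm that the $O(n)$ ``after sorting'' claims for both constructions combine with the $O(n\log n)$ sort (plus $O(n\log n)$ convex hull / $O(n)$ selection for the primitives) to give the stated bound without any hidden logarithmic factor from face location during insertion. Once that bookkeeping is checked, the proof is a two-line case split.

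\begin{proof}
Let $S$ be a set of $n$ points in general position. If $CH(S)$ has at least one point of $S$ in its interior, apply the affine invariant insertion method of Section~\ref{subsec:hamiltonian}; otherwise $S$ is in convex position and we apply the affine invariant Graham triangulation of Section~\ref{sec:graham}. In both cases the resulting triangulation is Hamiltonian \cite{arkin1996hamiltonian, bose1997characterizing, monroy2005graham} and, by the corresponding corollaries (which rely on Theorem~\ref{thm:graham}), isomorphic under $\alpha$ to the triangulation computed on $\alpha(S)$, for every invertible affine transformation $\alpha$.

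For the running time, computing the mean $\mu$ and the covariance matrix $\Sigma$ takes $O(n)$ time, and the affine invariant primitives $f_1(S), f_2(S), f_3(S)$ are then obtained either from $CH(S)$ (Observation~\ref{obs:otherpoints}), in $O(n\log n)$ time, or from the $A_S$-closest points to $\mu$ (Observation~\ref{obs:closest}), in $O(n)$ time by selection. Running \textsc{RadiallySort} costs $O(n\log n)$ time, and by \cite{toussaint1982convex} (respectively by the face-location analysis of the insertion method, in which inserting the points in radial order lets each point be placed in amortized $O(1)$ time) the triangulation is then completed in $O(n)$ time. Hence the total running time is $O(n\log n)$.
\end{proof}
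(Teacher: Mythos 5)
Your proof is correct and follows essentially the same route as the paper, which states this corollary without an explicit proof as an immediate consequence of the two preceding corollaries (affine invariant Graham triangulation and insertion method), the remark that both triangulations are Hamiltonian, and the $O(n\log n)$ cost of computing the primitives and the radial sort. Your case split on whether $S$ has an interior point is harmless but not needed, since the Graham triangulation corollary already applies to every point set in general position, including those in convex position.
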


\subsubsection{An affine invariant quadrangulation}\label{subsec:quadrangulations}

A \emph{quadrangulation} of a point set $S$ in the plane is a geometric graph such that its vertices are the points of $S$, its edges are line segments joining vertices and all of its faces (except possibly the exterior face) are quadrilaterals. In this section we present an affine invariant algorithm that computes a quadrangulation of $S$ when $S$ has even number of points in its convex hull.

When $S$ is in general position and $S$ has an even number of points in its convex hull, Bose and Toussaint~\cite{bose1997characterizing} provide a $O(n)$ time algorithm after sorting that quadrangulates $S$ using an insertion method similar to the one discussed in Section~\ref{subsec:hamiltonian}, described as follows.  Let $CH(S)=\{p_0, \ldots p_{m-1}\}$, ordered in clockwise order, and let $S\setminus CH(S) = \{p_{m}, \ldots, p_{n-1}\}$ denote the points of $S$ in the interior of $CH(S)$. First, add edges $p_0p_k$ where $ 3 \leq k\leq m-2$ and $k\equiv 1 \mod 2$. This produces a quadrangulation of $CH(S)$. Thus, if $CH(S)=S$, the algorithm finishes. Otherwise, we insert the points in $S\setminus CH(S)$ in order in the following way. At iteration $i+1$, with $m-1<i\leq n-1$, insert $p_i$, and let $D=\square(p_jp_kp_tp_r)$ be the quadrangle that contains $p_i$ such that $0\leq j<k,t,r$. If $D$ is convex, then add the edges $p_ip_j$ and $p_ip_t$. If  $D$ is not convex, then add edges $p_iv$ and $p_iv'$ where $v$ is a the unique reflex vertex in $D$ and $v'$ is the vertex in $D$ that is not adjacent to $v$. 

Note that if the points of $S$ are given in an affine invariant order such that the first $|CH(S)|$ points correspond to the ones of $CH(S)$, then this is an affine invariant insertion method that produces quadrangulation. Using Theorem~\ref{thm:graham} we get the following corollary.

\begin{corollary}\label{cor:quadrangulation}
There exists and affine invariant algorithm that computes a quadrangulation in $O(n)$ time after sorting for any point set in general position and with even number of points in its convex hull.
\end{corollary}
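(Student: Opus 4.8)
The plan is to combine the affine invariant sorting guarantee of Theorem~\ref{thm:graham} with the correctness of the Bose--Toussaint quadrangulation algorithm~\cite{bose1997characterizing}, following the same template used in the proof of the insertion-method corollary in Section~\ref{subsec:hamiltonian}. First I would fix a point set $S$ in general position with an even number $m$ of convex hull vertices, and an invertible affine transformation $\alpha$ with $S'=\alpha(S)$. By Proposition~\ref{prop:properties}(\ref{polygons}), $CH(S)$ is mapped to $CH(S')$, so $|CH(S')| = m$ is also even and the algorithm is applicable to $S'$. The key observation is that the algorithm needs the points of $S$ presented in a specific order: the $m$ convex hull vertices first (in the cyclic order along $CH(S)$, starting from a well-defined vertex $p_0$), followed by the interior points $S\setminus CH(S)$ in some order. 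To make the whole thing affine invariant I would use \textsc{RadiallySort} (Theorem~\ref{thm:graham}) on the interior point set $S\setminus CH(S)$ — exactly as in Section~\ref{subsec:hamiltonian} — to order the interior points, and I would separately need an affine invariant way to pick the starting hull vertex $p_0$ and the orientation (clockwise vs.\ counter-clockwise) along the hull; this can be obtained from the affine invariant functions $f_1,f_2,f_3$ of Theorem~\ref{thm:affine_functions} applied to $S$ (say, take $p_0$ to be the hull vertex hit first when sweeping the hull from $\wp$, and orient the hull so that $\delta$ determines clockwise vs.\ counter-clockwise, using Observation~\ref{obs:direct}).

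Next I would verify inductively that the two runs of the algorithm produce isomorphic outputs. The base step is the quadrangulation of the hull itself: the algorithm adds edges $p_0 p_k$ for odd $k$ with $3\le k \le m-2$. Since $p'_i = \alpha(p_i)$ for the hull vertices in their affine invariant cyclic order, and affine maps send segments to segments (Proposition~\ref{prop:properties}(\ref{lines})) and preserve simplicity (Proposition~\ref{prop:properties}(\ref{polygons})), the diagonals $\alpha(p_0)\alpha(p_k)$ are exactly the diagonals added in the run on $S'$, so the hull quadrangulations correspond under $\alpha$. For the inductive step, suppose the first $i$ interior points have been inserted and the partial quadrangulations $T_i$, $T'_i$ correspond under $\alpha$. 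The next interior point $p_i$ lies in some face $D = \square(p_j p_k p_t p_r)$ of $T_i$; by Proposition~\ref{prop:properties}(\ref{polygons}), $\alpha(p_i)$ lies in $\alpha(D)$, which is the corresponding face of $T'_i$. The algorithm's choice of which two edges to add depends only on whether $D$ is convex and, if not, which vertex is reflex — and convexity of a polygon and the reflex/convex status of a vertex are preserved by affine maps (Proposition~\ref{prop:properties}(\ref{polygons}), together with the sign-of-orientation argument of Observation~\ref{obs:direct}). So the new edges incident to $p_i$ map to the new edges incident to $\alpha(p_i)$, closing the induction.

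Finally I would note the running time: \textsc{RadiallySort} together with computing $CH(S)$ and the affine invariant primitives costs $O(n\log n)$, and after this sorting the Bose--Toussaint insertion algorithm runs in $O(n)$ time, giving the claimed bound.

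The main obstacle I anticipate is not any single hard lemma but rather pinning down the affine invariant ordering of the \emph{hull} vertices with full rigor: \textsc{RadiallySort} as stated is designed for ordering a point set radially around one of its own points, whereas here the convex hull boundary needs to be traversed in cyclic order with an affine invariant choice of starting vertex \emph{and} orientation. One must check that the choice of $p_0$ (via $f_3$ or via a sweep related to $\ray{\wp\mu}$) and the clockwise/counter-clockwise convention (via the side of $\ray{\wp\mu}$ on which $\delta$ lies, invoking Observation~\ref{obs:direct} to handle reflections) together yield $p'_i = \alpha(p_i)$ for the hull vertices. A secondary subtlety is the parity condition ``odd $k$'': since $\alpha$ preserves the cyclic order of the hull up to a possible reversal, one should confirm that reversing the traversal direction still yields a valid quadrangulation of the hull with the edges added from $\alpha(p_0)$ — this is immediate from the symmetry of the construction once the orientation is chosen consistently on both sides, but it deserves an explicit sentence.
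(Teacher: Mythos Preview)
Your approach is essentially the same as the paper's: both run the Bose--Toussaint insertion algorithm after ordering the hull vertices and the interior vertices separately in an affine invariant way, and both verify invariance of the insertion step by the same convexity/reflex-preservation argument via Proposition~\ref{prop:properties}(\ref{polygons}).

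The one place you diverge is precisely the obstacle you flag. The paper resolves it more cleanly than your workaround: it applies the affine invariant functions $f_1,f_2,f_3$ \emph{to the point set $CH(S)$ itself} to obtain primitives $\mu^{ch},\delta^{ch},\wp^{ch}$ with $\wp^{ch}\in CH(S)$, and then runs \textsc{RadiallySort}$(CH(S),\mu^{ch},\delta^{ch},\wp^{ch})$. Because the hull vertices are in convex position, sorting them radially around one of them (namely $\wp^{ch}$) automatically returns the cyclic boundary order of $CH(S)$, so $p_ip_{i+1}$ is an edge of the hull for every $i$; Theorem~\ref{thm:graham} then gives $p'_i=\alpha(p_i)$ directly, with no separate argument needed for starting vertex, orientation, or the parity issue you mention. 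Your alternative (primitives from $S$, then an ad hoc sweep to pick $p_0$ on the hull) would also work, but requires exactly the extra checks you anticipate.
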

\begin{proof}
  Let $S$ be a point set in general position with $m$ even points in its convex hull $CH(S)$, and let $\alpha(S)=S'$ be an affine transformation of $S$. By Proposition~\ref{prop:properties}(\ref{polygons}), the convex hull of $S$, denoted $CH(S)$, is mapped to the convex hull of $S'$, denoted $CH(S')$. Thus, $|CH(S')|=m$. 
  
Let $f_1(CH(S))=\mu^{ch}, f_2(CH(S))=\delta^{ch}$ and $f_3(CH(S))=\wp^{ch}$ be three affine invariant functions such that $\mu^{ch}, \delta^{ch}, \wp^{ch}$ are three non-collinear points and $\wp^{ch} \in CH(S)$. Let $f_1(S\setminus CH(S))=\mu,f_2(S\setminus CH(S))=\delta$ and $f_3(S\setminus CH(S))=\wp$ be three affine invariant functions such that $\mu, \delta, \wp$ are three non-collinear points and $\wp \in S\setminus CH(S)$. We sort $S:= \{p_{0}, \ldots p_{n-1}\}$ such that \textsc{RadiallySort($CH(S), \mu^{ch}, \delta^{ch}, \wp^{ch}$)}$=\{p_0=\wp^{ch}, p_1, \ldots, p_{m-1}\}$ and \textsc{RadiallySort($S\setminus CH(S), \mu, \delta, \wp$)}$=\{p_{m}=\wp, p_{m+1}, \ldots, p_{n-1}\}$. Consider the sorted 
  $S':=\{p'_0, \ldots p'_{n-1}\}$ such that 
\textsc{RadiallySort($CH(S'), f_1(CH(S'))=\alpha(\mu^{ch}), f_2(CH(S'))=\alpha(\delta^{ch}),$ $f_3(CH(S'))=\alpha(\wp^{ch})$)} $=\{p'_0=\alpha(\wp^{ch}), p'_1, \ldots, p'_{m-1}\}$ and \textsc{RadiallySort($S'\setminus CH(S'),$ $f_1(S'\setminus CH(S'))=\alpha(\mu), f_2(S'\setminus CH(S'))=\alpha(\delta), f_3(S'\setminus CH(S'))= \alpha(\wp)$)} $=\{p'_m=\alpha(\wp), p'_{m+1},$ $\ldots, p'_{n-1}\}$. From Theorem~\ref{thm:graham} it follows that $p'_i=\alpha(p_i)$ for all $i \in \{0, \ldots, n\}$. Since the points of $CH(S)$ and $CH(S')$ are ordered radially around the initial point $p_0$ and $p'_0$, respectively, it follows that for all $i\in \{0, \ldots m-1\}$, $p_ip_{i+1}$ and $p'_ip'_{i+1}$ is an edge in $CH(S)$ and $CH(S')$, respectively. Thus, the quadrangulations $Q(CH(S))$ and $Q(CH(S'))$ of the sorted sets $CH(S)$ and $CH(S')$ are isomorphic with respect to $\alpha$, since $p'_0p'_k=\alpha(p_0)\alpha(p_k)$ for all $ 3 \leq k\leq m-2$ and $k\equiv1\mod 2$. 

Note that if $S$ does not have points in the interior of $CH(S)$, then the statement follows. Hence, assume that $S$ contains at least one point in the interior of its convex hull. Consider the quadrangulations $Q(S)$ and $Q(S')$ of the sorted $S$ and $S'$, respectively, using the insertion method. It remains to show that $\alpha(Q(S))=Q(\alpha(S))$. Assume that the first $m<i$ points of $S$ have been inserted and that the quadrangulations $Q_i=Q(S\setminus\{p_i, \ldots, p_{n-1}\})$ and $Q'_i= Q(S'\setminus\{p'_i, \ldots, p'_{n-1}\})$ can be transformed into each other by $\alpha$. Let $D=\square(p_jp_kp_tp_r) \in Q_i$ be the quadrangle that contains $p_i$ where $j < k,t,r$. By Proposition~\ref{prop:properties}(\ref{polygons}) it follows that $p'_i=\alpha(p_i)$ is in the interior of $D'=\square(p'_jp'_kp'_tp'_r)$. Moreover, if $D$ is convex, then $D'$ is also convex. Thus, if $D$ is convex, then the new edges $p_ip_j$ and $p_ip_t$ added to $Q_i$ correspond to the edges $p'_ip'_j$ and $p'_ip'_t$ added to $Q'_i$. Finally, consider the case when $D$ is not convex. Then, there is exactly one reflex vertex in $D$, say $p_t$. Thus, $p_t \in \triangle(p_jp_kp_r)$. Proposition~\ref{prop:properties}(\ref{polygons}) implies that $\alpha(p_t)=p'_t \in \triangle(p'_jp'_kp'_r)$.  Hence, $p'_t$ is the only reflex vertex of $D'$. Therefore, the new edges $p_ip_t$ and $p_ip_j$ added to $Q_i$ correspond to the new edges $p'_ip'_t$ and $p'_ip'_j$ added to $Q'_i$. 
Consequently, $Q(S)$ is isomorphic to $Q(\alpha(S))$ with respect to $\alpha$. 

The runtime follows from Bose and Toussaint~\cite{bose1997characterizing}.
\end{proof}

%
\subsection{An affine invariant triangulation of a polygon by ear clipping}\label{sec:traversing}

In this section we provide an affine invariant method for triangulating any simple polygon with its vertex set in general position. In particular, in order to provide such method, we first define a traversal of a polygon that is affine invariant. 

An \emph{ear} of a polygon is a triangle formed by three consecutive vertices $p_1,p_2$ and $p_3$  such that the line segment $p_1p_3$ is a \emph{diagonal}\footnote{A \emph{diagonal} of a polygon is a line segment between two non-consecutive vertices that is totally contained inside the polygon.} of the polygon.  
It is a well-known fact that every simple polygon contains two ears (see Meisters~\cite{meisters1975polygons}). 
By recursively locating and chopping an ear, one can triangulate any simple polygon. This gives an $O(n^2)$ algorithm known as \emph{ear clipping} based on ElGindy et al.~\cite{elgindy1989slicing}. 

It follows from Properties 1 and 3 of Proposition~\ref{prop:properties} that the diagonals of a simple polygon $P$ are preserved under affine transformations. Thus, the ears of a  simple polygon are also preserved. Hence, if at every step the ear clipping procedure locates an ear by traversing the polygon in an affine invariant order, then such a procedure is affine invariant and computes a triangulation of $P$. The traversal of the polygon in an affine invariant order depends only on finding an affine invariant starting point, and on deciding correctly whether to traverse it clockwise or counter-clockwise. 
The latter depends only on whether the affine transformation contains an odd number of reflections. 

Let $P$ be a simple polygon with vertex sequence $S:=\{v_1, \ldots, v_n\}$, such that $S$ is in general position. Let $\mu, \delta$ and $\wp$ be three non-collinear points such that $\wp$ is in $S$. Consider the following traversal of $P$. 

\textsc{Traversal}($P, \mu, \delta, \wp$):  
If $\delta$ is to the left of $\overrightarrow{\mu \wp}$, then order $S$ by traversing $P$ from $\wp$ in counter-clockwise order.
Otherwise, order $S$ by traversing $P$ from $\wp$ in clockwise order. See Figure~\ref{fig:ordering-poly}.

Using arguments similar to the ones used to prove Theorem~\ref{thm:graham}, we show that \textsc{Traversal} is affine invariant.

\begin{theorem}\label{thm:trav}
Let $P$ be a simple polygon with vertex sequence $S:=\{v_1, \ldots,$ $v_n\}$, such that $S$ is in general position. Let $f_1(S)=\mu,f_2(S)=\delta$ and $f_3(S)=\wp$ be three affine invariant functions such that $\mu, \delta, \wp$ are three non-collinear points and $\wp \in S$. Then,
\textsc{Traversal($P,\mu,\delta,\wp$)} is an affine invariant traversal.
\end{theorem}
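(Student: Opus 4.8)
The plan is to mimic the structure of the proof of Theorem~\ref{thm:graham}, reducing the correctness of \textsc{Traversal} to two facts: that the three primitive points are mapped correctly by $\alpha$, and that the orientation (clockwise versus counter-clockwise) chosen for $\alpha(P)$ is the ``image'' of the orientation chosen for $P$ under $\alpha$. First I would fix an invertible affine transformation $\alpha(x)=Mx+b$ and set $\mu'=f_1(\alpha(S)), \delta'=f_2(\alpha(S)), \wp'=f_3(\alpha(S))$; since $f_1,f_2,f_3$ are affine invariant we immediately get $\alpha(\mu)=\mu'$, $\alpha(\delta)=\delta'$, $\alpha(\wp)=\wp'$, and in particular $\wp'$ is the image of the starting vertex, so the two traversals start at corresponding vertices. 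Let the output of \textsc{Traversal}$(P,\mu,\delta,\wp)$ be $q_0=\wp, q_1,\ldots,q_{n-1}$ and the output of \textsc{Traversal}$(\alpha(P),\mu',\delta',\wp')$ be $q'_0=\wp', q'_1,\ldots,q'_{n-1}$; the goal is $q'_i=\alpha(q_i)$ for all $i$.

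The key observation is that, unlike the point-set radial sort, a polygon traversal from a fixed vertex has only two possible outcomes once the starting vertex is fixed: the cyclic vertex sequence of $P$ read forwards, or read backwards. By Proposition~\ref{prop:properties}(\ref{polygons}) the affine image of the simple polygon $P$ is the simple polygon $\alpha(P)$ whose cyclic vertex sequence is the $\alpha$-image of that of $P$; so $\alpha(q_0),\alpha(q_1),\ldots,\alpha(q_{n-1})$ is either exactly the forward traversal of $\alpha(P)$ from $\wp'$ or exactly its reverse. Hence it suffices to show that \textsc{Traversal} makes the matching choice on both sides. Here is where the sign of $\det(M)$ enters, via Observation~\ref{obs:direct}: the test ``$\delta$ is to the left of $\overrightarrow{\mu\wp}$'' flips iff $\det(M)<0$, i.e. iff $\alpha$ reverses orientation. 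I would argue by the two cases. If $\det(M)>0$, then $\delta'$ is on the same side of $\overrightarrow{\mu'\wp'}$ as $\delta$ is of $\overrightarrow{\mu\wp}$, and also $\alpha$ preserves the cyclic orientation of the boundary of $P$; so ``counter-clockwise traversal of $P$'' maps to ``counter-clockwise traversal of $\alpha(P)$'', and \textsc{Traversal} picks it in both cases (and symmetrically for clockwise). If $\det(M)<0$, then $\delta'$ lies on the \emph{opposite} side of $\overrightarrow{\mu'\wp'}$, so \textsc{Traversal} switches its choice from counter-clockwise to clockwise; but $\alpha$ also reverses the boundary orientation, so ``counter-clockwise traversal of $P$'' is carried by $\alpha$ to the ``clockwise traversal of $\alpha(P)$'', which is precisely what the algorithm now selects. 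In both cases the two sequences agree vertex by vertex, giving $q'_i=\alpha(q_i)$.

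The main obstacle I expect is making precise the phrase ``$\alpha$ reverses the boundary orientation when $\det(M)<0$'': one must relate the intrinsic clockwise/counter-clockwise sense of the polygon boundary (a statement about signed area, equivalently about order type) to the sign of $\det(M)$. This is exactly the content of the order-type remark at the end of Section~\ref{sec:defs} together with Observation~\ref{obs:direct}: for any three consecutive vertices $v_{j-1},v_j,v_{j+1}$ of $P$, the signed area of $\triangle(v_{j-1}v_jv_{j+1})$ and of its image differ by the factor $\det(M)$, so ``turning left along the boundary'' is preserved iff $\det(M)>0$ and reversed iff $\det(M)<0$. Once this bookkeeping is done, the argument is just the two-case check above; there are no visibility or empty-region subtleties here because a polygon traversal, unlike the Graham triangulation, is determined entirely by the starting vertex and the orientation bit. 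I would close by noting that the resulting ear-clipping procedure is therefore affine invariant, since ears are preserved by $\alpha$ (Proposition~\ref{prop:properties}(\ref{lines}) and (\ref{polygons})) and they are located by scanning $P$ in the affine invariant order produced by \textsc{Traversal}.
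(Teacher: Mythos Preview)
Your proposal is correct and follows essentially the same approach as the paper: reduce to showing that the starting vertex is preserved and that the clockwise/counter-clockwise choice flips exactly when $\alpha$ reverses orientation, then conclude that the two traversals coincide. The only cosmetic difference is that the paper phrases orientation reversal as ``$\alpha$ contains an odd number of reflections'' whereas you phrase it as ``$\det(M)<0$''; your version is in fact the more precise of the two and ties in cleanly with Observation~\ref{obs:direct} and the order-type remark at the end of Section~\ref{sec:defs}.
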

\begin{proof}

Let  $\alpha(P)$ be an affine transformation of $P$. The vertices of $\alpha(P)$ are $\alpha(S)$.  
Consider \textsc{Traversal($\alpha(P),f_1(\alpha(S))=\mu',f_2(\alpha(S))=\delta',f_3(\alpha(S))=\wp'$)}. 
By definition, we have that $\alpha(\mu)=\mu', \alpha(\wp)=\wp', \alpha(\delta)=\delta'$.  

Let $p_0=\wp, p_1, \ldots, p_{n-1}$ be the resulting order of \textsc{Traversal}($P,\mu,\delta,\wp$) and let $p'_0=\wp', p'_1, \ldots, p'_{n-1}$ be the resulting order of \textsc{Traversal}($\alpha(P),\mu',\delta',\wp'$). Let us show that $\alpha(p_i)=p'_i$ for all $i \in \{0, 1, \ldots, n-1\}$.

Since the ordering is defined while traversing the simple polygon, it suffices to show that $\alpha(p_0)=p'_0$ and that the vertices are traversed in an affine invariant order.  By definition, $p'_0=\wp'=\alpha(\wp)=\alpha(p_0)$. In addition, the orientation of the points of $P$ changes in $\alpha(P)$ when there is an odd number of reflections in $\alpha$. Hence, if the orientation of the points of $P$ changes, then $\delta'$ is on the opposite direction of $\ray{\wp'\mu'}$ as $\delta$ from $\ray{\wp\mu}$. Therefore, if the orientation of the points in $P$ is changed in $\alpha(P)$, then \textsc{Traversal}($\alpha(P),\mu',\delta',\wp'$) traverses $P'$ in the opposite direction \textsc{Traversal}($P,\mu,\delta,\wp$) traverses $P$. Thus, the vertices are traversed in an affine invariant order if the orientation of the points in $P$ is opposite to the one in $\alpha(P)$. Otherwise, if the orientation does not change, then  \textsc{Traversal}($\alpha(P),\mu',\delta',\wp'$) traverses the points clockwise if  and only if \textsc{Traversal}($P,\mu,\delta,\wp$) traverses the points clockwise.
\end{proof}
The following result is an implication of Theorem~\ref{thm:trav}.
\begin{corollary}
There exists an affine invariant ear clipping algorithm that computes a triangulation in $O(n^2)$ time for any simple polygon $P$ with vertex set in general position.
\end{corollary}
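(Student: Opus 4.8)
The plan is to combine Theorem~\ref{thm:trav} with the standard correctness proof of ear-clipping. First I would fix a simple polygon $P$ with vertex set $S$ in general position and an invertible affine transformation $\alpha$, and let $P'=\alpha(P)$ with vertex set $S'=\alpha(S)$. By Proposition~\ref{prop:properties}(\ref{polygons}), $P'$ is again a simple polygon, and by Theorem~\ref{thm:affine_functions} there exist three affine invariant functions $f_1,f_2,f_3$ producing three non-collinear points $\mu,\delta,\wp$ with $\wp\in S$; these transform correctly, i.e.\ $f_i(S')=\alpha(f_i(S))$. The key point established in Section~\ref{sec:traversing} is that diagonals of a simple polygon are preserved under $\alpha$ (Proposition~\ref{prop:properties}(\ref{lines}) and~(\ref{polygons})), hence so are ears: a triple $p_{j-1}p_jp_{j+1}$ of consecutive vertices is an ear of $P$ if and only if $\alpha(p_{j-1})\alpha(p_j)\alpha(p_{j+1})$ is an ear of $P'$.

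Next I would describe the algorithm precisely. Run \textsc{Traversal}$(P,f_1(S),f_2(S),f_3(S))$ to obtain a cyclic vertex order $p_0=\wp,p_1,\ldots,p_{n-1}$; by Theorem~\ref{thm:trav} the corresponding call on $P'$ produces $p'_0,\ldots,p'_{n-1}$ with $p'_i=\alpha(p_i)$. The ear-clipping step then repeatedly scans the current (shrinking) vertex sequence in this fixed cyclic order, starting each scan from the vertex with smallest surviving index, locates the \emph{first} vertex that is the apex of an ear, outputs that ear's diagonal, and removes the apex; repeat until a triangle remains. Since every simple polygon with at least four vertices has at least two ears (Meisters~\cite{meisters1975polygons}), such a vertex always exists, so the procedure terminates after $n-3$ diagonals and yields a triangulation of $P$; this is the standard $O(n^2)$ ear-clipping bound of ElGindy et al.~\cite{elgindy1989slicing}, which I would cite for running time.

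The core of the argument is an induction on the number of clipped ears showing $\alpha$ maps the triangulation of $P$ to that of $P'$. The invariant is: after $t$ ears are clipped, the remaining polygon $P_t$ (a sub-polygon of $P$ on a subsequence of $S$) satisfies $\alpha(P_t)=P'_t$, the remaining vertices appear in the same relative cyclic order in both traversals (inherited from $p'_i=\alpha(p_i)$ and the fact that deletion preserves relative order), and the diagonals output so far are exactly the $\alpha$-images of each other. For the inductive step: the scan over $P_t$ visits the same vertices in the same order as the scan over $P'_t=\alpha(P_t)$; by the ear-preservation fact above, a visited vertex is an ear apex of $P_t$ iff its image is an ear apex of $P'_t$; hence the \emph{first} ear apex found in $P_t$ maps under $\alpha$ to the first ear apex found in $P'_t$, so the clipped ear and its diagonal correspond under $\alpha$, and $P_{t+1}$, obtained by deleting that apex, satisfies $\alpha(P_{t+1})=P'_{t+1}$. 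After $n-3$ steps both triangulations are complete and consist of $\alpha$-corresponding triangles, so they are isomorphic with respect to $\alpha$; this is exactly affine invariance of the algorithm.

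I expect the main obstacle to be bookkeeping rather than mathematics: one must be careful that the ``scan for the first ear'' is a well-defined deterministic rule that genuinely commutes with $\alpha$, which requires that the \emph{starting vertex} of each scan (smallest surviving index) and the \emph{direction} of the scan are themselves affine invariant — the former follows because indices are assigned by the affine invariant \textsc{Traversal}, and the latter because \textsc{Traversal} already fixes clockwise-versus-counterclockwise correctly in the presence of reflections (Theorem~\ref{thm:trav}). Once that determinism is nailed down, every remaining step reduces to Proposition~\ref{prop:properties} and the ear/diagonal preservation already noted, so the proof is short. A minor point to mention is that general position of $S$ guarantees no degeneracies (no three collinear vertices) so that the notion of ear and the apex-finding rule are unambiguous, and that this is preserved by $\alpha$ since affine maps preserve collinearity.
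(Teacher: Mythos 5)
Your proposal is correct and follows the same route the paper takes: it relies on Theorem~\ref{thm:trav} for the affine invariant traversal order and on the observation (via Proposition~\ref{prop:properties}, items~\ref{lines} and~\ref{polygons}) that diagonals, and hence ears, are preserved under invertible affine maps, then clips the first ear found in that order. The paper states the corollary as an immediate implication of Theorem~\ref{thm:trav} without writing out the induction on clipped ears; your version simply makes that bookkeeping explicit.
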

\begin{figure}
\centering
	\includegraphics[scale=1]{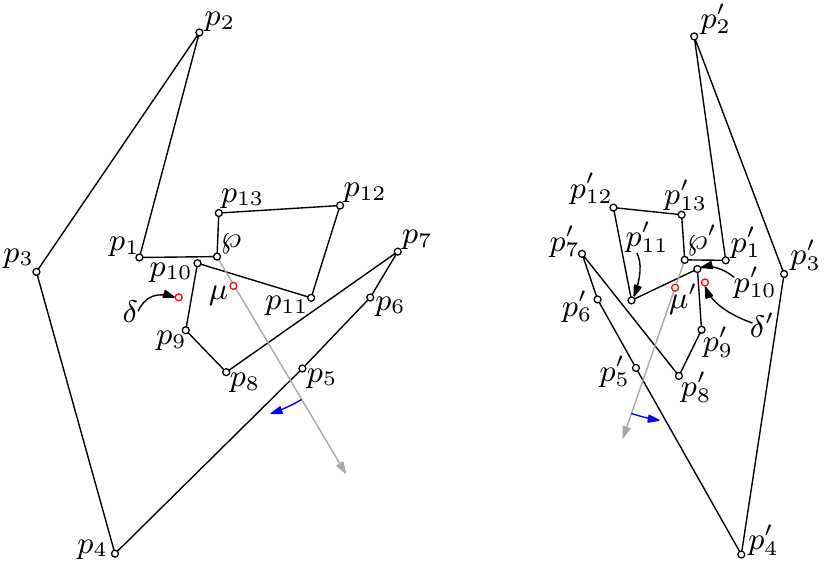}
	\caption{Each polygon can be transformed into each other by an affine transformation $\alpha$. Using affine invariant functions $f_1, f_2, f_3$ we obtain an affine invariant traversal, where $f_1(S)=\mu, f_2(S)=\delta, f_3(S)=\wp$ and $f_1(\alpha(S))=\mu', f_2(\alpha(S))=\delta', f_3(\alpha(S))=\wp'$.}
	\label{fig:ordering-poly}
\end{figure}


\subsection{An affine invariant triangulation of a polygon by sweep-line}\label{sucsec:sweeping}

In this section we give another affine invariant triangulation algorithm for any simple polygon. 

A simple polygon $P$ is \emph{monotone with respect to a line $\ell$} if for any line $\ell^{\perp}$ perpendicular to $\ell$, the intersection of $P$ with $\ell^{\perp}$ is connected. A line $\ell$ divides the plane into two half planes, and we say that two points \emph{lie on the same side of $\ell$} if they lie on the same half plane. Let $v$ be a vertex of $P$ and $\ell_{v}$ be the line containing $v$ that is parallel to $\ell$. 
We say that $v$ is an \emph{$\ell$-cusp} if $v$ is a reflex vertex of $P$ and its neighbors in $P$ lie on the same side of $\ell_v$. A characterization of a monotone polygon is stated in the following property.
\begin{property}[Garey et al.~\cite{garey1978triangulating}]
A polygon is $\ell$-monotone if and only if it does not contain an $\ell^{\perp}$-cusp.
\end{property}
An $\ell$-monotone polygon can be triangulated in $\Theta(n)$~\cite{garey1978triangulating} using Graham scan algorithm~\cite{toussaint1982convex} or by sweeping its vertices with line $\ell^{\perp}$~\cite{garey1978triangulating}.  In this section we use the latter. 

If $P$ is not $\ell$-monotone, then one can split $P$ into $\ell$-monotone subpolygons by adding diagonals in order to break all $\ell^{\perp}$-cusps. 
Then, each resulting $\ell$-monotone polygon can be triangulated independently. The way to add these diagonals is due to Lee et. al.~\cite{lee1977location} and also uses a sweep-line parallel to $\ell^{\perp}$.
For each vertex $v$ in $P$ take the line $\ell^{\perp}_v$ and partition $P$ into trapezoids defined by the intersection of the $\ell^{\perp}_v$ lines and the edges of $P$, this structure is called \emph{trapezoid decomposition} of $P$ and it can be computed in $O(n \log n)$ time using Garey et al.'s algorithm~\cite{garey1978triangulating} or in $O(n)$ time using Chazelle's algorithm~\cite{chazelle1991triangulating}.  Consider the vertex set $S$ of $P$ and two affine invariant functions $f_1, f_2$ such that $f_1(S)$ and $f_2(S)$ are points. Note that if we define $\ell^{\perp}$ as the line that passes through $f_1(S)$ and $f_2(S)$, then the trapezoid decomposition is affine invariant for any invertible affine transformation $\alpha$. Hence, we can adapt Garey et al.'s algorithm and Chazelle's algorithm in an affine invariant algorithm.

\begin{corollary}
There exists an affine invariant algorithm that computes a trapezoid decomposition in $O(n)$ time for any polygon $P$ with vertex set in general position. 
\end{corollary}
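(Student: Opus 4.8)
The plan is to reduce the affine invariance of the trapezoid decomposition to the affine invariance of its two defining ingredients: the sweep direction $\ell^{\perp}$ and the combinatorial incidences between the vertical lines $\ell^{\perp}_v$ and the edges of $P$. First I would invoke Theorem~\ref{thm:affine_functions} to obtain two affine invariant functions $f_1, f_2$ with $f_1(S) \neq f_2(S)$; this is where the hypothesis ``vertex set in general position'' is used, and it guarantees that for $\alpha(P)$ the same construction yields $f_1(\alpha(S)) = \alpha(f_1(S))$ and $f_2(\alpha(S)) = \alpha(f_2(S))$. Define $\ell^{\perp}$ as the line through $f_1(S)$ and $f_2(S)$; by Proposition~\ref{prop:properties}(\ref{lines}) we get $\alpha(\ell^{\perp})$ as the line through $f_1(\alpha(S))$ and $f_2(\alpha(S))$, i.e., the sweep direction chosen for $\alpha(P)$ is exactly the $\alpha$-image of the one chosen for $P$.

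Next I would argue that the trapezoid decomposition is determined purely by affine-invariant combinatorial data relative to $\ell^{\perp}$. For each vertex $v$ of $P$, the line $\ell^{\perp}_v$ parallel to $\ell^{\perp}$ through $v$ is mapped by $\alpha$ to the line through $\alpha(v)$ parallel to $\alpha(\ell^{\perp})$, again by Proposition~\ref{prop:properties}(\ref{lines}) and~(\ref{parallel}); so the family $\{\ell^{\perp}_v : v \in S\}$ transforms correctly. The trapezoids are carved out by the intersections of these lines with the edges of $P$: for a vertex $v$, one shoots the segment of $\ell^{\perp}_v$ upward and downward (in the orientation given by $\ell^{\perp}$) until it first meets an edge of $P$, and the ``first meeting'' is a statement about which side of $\ell^{\perp}_v$ a point lies on and about the linear order of intersection points along $\ell^{\perp}_v$. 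By Observation~\ref{obs:side} the sidedness of vertices with respect to any $\ell^{\perp}_v$ is preserved, and by Proposition~\ref{prop:properties}(\ref{lines}) collinear intersection points along $\ell^{\perp}_v$ map to collinear intersection points along $\ell^{\perp}_{\alpha(v)}$ in the same order (an affine map restricted to a line is an affine bijection of that line, hence preserves betweenness). Consequently the edge hit first by $\ell^{\perp}_v$ in each direction maps to the edge hit first by $\ell^{\perp}_{\alpha(v)}$, so the trapezoid incident to $v$ maps to the trapezoid incident to $\alpha(v)$, and the whole decomposition of $P$ is isomorphic, under $\alpha$, to that of $\alpha(P)$.

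Finally I would close by invoking the running time: with $\ell^{\perp}$ fixed as above, Garey et al.'s algorithm~\cite{garey1978triangulating} computes the decomposition in $O(n\log n)$ and Chazelle's algorithm~\cite{chazelle1991triangulating} in $O(n)$, and neither the choice of $f_1, f_2$ (done in linear time after the $A_S$-closest-point ordering of Observation~\ref{obs:closest}, or equivalently the convex-hull-based primitives of Observation~\ref{obs:otherpoints}) nor the rigid change of coordinate frame to make $\ell^{\perp}$ axis-parallel changes the asymptotic cost, giving the claimed $O(n)$ bound. The step I expect to be the main obstacle is the bookkeeping for the ``first edge hit'' claim: one must be slightly careful that the general-position assumption rules out degenerate incidences (a vertex of $P$ lying on $\ell^{\perp}_w$ for another vertex $w$, or an edge of $P$ parallel to $\ell^{\perp}$) so that the trapezoid decomposition is well-defined and combinatorially stable, and that these degeneracies are themselves preserved or avoided under $\alpha$ — everything else is a direct application of Proposition~\ref{prop:properties} and Observations~\ref{obs:direct} and~\ref{obs:side}.
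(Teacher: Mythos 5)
Your proposal is correct and follows essentially the same route as the paper: the paper also defines $\ell^{\perp}$ as the line through two affine invariant points $f_1(S)$ and $f_2(S)$, notes that the resulting trapezoid decomposition is then affine invariant, and appeals to Chazelle's algorithm for the $O(n)$ bound. Your write-up simply supplies the details (preservation of the parallel line family, sidedness, and ordering along lines) that the paper leaves implicit.
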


Once the trapezoid decomposition is constructed the polygon $P$ is partitioned in the following way. If a vertex $v$ is an $\ell^{\perp}$-cusp, add the diagonal of $P$ to the vertex $u$ that is not in the same side as the neighbors of $v$ with respect to $\ell^{\perp}_v$ and $u$ is the opposite vertex in a trapezoid containing $v$. 
See Figure~\ref{fig:monotone}. Hence, using these diagonals and then triangulating each $\ell$-monotone subpolygon of $P$, we obtain a triangulation of $P$. This is an $O(n \log n)$ algorithm based on Garey et al.~\cite{garey1978triangulating} and $O(n)$ algorithm based on Chazelle~\cite{chazelle1991triangulating}. 
  \begin{figure}
\centering
	\includegraphics[scale=1]{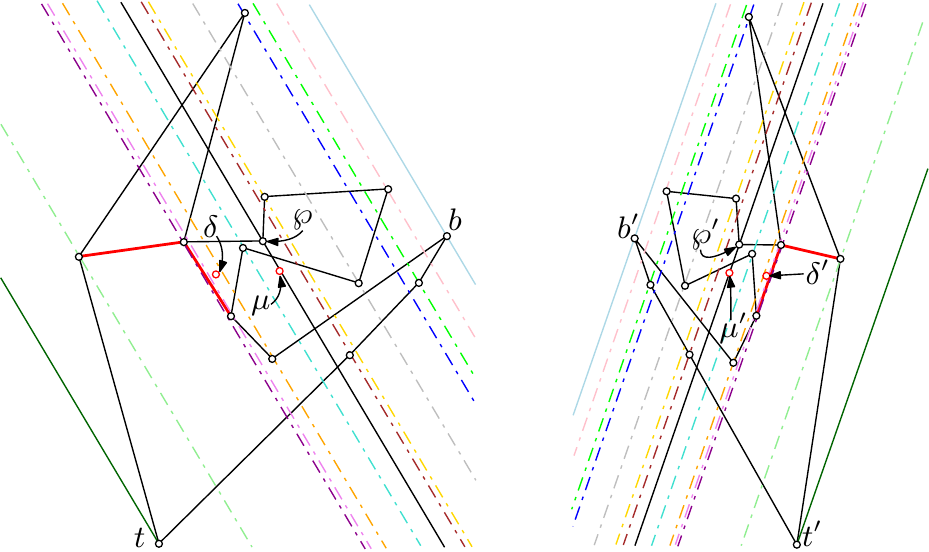}
	\caption{Each colored line is a line parallel to the black line. The red diagonals partition the simple polygon into monotone polygons. Each diagonal contains a cusp with respect to the black line. Each color corresponds to the same line in the transformation.} 
	\label{fig:monotone}
\end{figure}

Using Observation~\ref{obs:side}, we show that $\ell^{\perp}$-cusps are preserved under affine transformations.

\begin{lemma}\label{lem:cusps}
Let $\alpha(P)$ be an affine transformation of $P$. Each $\ell^{\perp}$-cusp of $P$ is mapped to an $\alpha(\ell^{\perp})$-cusp in $\alpha(P)$.
\end{lemma}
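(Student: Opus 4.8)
The plan is to unfold the definition of an $\ell^{\perp}$-cusp and transfer each of its three ingredients across the affine map $\alpha$, using the properties already established. Recall that a vertex $v$ of $P$ is an $\ell^{\perp}$-cusp if (i) $v$ is a reflex vertex of $P$, and (ii) the two neighbors of $v$ in $P$ lie on the same side of the line $\ell^{\perp}_v$ (the line through $v$ parallel to $\ell^{\perp}$). So I must show that $\alpha(v)$ is a reflex vertex of $\alpha(P)$ with respect to the line $\alpha(\ell^{\perp})$, and that the images of its neighbors lie on the same side of the line through $\alpha(v)$ parallel to $\alpha(\ell^{\perp})$.

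First I would handle the parallelism and the side condition. By Proposition~\ref{prop:properties}(\ref{lines}) the line $\ell^{\perp}$ is mapped to the line $\alpha(\ell^{\perp})$, and by Proposition~\ref{prop:properties}(\ref{parallel}) the line $\ell^{\perp}_v$, being parallel to $\ell^{\perp}$, is mapped to a line through $\alpha(v)$ parallel to $\alpha(\ell^{\perp})$; that image is exactly the line I called $(\alpha(\ell^{\perp}))_{\alpha(v)}$ in the triangulation-by-sweep-line setup. Next, since the two neighbors $u_1,u_2$ of $v$ lie on the same side of $\ell^{\perp}_v$, Observation~\ref{obs:side} gives that $\alpha(u_1)$ and $\alpha(u_2)$ lie on the same side of $\alpha(\ell^{\perp}_v)$, i.e., on the same side of the line through $\alpha(v)$ parallel to $\alpha(\ell^{\perp})$. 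The neighbors of $\alpha(v)$ in $\alpha(P)$ are precisely $\alpha(u_1)$ and $\alpha(u_2)$, because $\alpha$ maps the polygon $P$ to the polygon $\alpha(P)$ preserving the cyclic vertex adjacency (Proposition~\ref{prop:properties}(\ref{polygons})).

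The remaining point is that $\alpha(v)$ is reflex in $\alpha(P)$ whenever $v$ is reflex in $P$. A vertex being reflex is determined by the orientation of the triple (previous vertex, $v$, next vertex) relative to the traversal orientation of the polygon boundary. Since $\mathrm{sign}((u_1-v)\times(u_2-v))$ flips to $\mathrm{sign}(\det(M))\cdot\mathrm{sign}((u_1-v)\times(u_2-v))$ under $\alpha$ — this is exactly the cross-product identity used for Observation~\ref{obs:direct} — and since the traversal orientation of the whole boundary of $\alpha(P)$ also flips precisely when $\det(M)<0$, the interior angle at $\alpha(v)$ is reflex if and only if the interior angle at $v$ is reflex. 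So convexity/reflexity of a polygon vertex is an affine invariant, and in particular $\alpha(v)$ is reflex. I would phrase this concisely by noting that $\alpha$ maps $P$ to $P'=\alpha(P)$ preserving adjacencies and (up to global reversal) the traversal direction, so the "reflex" labels of vertices are preserved.

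Combining the three pieces: $\alpha(v)$ is reflex in $\alpha(P)$, and its two neighbors lie on the same side of the line through $\alpha(v)$ parallel to $\alpha(\ell^{\perp})$; hence $\alpha(v)$ is an $\alpha(\ell^{\perp})$-cusp of $\alpha(P)$, as claimed. The only mild subtlety — the "main obstacle," such as it is — is spelling out cleanly that reflexity is preserved: one must be careful that both the local cross-product sign and the global boundary orientation pick up the same factor $\mathrm{sign}(\det M)$, so their ratio, which is what decides reflexity, is unchanged. Everything else is a direct appeal to Proposition~\ref{prop:properties} and Observation~\ref{obs:side}.
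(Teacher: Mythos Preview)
Your proof is correct and follows essentially the same structure as the paper's: verify that the two defining conditions of a cusp (reflexity, and both neighbors on the same side of the parallel line) transfer under $\alpha$, using Observation~\ref{obs:side} for the latter. The one minor difference is in how reflexity is handled: the paper observes that $v$ reflex means the triangle $\triangle(v^-vv^+)$ is not contained in $P$, and then invokes Proposition~\ref{prop:properties}(\ref{polygons}) to conclude $\triangle(\alpha(v^-)\alpha(v)\alpha(v^+))$ is not in $\alpha(P)$, whereas you argue via the cross-product sign and the global boundary orientation both picking up the factor $\mathrm{sign}(\det M)$. Both arguments are valid; the paper's is slightly more geometric and avoids tracking the boundary orientation explicitly, while yours makes the invariance mechanism more transparent.
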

\begin{proof}
Let $v$ be an $\ell^{\perp}$-cusp of $P$. Let us show that $\alpha(v)$ is an $\alpha(\ell^{\perp})$-cusp in $\alpha(P)$. Let $v^{-}$ and $v^+$ be the neighbors of $v$ in $P$. If $v^-$ and $v^+$ are on the same side of $\ell^{\perp}_v$ then by Observation~\ref{obs:side} $\alpha(v^+)$ and $\alpha(v^-)$ lie on the same side of $\ell^{\perp}_v$. Moreover, the triangle $\triangle(v^-vv^+)$ is not in $P$. From Proposition~\ref{prop:properties}(3) follows that the triangle $\triangle(\alpha(v^-)\alpha(v)\alpha(v^+))$ is not in $\alpha(P)$. Hence, the interior angle of $\alpha(v)$ in $\alpha(P)$ is greater than $\pi$. So, $\alpha(v)$ is an $\alpha(\ell^{\perp})$-cusp in $\alpha(P)$. 
\end{proof}
Therefore, if $P$ is $\ell$-monotone and $\ell^{\perp}$ is perpendicular to $\ell$, then $\alpha(P)$ is $\ell^*$-monotone where $\ell^*$ is perpendicular to $\alpha(\ell^{\perp})$.  

Thus, given a simple polygon $P$ whose vertices are in general position and a fixed direction $\delta$, since the trapezoid decomposition of $P$ with respect to $\delta$ is unique, we have a $O(n)$ time polygon triangulation algorithm that is affine invariant. Using the trapezoid decomposition, the polygon can be decomposed into monotone pieces that are also affine invariant since they result from adding diagonals
adjacent to $\ell^{\perp}$-cups that appear on the boundary of the trapezoids. Monotone polygons can be triangulated in linear time given a sweep-line ordering of the vertices along the direction of monotonicity. As a consequence of Theorem~\ref{thm:sweeping} and that this ordering can be recovered in linear time, we conclude with the following.
 
\begin{corollary}
There exists an $O(n)$ affine invariant algorithm for triangulating a simple polygon $P$ with vertex set in general position
\end{corollary}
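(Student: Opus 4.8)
The plan is to assemble the $O(n)$ polygon-triangulation pipeline already sketched—trapezoid decomposition, splitting into $\ell$-monotone subpolygons, and linear-time triangulation of each monotone piece—and then certify that \emph{each} stage is affine invariant, so that their composition is. First I would fix the sweep direction once and for all. By Theorem~\ref{thm:affine_functions} there are affine invariant functions $f_1,f_2$ returning two distinct points determined by $S$, and I let $\ell^{\perp}$ be the line through $f_1(S)$ and $f_2(S)$; since $\alpha(f_i(S))=f_i(\alpha(S))$ and affine maps send lines to lines (Proposition~\ref{prop:properties}(\ref{lines})), the line chosen for $\alpha(P)$ is exactly $\alpha(\ell^{\perp})$. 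General position forces no two vertices of $P$ to share a line parallel to $\ell^{\perp}$, so the trapezoid decomposition with respect to $\ell^{\perp}$ is \emph{unique}; hence Chazelle's $O(n)$ algorithm~\cite{chazelle1991triangulating} outputs this canonical structure irrespective of any internal tie-breaking.

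Next I would check that the trapezoid decomposition is affine invariant. Each trapezoid is bounded by two vertex-lines $\ell^{\perp}_u,\ell^{\perp}_v$ together with the polygon edges they cut; by Proposition~\ref{prop:properties}(\ref{lines}) and (\ref{parallel}) the family $\{\ell^{\perp}_v\}_{v}$ is carried to the corresponding family for $\alpha(P)$, and by Observation~\ref{obs:side} the ``same side of $\ell^{\perp}_v$'' relation is preserved, so the ``which polygon edge is hit first'' relation defining each trapezoid survives $\alpha$. Thus the trapezoid containing a given vertex, and the opposite vertex in it, map correctly under $\alpha$. Invoking Lemma~\ref{lem:cusps}, the $\ell^{\perp}$-cusps of $P$ biject with the $\alpha(\ell^{\perp})$-cusps of $\alpha(P)$; adding, for each cusp $v$, the diagonal to the opposite vertex of a trapezoid containing $v$ on the side away from $v$'s neighbors therefore produces the same diagonal set up to $\alpha$. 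Hence the partition of $P$ into subpolygons is affine invariant, and by the characterization of Garey et al.\ each piece is $\ell$-monotone, with $\alpha$ carrying pieces to pieces.

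For the last stage I would triangulate each monotone subpolygon $P_j$ in $O(|P_j|)$ time by a sweep along $\ell^{\perp}$. This needs only the vertices of $P_j$ ordered along $\ell^{\perp}$, with ties and the upper/lower-chain split resolved consistently—precisely the order produced by \textsc{SweepLine} run on the vertices of $P_j$ with the globally fixed $\mu=f_1(S),\ \delta=f_2(S)$ and an affine-invariantly chosen starting vertex of $P_j$ (e.g.\ the unique extreme vertex of $P_j$ along $\ell^{\perp}$). By Theorem~\ref{thm:sweeping} this order is affine invariant, so the standard monotone-polygon triangulation built from it is too; it is also recoverable in linear time once the decomposition is known. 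Summing over pieces gives total time $O(n)$, and chaining the stage-by-stage isomorphisms exactly as in the proofs of Theorems~\ref{thm:graham} and~\ref{thm:sweeping} shows that each triangle $\triangle(pqr)$ output for $P$ corresponds to $\triangle(\alpha(p)\alpha(q)\alpha(r))$ output for $\alpha(P)$, i.e.\ affine invariance.

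The step I expect to be the main obstacle is the monotone stage: one must ensure the sweep-line triangulation of each $P_j$ depends only on the globally fixed direction and orientation and not on any choice local to $P_j$ that might differ between $P$ and $\alpha(P)$—in particular that a reflection in $\alpha$ flips the orientation, the starting-vertex choice, and the left/right tie-break \emph{consistently across all pieces at once}. This is the same subtlety handled for a single polygon in Theorem~\ref{thm:sweeping}, and it carries over here only because $\mu$ and $\delta$ are fixed globally. The remaining stages reduce cleanly to Observation~\ref{obs:side}, Lemma~\ref{lem:cusps}, and uniqueness of the trapezoid decomposition in general position.
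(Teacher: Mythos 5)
Your proposal is correct and follows essentially the same route as the paper: an affine invariant sweep direction from the affine invariant point functions, the uniqueness and invariance of the trapezoid decomposition, preservation of $\ell^{\perp}$-cusps via Lemma~\ref{lem:cusps} to make the monotone splitting invariant, and Theorem~\ref{thm:sweeping} for the per-piece sweep-line triangulation, all in $O(n)$ total time. Your explicit attention to keeping $\mu$ and $\delta$ fixed globally so that a reflection flips orientation and tie-breaks consistently across all monotone pieces is a point the paper leaves implicit, but it does not change the argument.
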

 
 \section{Conclusions}
 
 In this paper we revisited Nielson's affine invariant norm, whose unit disk represents how spread the point set is with respect to its mean.
 We also proposed affine invariant point sorting methods, which are necessary for other affine invariant geometric constructions.
Our methods heavily rely on being able to distinguish three points. To this end, we gave two methods for distinguishing such points in Section~\ref{sec:othergeom}. However, for this we had to require that the points be in a certain general position.
Otherwise, 
the point set becomes highly symmetric, which introduces a problem of indistinguishability. 
 In particular, an interesting open question is to what extent such restrictions can be removed, while still being able to distinguish rotations and reflections. 
Finally, the examples we provided of affine invariant algorithms that compute geometric object is by no means an exhaustive list, as such, we believe that finding affine invariant methods to construct other geometric objects is a promising direction for future research.

\bibliographystyle{plain}
\bibliography{extended_version.bbl}

\end{document}